\newcommand{\var}{\text{Var}}
\newcommand{\Skew}{\text{Skew}}
\newcommand{\Kurt}{\text{Kurt}}
\def\btheta{\boldsymbol{\theta}}
\def\bzero{\boldsymbol{0}}
\theoremstyle{definition}
\newtheorem{defi}{Definition}
\newtheorem{theo}{Theorem} 
\newtheorem{pro}{Proposition} 
\renewenvironment{proof}{\textit{Proof.}}{\qed}
\journal{arXiv}
\begin{document}

\begin{frontmatter}

\title{Asymmetric Laplace scale mixtures for the distribution of cryptocurrency returns}
\author[Antonio]{Antonio Punzo} 
\ead{antonio.punzo@unict.it}
\author[Luca]{Luca Bagnato\corref{cor}} 
\ead{luca.bagnato@unicatt.it}
\cortext[cor]{Corresponding author: Luca Bagnato -- Dipartimento di Scienze Economiche e Sociali, Universit\`{a} Cattolica del Sacro Cuore, Via Emilia Parmense 84, 29122, Piacenza, Italy.
Email: \texttt{luca.bagnato@unicatt.it}, 
Phone: +39-0523-599326, 
Fax: +39-0523-599437}
\address[Antonio]{Dipartimento di Economia e Impresa, Universit\`{a} di Catania, Italy}
\address[Luca]{Dipartimento di Scienze Economiche e Sociali, Universit\`{a} Cattolica del Sacro Cuore, Piacenza, Italy}

%
%
%

\begin{abstract}
Recent studies about cryptocurrency returns show that its distribution can be highly-peaked, skewed, and heavy-tailed, with a large excess kurtosis. 
To accommodate all these peculiarities, we propose the asymmetric Laplace scale mixture (ALSM) family of distributions.
Each member of the family is obtained by dividing the scale parameter of the conditional asymmetric Laplace (AL) distribution by a convenient mixing random variable taking values on all or part of the positive real line and whose distribution depends on a parameter vector $\btheta$ providing greater flexibility to the resulting ALSM.
Advantageously with respect to the AL distribution, the members of our family allow for a wider range of values for skewness and kurtosis.  
For illustrative purposes, we consider different mixing distributions; they give rise to ALSMs having a closed-form probability density function where the AL distribution is obtained as a special case under a convenient choice of $\btheta$.
We examine some properties of our ALSMs such as hierarchical and stochastic representations and moments of practical interest. 
We describe an EM algorithm to obtain maximum likelihood estimates of the parameters for all the considered ALSMs.
We fit these models to the returns of two cryptocurrencies, considering several classical distributions for comparison.
The analysis shows how our models represent a valid alternative to the considered competitors in terms of AIC, BIC, and likelihood-ratio tests.

\end{abstract}

\begin{keyword}
Cryptocurrencies\sep Econophysics\sep Asymmetric Laplace distribution\sep Scale mixture\sep Heavy-tailed distributions.
\end{keyword}

\end{frontmatter}


\section{Introduction}
\label{sec:Introduction}

In the recent years, cryptocurrencies have grown to be alternative investments for investors looking to hedge against stock market crashes and optimize their earnings.
As a matter of fact, many established cryptocurrencies are currently in the market, and over the years many of these assets have appreciated radically.
This justifies the growing number of studies about the distributional peculiarities of cryptocurrency returns.
However, from this literature, it is not yet possible to infer formal and rigorous ``stylized facts'' for the distribution of these returns.
Nevertheless, some common findings are already known.
Without going into details about the cryptocurrencies considered in each work we will cite, these findings show that cryptocurrency returns -- as usual for financial returns in general -- are clearly non-normally distributed (\citealp{Bariviera:2017}, \citealp{Zhang:2018}, and \citealp{Taka:PhyA:2018}); instead, their distribution can be highly peaked, skewed, and leptokurtic (\citealp{chu2015statistical}, \citealp{Oste:2017}, \citealp{Zhang:2018}, and \citealp{Szcz:Kara:Zare:Appl:2019}),
with high levels of excess kurtosis (always greater than 3 in the analyses by \citealp{Chan:2017}, \citealp{Zhang:2018}, \citealp{Bariviera:2017}, \citealp{Phillip:2018}, \citealp{Szcz:Kara:Zare:Appl:2019}, and \citealp{Punz:Bagn:PhyA:2021}).

The asymmetric Laplace (AL) distribution could be a good candidate to accommodate all these findings; unfortunately, its skewness and kurtosis can only assume values in the intervals $\left(-2,2\right)$ and $\left[3,6\right)$, respectively (refer to Section~\ref{subsec:Preliminaries: the asymmetric Laplace distribution}), and this limits its performance when fitted to the empirical distribution of cryptocurrency returns.
To overcome these issues, we introduce the family of asymmetric Laplace scale mixtures (refer to Section~\ref{sec:AL scale mixture}).
The underlying idea consists in dividing the scale parameter, of the conditional AL distribution, by a convenient mixing random variable, whose distribution takes values on all or part of the positive real line and depends on a parameter vector $\btheta$ further governing the shape of the unconditional mixture.
This confers more flexibility to the conditional AL distribution, allowing for a larger range of values for skewness excess kurtosis, without modifying its peculiar peaked shape.  
As an alternative, our proposal can be seen as a generalization of the (symmetric) Laplace scale mixture proposed in \citet{Punz:Bagn:PhyA:2021} to allow for skewness.   
 
For illustrative purposes, in Section~\ref{sec:Examples of AL scale mixtures} we consider seven members of our family obtained by choosing convenient mixing distributions.
The mixing distributions we choose have the advantage to produce a compound model having a closed-form probability density function (pdf); this advantage should not to be undervalued since, as emphasized by \citet{Shev:Calc:2010}, simple closed form expressions for the pdf of compound models are often not available.

As usual in the literature, we consider the maximum likelihood (ML) approach to estimate the parameters of our models.
To obtain these estimates, in Section~\ref{sec:Maximum likelihood estimation} we describe an expectation-maximization (EM) algorithm which is general enough to be easily extended to mixing distributions beyond those considered in this paper.
From the updates of the parameters of the nested AL distribution, obtained in the M-step of the algorithm, we also show how the influence of observations associated to large absolute distances is automatically reduced (downweighted) in the estimation phase.

In Section~\ref{sec:Real data analysis}, we summarize the results of an analysis where our models are fitted to the returns of two cryptocurrencies (Bitcoin EUR and TRON EUR).
We consider several classical symmetric distributions as competitors.
The analysis shows how the proposed models 
provide a significant improvement with respect to the considered competitors in terms of AIC, BIC and likelihood-ratio tests.

We conclude the paper, in Section~\ref{sec:Conclusions and future works}, with further insights and possible future works related to the proposed models.

\section{Asymmetric Laplace scale mixture}
\label{sec:AL scale mixture}

\subsection{Preliminaries: the asymmetric Laplace distribution}
\label{subsec:Preliminaries: the asymmetric Laplace distribution}


As well-documented in \citet{Kotz:Kozu:Podg:TheL:2012}, various skewed versions of the Laplace distribution have appeared in the literature.
Among them, the asymmetric Laplace (AL) distribution by \citet{Kozu:Podg:Asym:2000} is the reference one; for details about its properties and advantages, see \citet[][pages~134--136]{Kotz:Kozu:Podg:TheL:2012}. 
The classical and simplest parameterization of the pdf of the AL distribution -- given in Equation~(3.0.8) of \citet{Kotz:Kozu:Podg:TheL:2012} -- is 
\begin{equation}
f_{\text{AL}}\left(x;\mu,\beta,\kappa\right)=\frac{1}{\beta}\frac{\kappa}{1+\kappa^2} 
\begin{cases}
    e^{-\frac{\kappa}{\beta}\left(x-\mu\right)}  & \text{for $x\geq \mu$}, \\
    e^{-\frac{1}{\kappa\beta}\left(\mu-x\right)} & \text{for $x < \mu$},
  \end{cases}
\label{eq:asymmetric Laplace distribution}
\end{equation}
where $\mu\in \left(-\infty,\infty\right)$, $\beta>0$ and $\kappa>0$ are location, scale, and asymmetry parameters, respectively; if $X$ has the pdf in \eqref{eq:asymmetric Laplace distribution}, then we compactly write $X\sim \mathcal{AL}\left(\mu,\beta,\kappa\right)$.
The parameter $\kappa$ is scale invariant while $\mu$ coincides with the mode of the distribution.
Furthermore, if $\kappa=1$, then we obtain the classical (symmetric) Laplace distribution; in this case we write $X\sim \mathcal{L}\left(\mu,\beta\right)$.

The moments of greatest practical interest of $X\sim \mathcal{AL}\left(\mu,\beta,\kappa\right)$, namely the mean, variance, skewness, and kurtosis, are respectively given by
\begin{align}
	\text{E}\left(X\right)    & = \mu + \beta\left(\frac{1}{\kappa} - \kappa \right), \label{eq:AL mean}\\
	\var\left(X\right)  & = \beta^2 \left(\frac{1}{\kappa^2} + \kappa^2\right), \label{eq:AL var}\\
	\Skew\left(X\right) & = 2\frac{\displaystyle\frac{1}{\kappa^3}-\kappa^3}{\displaystyle\left(\frac{1}{\kappa^2}+\kappa^2\right)^\frac{3}{2}}, \label{eq:AL skew}\\
	\Kurt\left(X\right) & = 9 - \frac{12}{\displaystyle\left(\frac{1}{\kappa^2}+\kappa^2\right)^2}. \label{eq:AL kurt}
\end{align}
Among them, the mean in \eqref{eq:AL mean} is the only one depending on all the parameters of the AL distribution. 
The mean is given by the mode plus a term whose sign depends on the asymmetry parameter $\kappa$; this term can be negative (for $\kappa > 1$), null (for $\kappa = 1$), or positive (for $\kappa < 1$).
The variance does not depend on the mode $\mu$ and it is given by $\beta^2$ -- which is half of the variance $2\beta^2$ of $\mathcal{L}\left(\mu,\beta\right)$ --  multiplied by a positive factor only depending on $\kappa$.   
Instead, skewness and kurtosis only depend on $\kappa$, and their behavior as a function of $\kappa$ is displayed in \figurename~\ref{fig:AL skewness and kurtosis}.
\begin{figure}[!ht]
\centering
\includegraphics[width=0.5\textwidth]{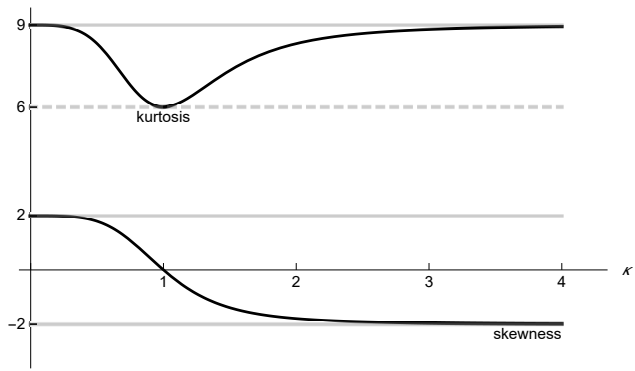}
\caption{
\footnotesize 
Skewness and kurtosis of $\mathcal{AL}\left(0,1,\kappa\right)$ as a function of $\kappa$.
Grey solid lines represent asymptotes. 
\label{fig:AL skewness and kurtosis}
}
\end{figure}
The skewness lies in the interval $\left(-2,2\right)$; as $\kappa$ increases, the corresponding value of $\text{Skew}\left(X\right)$ decreases monotonically.
As for the kurtosis, the AL distribution is leptokurtic and $\text{Kurt}\left(X\right)$ varies from 6 -- the value we have for the symmetric Laplace distribution ($\kappa = 1$) -- to 9 (the greatest value attained when either $\kappa \rightarrow 0$ or $\kappa \rightarrow \infty$).
For further details about moments and related quantities of $\mathcal{AL}\left(\mu,\beta,\kappa\right)$, see \citet[][Section~3.1.5]{Kotz:Kozu:Podg:TheL:2012}.

  %

\subsection{Probability density function}
\label{subsec:Probability density function}

In Definition~\ref{def:ALSM} we introduce the asymmetric Laplace scale mixture in terms of its pdf.
 
\begin{defi}[Probability density function]\label{def:ALSM}
A random variable $X$ is said to have the asymmetric Laplace scale mixture (ALSM) distribution with location $\mu\in \left(-\infty,\infty\right)$, scale $\beta>0$, asymmetry parameter $\kappa>0$ and tailedness $\btheta$, in symbols $X\sim \mathcal{ALSM}\left(\mu,\beta,\kappa,\btheta\right)$, if its pdf is given by
\begin{equation}
f_{\text{ALSM}}\left(x;\mu,\beta,\kappa,\btheta\right) 
=
\int_{S_h} f_{\text{AL}}\left(x;\mu,\beta/w,\kappa\right)h\left(w;\btheta\right)dw,
\quad -\infty < x < \infty,
\label{eq:asymmetric Laplace scale mixture}
\end{equation}
where $h\left(w;\btheta\right)$ is the mixing probability density (or mass) function of $W$, with support $S_h \subseteq \left(0,\infty\right)$, depending on the vector of parameters $\btheta$; in symbols, $W\sim \mathcal{H}\left(\btheta\right)$.
\end{defi}

According to \eqref{eq:asymmetric Laplace scale mixture}, the distribution of $X$ can be thought of as a composite distribution constructed by taking a finite/infinite set of AL component distributions with the same location $\mu$ and asymmetry parameter $\kappa$, but with a different scale $\beta/w$.
The component AL distributions are not taken uniformly from the set, but according to a set of ``weights'' determined by the distribution of $W$ (random counterpart of $w$).
In the context of modelling currency returns, the mixing variable $W$ could be interpreted as a shock that arises from new information and impacts the volatility of the market.
The resulting mixture is not itself an AL distribution due to a more flexible tail behavior governed by the tailedness parameter $\btheta$.
However, the distribution preserves the characteristic towering peak of the AL distribution.
Note that, if $W$ is degenerate in 1 ($W \equiv 1$), then $\mathcal{ALSM}\left(\mu,\beta,\kappa,\btheta\right)$ reduces to $\mathcal{AL}\left(\mu,\beta,\kappa\right)$.
While, if $\kappa=1$, then $\mathcal{ALSM}\left(\mu,\beta,\kappa,\btheta\right)$ reduces to the Laplace scale mixture $\mathcal{LSM}\left(\mu,\beta,\btheta\right)$ distribution proposed by \citet{Punz:Bagn:PhyA:2021}.

If we want model~\eqref{eq:asymmetric Laplace scale mixture} to embed $\mathcal{AL}\left(\mu,\beta,\kappa\right)$ as a special or limiting case, we need to consider a mixing random variable $W$ degenerating in 1 under a convenient choice of $\btheta$.
This is commonly required to models that aim to make more flexible the behavior of the tails of a conditional distribution (as, e.g., the
well-known normal scale mixture model). 
This allows to use inferential procedures such as the likelihood-ratio test to compare the ALSM with the nested AL distribution (cf.~Section~\ref{sec:Real data analysis}).

\subsection{Representations}
\label{subsec:Representations}

The ALSM admits the following useful representations as mixture of normal distributions (Proposition~\ref{prop:representation as N mixture}) and as mixture of AL distributions (Proposition~\ref{prop:representation as AL mixture}).

\begin{pro}[Mixture of normal distributions]\label{prop:representation as N mixture}
$X\sim \mathcal{ALSM}\left(\mu,\beta,\kappa,\btheta\right)$ admits the representation
\begin{equation}
X \stackrel{d}{=} \mu + \beta\left(\frac{1}{\kappa}-\kappa\right)\frac{V}{W} + \beta\frac{\sqrt{2V}}{W}Z,
\label{eq:representation as N mixture}
\end{equation}
where $Z$ is standard normal ($Z\sim\mathcal{N}\left(0,1\right)$) and $V$ is standard exponential ($V\sim\mathcal{E}\left(1\right)$).
Note that $Z$, $V$, and $W$ are independent.
\end{pro}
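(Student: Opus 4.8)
The plan is to reduce the claim to the classical normal variance–mean mixture representation of the asymmetric Laplace law and then layer the scale mixing over $W$ on top of it. First I would establish the conditional fact that if $Y\sim\mathcal{AL}(\mu,\beta,\kappa)$, then
\[
Y \stackrel{d}{=} \mu + \beta\left(\frac{1}{\kappa}-\kappa\right)V + \beta\sqrt{2V}\,Z,
\]
with $V\sim\mathcal{E}(1)$ and $Z\sim\mathcal{N}(0,1)$ independent; equivalently, $Y\mid V=v \sim \mathcal{N}\!\left(\mu+\beta(1/\kappa-\kappa)v,\,2\beta^2 v\right)$. A quick moment check (mean $\mu+\beta(1/\kappa-\kappa)$ and variance $\beta^2(1/\kappa^2+\kappa^2)$, matching \eqref{eq:AL mean}--\eqref{eq:AL var}) pins down the coefficients, but for an actual proof I would verify the density by integrating $v$ out.

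Carrying out that integration is where the work lies. Setting $u=x-\mu$ and $c=1/\kappa-\kappa$, completing the square in the Gaussian exponent turns the mixing integral into
\[
\frac{e^{cu/(2\beta)}}{2\beta\sqrt{\pi}}\int_0^{\infty} v^{-1/2}\exp\!\left(-\frac{u^2}{4\beta^2 v}-\Bigl(\tfrac{c^2}{4}+1\Bigr)v\right)dv,
\]
whose inner integral is the standard form $\int_0^\infty v^{-1/2}e^{-A/v-Bv}\,dv=\sqrt{\pi/B}\,e^{-2\sqrt{AB}}$ with $A=u^2/(4\beta^2)$ and $B=c^2/4+1$. The algebraic identity that makes everything collapse is $c^2/4+1=(1/\kappa+\kappa)^2/4$, so that $2\sqrt{AB}=|u|(1/\kappa+\kappa)/(2\beta)$ and the prefactor reduces to $\kappa/[\beta(1+\kappa^2)]$. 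Splitting on the sign of $u$ then reproduces exactly the two exponential branches of \eqref{eq:asymmetric Laplace distribution}. I expect this evaluation — recognising the integral identity and doing the $\kappa$-bookkeeping so the two-sided exponential emerges cleanly — to be the main obstacle; everything else is structural.

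Finally I would introduce the scale mixing. By Definition~\ref{def:ALSM}, conditionally on $W=w$ we have $X\mid W=w\sim\mathcal{AL}(\mu,\beta/w,\kappa)$. Replacing $\beta$ by $\beta/w$ in the conditional representation merely rescales the drift and the standard deviation by $1/w$, giving
\[
X\mid W=w \stackrel{d}{=} \mu+\frac{\beta}{w}\left(\frac{1}{\kappa}-\kappa\right)V+\frac{\beta}{w}\sqrt{2V}\,Z,
\]
with $V,Z$ independent of $W$. Since this holds for every $w\in S_h$ and $(V,Z)$ are chosen independently of $W$, the single random variable $\mu+\beta(1/\kappa-\kappa)V/W+\beta\sqrt{2V}\,Z/W$ has conditional law $\mathcal{AL}(\mu,\beta/w,\kappa)$ given $W=w$; integrating this conditional density against $h(w;\btheta)$ recovers precisely \eqref{eq:asymmetric Laplace scale mixture}, which establishes \eqref{eq:representation as N mixture}.
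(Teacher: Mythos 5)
Your proof is correct. The paper itself states Proposition~\ref{prop:representation as N mixture} without any proof: it rests on the classical normal variance--mean mixture representation of the asymmetric Laplace law (available in Kotz et al., 2001), combined with the conditional structure $X\mid W=w\sim\mathcal{AL}\left(\mu,\beta/w,\kappa\right)$ implicit in Definition~\ref{def:ALSM} --- which is exactly the route you follow. The only difference is that you verify the AL representation from scratch by evaluating the mixing integral, using $\int_0^\infty v^{-1/2}e^{-A/v-Bv}\,dv=\sqrt{\pi/B}\,e^{-2\sqrt{AB}}$ together with the identity $\left(1/\kappa-\kappa\right)^2/4+1=\left(1/\kappa+\kappa\right)^2/4$, whereas the paper would outsource that step to the literature; your computation is accurate (the prefactor collapses to $\kappa/\left[\beta\left(1+\kappa^2\right)\right]$ and the sign bookkeeping recovers the two exponential branches of the AL pdf), and the final lifting over $W$ --- checking that the constructed variable has conditional law $\mathcal{AL}\left(\mu,\beta/w,\kappa\right)$ given $W=w$ and then integrating against $h\left(w;\btheta\right)$ --- is handled properly, so your argument is complete and in fact more self-contained than the paper's.
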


\begin{pro}[Mixture of AL distributions]\label{prop:representation as AL mixture}
$X\sim \mathcal{ALSM}\left(\mu,\beta,\kappa,\btheta\right)$ admits the representation
\begin{equation}
X \stackrel{d}{=} \mu + \beta \frac{Y}{W}, 
\label{eq:representation as AL mixture}
\end{equation}
where $Y\sim \mathcal{AL}\left(0,1,\kappa\right)$ is independent of $W$.

\end{pro}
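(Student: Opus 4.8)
The plan is to establish the claimed stochastic representation by conditioning on the mixing variable $W$ and exploiting the location-scale equivariance of the AL family, after which the law of total probability reproduces exactly the pdf in Definition~\ref{def:ALSM}. It therefore suffices to show that the random variable $\mu+\beta Y/W$ has pdf \eqref{eq:asymmetric Laplace scale mixture}.

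First I would record the key location-scale property of the AL distribution: for any $s>0$, if $Y\sim\mathcal{AL}\left(0,1,\kappa\right)$ then $\mu+sY\sim\mathcal{AL}\left(\mu,s,\kappa\right)$. This is a one-line change of variables in \eqref{eq:asymmetric Laplace distribution}: writing $x=\mu+sy$ and using $\left|dy/dx\right|=1/s$, the Jacobian supplies the factor $1/s$ that turns the scale $1$ into $s$, while the branch condition $y\gtrless 0$ maps to $x\gtrless\mu$, leaving $\kappa$ untouched. Hence $s^{-1}f_{\text{AL}}\left((x-\mu)/s;0,1,\kappa\right)$ equals $f_{\text{AL}}\left(x;\mu,s,\kappa\right)$.

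Next, I would take $Y\sim\mathcal{AL}\left(0,1,\kappa\right)$ independent of $W\sim\mathcal{H}\left(\btheta\right)$ and condition on $\{W=w\}$. By independence the conditional law of $Y$ is still $\mathcal{AL}\left(0,1,\kappa\right)$, so applying the property above with $s=\beta/w>0$ gives
\begin{equation*}
\left(\mu+\frac{\beta}{w}Y\right)\;\sim\;\mathcal{AL}\left(\mu,\beta/w,\kappa\right),
\end{equation*}
whose conditional pdf is precisely $f_{\text{AL}}\left(x;\mu,\beta/w,\kappa\right)$. Marginalizing over $W$ by the law of total probability then yields
\begin{equation*}
f\left(x\right)=\int_{S_{h}}f_{\text{AL}}\left(x;\mu,\beta/w,\kappa\right)h\left(w;\btheta\right)dw,
\end{equation*}
which is the pdf of $\mathcal{ALSM}\left(\mu,\beta,\kappa,\btheta\right)$ in \eqref{eq:asymmetric Laplace scale mixture}. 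Hence $\mu+\beta Y/W$ and $X$ share the same distribution, establishing \eqref{eq:representation as AL mixture}.

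The only nontrivial ingredient is the location-scale equivariance step, and even that reduces to a routine Jacobian computation; the remaining work is the bookkeeping of conditioning and integrating. As a consistency check (not needed for the proof), one can alternatively recover this representation from Proposition~\ref{prop:representation as N mixture}: writing $Y=\left(1/\kappa-\kappa\right)V+\sqrt{2V}Z$ for the normal-exponential mixture underlying $\mathcal{AL}\left(0,1,\kappa\right)$ and substituting into $\mu+\beta Y/W$ reproduces \eqref{eq:representation as N mixture}, so Propositions~\ref{prop:representation as N mixture} and \ref{prop:representation as AL mixture} are mutually consistent.
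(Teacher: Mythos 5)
Your proof is correct: the paper states Proposition~\ref{prop:representation as AL mixture} without an explicit proof, and your argument -- location-scale equivariance of the AL pdf under $x=\mu+sy$, conditioning on $\{W=w\}$ via independence, and marginalizing to recover \eqref{eq:asymmetric Laplace scale mixture} -- is exactly the standard reasoning the paper leaves implicit, since Definition~\ref{def:ALSM} is precisely the marginal law of $X$ given $X \mid W=w \sim \mathcal{AL}\left(\mu,\beta/w,\kappa\right)$. Your closing consistency check against Proposition~\ref{prop:representation as N mixture} via $Y\stackrel{d}{=}\left(1/\kappa-\kappa\right)V+\sqrt{2V}Z$ is also correct and matches \eqref{eq:representation as N mixture} term by term.
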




\subsection{Moments}
\label{subsec:Moments}

Let $X\sim \mathcal{ALSM}\left(\mu,\beta,\kappa,\btheta\right)$.
For any integer $r > 0$, it can be proved -- starting from the representation given in Proposition~\ref{prop:representation as AL mixture}, from Eq.~(3.1.25) in \citet{kotz2001laplace}, and noticing that $Y$ and $W$ are stochastically independent -- that the $r$th moment of $X$ about $\mu$ is
\begin{align}
\text{E}\left[\left(X -\mu\right)^r\right] & = \text{E}\left[\left(Y-\mu\right)^r\right] \beta^r \text{E}\left(\frac{1}{W^r}\right) \nonumber \\
& = r!\left(\frac{\beta}{\kappa}\right)^r p_r\left(\kappa\right) \text{E}\left(\frac{1}{W^r}\right),
\label{eq:centered moment r}
\end{align}
where 
\begin{align}
p_r\left(\kappa\right)  & = \sum_{j=0}^{r} \left(-1\right)^j \kappa^{2j} \nonumber \\
& =  \frac{\displaystyle 1 + \left(-1\right)^r \kappa^{2(r+1)}}{1+\kappa^2}.
\label{eq:polynomial}
\end{align}
The moment in \eqref{eq:centered moment r} exists if $\text{E}\left(1/W^r\right)< \infty$.

\subsection{Absolute moments}
\label{subsec:Absolute moments}


Let $X\sim \mathcal{ALSM}\left(\mu,\beta,\kappa,\btheta\right)$.
For any $a>-1$, it can be proved -- starting from the representation given in Proposition~\ref{prop:representation as N mixture}, from Eq.~(3.1.26) in \citet{kotz2001laplace}, and noticing that $Y$ and $W$ are stochastically independent -- that the absolute moment of order $a$ of $X$ about $\mu$ is
\begin{align}
\text{E}\left[\left|X -\mu\right|^a\right] &=   \text{E}\left[\left|Y-\mu\right|^a\right] \beta^a E\left(\frac{1}{W^a}\right) \nonumber \\
& =  \left(\frac{\beta}{\kappa}\right)^a  \Gamma\left(a+1\right) \frac{\displaystyle 1 +   \kappa^{2(a+1)}}{1+\kappa^2} \text{E}\left(\frac{1}{W^a}\right).
\label{eq:absolute moment}
\end{align}
The moment in \eqref{eq:absolute moment} exists if $\text{E}\left(1/W^a\right) < \infty$.

\subsection{Mean, variance, skewness, and kurtosis}
\label{subsec:Mean, variance, skewness, and kurtosis}

Mean, variance, skewness, and kurtosis of $X\sim \mathcal{ALSM}\left(\mu,\beta,\kappa,\btheta\right)$ can be obtained from \eqref{eq:representation as AL mixture} and from the corresponding moments of $\mathcal{AL}\left(\mu,\beta,\kappa\right)$.
These moments can be expressed as a function of convenient moments of $1/W$, which depend on $\btheta$, in the following way:
\begin{align}
	\text{E}\left(X\right) &= \mu + \beta \text{E}\left( Y  \right) \text{E}\left( \frac{1}{W}  \right)\nonumber \\
&= \mu + \frac{\beta}{\kappa} p_1\left(\kappa\right) \text{E}\left( \frac{1}{W}  \right) \nonumber \\
&= \mu + \beta \left(\frac{1}{\kappa}-\kappa\right) \text{E}\left( \frac{1}{W}  \right)
\label{eq:ALSM mean}
\end{align}
\begin{align}
	\var\left(X\right)  &= \beta^2 \text{Var}\left(\frac{Y}{W}\right)\nonumber\\
&=   \frac{\beta^2}{\kappa^2}  \left[2p_2(\kappa)\text{E}\left(\frac{1}{W^2}\right)- p_1(\kappa)^2 \text{E}\left(\frac{1}{W}\right)^2\right], \nonumber \\
&=
\frac{\beta^2}{\kappa^2\left(1+\kappa^2\right)}  \left[2  \left(1+\kappa^6\right) \text{E}\left(\frac{1}{W^2}\right)- \left(1-\kappa^4\right)\left(1-\kappa^2\right)  \text{E}\left(\frac{1}{W}\right)^2\right]
\label{eq:ALSM var}
\end{align}
\begin{align}
	\Skew\left(X\right) &=   
2
\frac{
3p_3(\kappa)\text{E}\left(\frac{1}{W^3}\right) -3p_2(\kappa)p_1(\kappa)\text{E}\left(\frac{1}{W^2}\right)\text{E}\left(\frac{1}{W}\right) + p_1(\kappa)^3 \text{E}\left(\frac{1}{W}\right)^3}{\left[2p_2(\kappa)\text{E}\left(\frac{1}{W^2}\right)- p_1(\kappa)^2\text{E}\left(\frac{1}{W}\right)^2 \right]^\frac{3}{2}},
\nonumber \\
&= 2
\frac{
3\left(1+\kappa^2\right)^2\left(1-\kappa^8\right)\text{E}\left(\frac{1}{W^3}\right) 
-3\left(1+\kappa^2\right) \left(1+\kappa^6\right)\left(1-\kappa^4\right)\text{E}\left(\frac{1}{W^2}\right)\text{E}\left(\frac{1}{W}\right) 
+ \left(1-\kappa^8\right)^3 \text{E}\left(\frac{1}{W}\right)^3}
{\left[
2\left(1+\kappa^2\right) \left(1+\kappa^6\right)\text{E}\left(\frac{1}{W^2}\right)
- \left(1-\kappa^4\right)^2\text{E}\left(\frac{1}{W}\right)^2 \right]^\frac{3}{2}},
\label{eq:ALSM skew}
\end{align}
\begin{align}
	\Kurt\left(X\right) &=   
12
\frac{
2p_4(\kappa)\text{E}\left(\frac{1}{W^4}\right) 
-2p_3(\kappa)p_1(\kappa)\text{E}\left(\frac{1}{W^3}\right)\text{E}\left(\frac{1}{W}\right)
+ p_2(\kappa)^2 \text{E}\left(\frac{1}{W^2}\right)^2
}{
\left[2p_2(\kappa)\text{E}\left(\frac{1}{W^2}\right) - p_1(\kappa)^2\text{E}\left(\frac{1}{W}\right)^2 \right]^2
}
-3 \nonumber \\
	&=   
12
\frac{
2 \left(1+\kappa^2\right)\left(1+\kappa^{10}\right)\text{E}\left(\frac{1}{W^4}\right) 
-2\left(1-\kappa^8\right)\left(1-\kappa^4\right)\text{E}\left(\frac{1}{W^3}\right)\text{E}\left(\frac{1}{W}\right)
+ \left(1+\kappa^6\right)^2 \text{E}\left(\frac{1}{W^2}\right)^2
}{
\left[2\left(1+\kappa^6\right)\text{E}\left(\frac{1}{W^2}\right) -  \left(1-\kappa^2\right)\left(1-\kappa^4\right)\text{E}\left(\frac{1}{W}\right)^2 \right]^2
}
-3. \label{eq:ALSM kurt} 
\end{align}
As we can note, the mean depends on all the parameters of the ALSM distribution, the variance does not depend on $\mu$, while skewness and kurtosis only depend on $\kappa$ and $\btheta$.

\subsection{Mode}
\label{subsec:Mode}

In Theorem~\ref{theo:1} we show that the ALSM distribution is unimodal hump-shaped, with mode in $\mu$; the result is a slight modification of Theorem~1 in \citet{Punz:Bagn:Maru:Comp:2018} and \citet{Punz:Bagn:PhyA:2021}. 
\begin{theo}\label{theo:1}
The pdf of $X\sim \mathcal{ALSM}\left(\mu,\beta,\kappa,\btheta\right)$ is unimodal hump-shaped in $x=\mu$.
\end{theo}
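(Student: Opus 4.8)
The plan is to transfer the monotonicity of each conditional AL component to the mixture by exploiting the positivity of the mixing density. First I would fix an arbitrary $w\in S_h$ and examine the conditional pdf $f_{\text{AL}}\left(x;\mu,\beta/w,\kappa\right)$ as a function of $x$. From \eqref{eq:asymmetric Laplace distribution}, replacing $\beta$ with $\beta/w$, this component equals $\frac{w}{\beta}\frac{\kappa}{1+\kappa^2} e^{-\frac{w}{\kappa\beta}\left(\mu-x\right)}$ for $x<\mu$ and $\frac{w}{\beta}\frac{\kappa}{1+\kappa^2} e^{-\frac{\kappa w}{\beta}\left(x-\mu\right)}$ for $x\geq\mu$. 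Each branch is a pure exponential in $x$, so the component is strictly increasing on $\left(-\infty,\mu\right)$, strictly decreasing on $\left(\mu,\infty\right)$, continuous at $x=\mu$, and attains its unique maximum at $x=\mu$, for every $w\in S_h$.

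Next I would propagate this shape through the integral in \eqref{eq:asymmetric Laplace scale mixture}. Taking any $x_1<x_2\leq\mu$, the previous step gives $f_{\text{AL}}\left(x_1;\mu,\beta/w,\kappa\right)<f_{\text{AL}}\left(x_2;\mu,\beta/w,\kappa\right)$ for every $w\in S_h$. Multiplying by $h\left(w;\btheta\right)\geq 0$ and integrating over $S_h$ preserves the inequality; moreover, since $h$ is a genuine density (hence strictly positive on a set of positive measure), the strict inequality survives integration, yielding $f_{\text{ALSM}}\left(x_1;\mu,\beta,\kappa,\btheta\right)<f_{\text{ALSM}}\left(x_2;\mu,\beta,\kappa,\btheta\right)$. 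Thus $f_{\text{ALSM}}$ is strictly increasing on $\left(-\infty,\mu\right]$. By the symmetric argument on the decreasing branch, for any $\mu\leq x_1<x_2$ one obtains $f_{\text{ALSM}}\left(x_1;\mu,\beta,\kappa,\btheta\right)>f_{\text{ALSM}}\left(x_2;\mu,\beta,\kappa,\btheta\right)$, so $f_{\text{ALSM}}$ is strictly decreasing on $\left[\mu,\infty\right)$.

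Combining the two monotonicity statements shows that $f_{\text{ALSM}}$ rises to a single peak at $x=\mu$ and falls thereafter, which is exactly the claimed unimodal hump-shaped behaviour with mode at $\mu$. The only delicate point is the passage from the pointwise strict inequality between integrands to a strict inequality between the integrals: this is where the argument relies on $h$ integrating to $1$ rather than merely being nonnegative, since a nonnegative weight vanishing almost everywhere would yield only a weak inequality. Because this positivity is guaranteed by $h$ being a probability density (or mass) function, the strictness is secured and no further regularity on $h$ is needed; this is the sense in which the result is a slight modification of the corresponding symmetric statements in \citet{Punz:Bagn:Maru:Comp:2018} and \citet{Punz:Bagn:PhyA:2021}.
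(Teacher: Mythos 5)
Your proof is correct, and it reaches the same structural conclusion as the paper's proof (each AL component peaks at $\mu$, and positivity of the mixing density transfers this shape to the mixture), but the technical route is different. The paper differentiates under the integral sign: it writes $f_{\text{ALSM}}'\left(x;\mu,\beta,\kappa,\btheta\right)=\int_{S_h} f_{\text{AL}}'\left(x;\mu,\beta/w,\kappa\right)h\left(w;\btheta\right)dw$, computes the sign of $f_{\text{AL}}'$ on each side of $\mu$, and concludes that the mixture's derivative is positive for $x<\mu$ and negative for $x>\mu$. You instead work with the function values directly: for $x_1<x_2\leq\mu$ you have the pointwise strict inequality $f_{\text{AL}}\left(x_1;\mu,\beta/w,\kappa\right)<f_{\text{AL}}\left(x_2;\mu,\beta/w,\kappa\right)$ for every $w\in S_h$, and integration against the (nonnegative, nondegenerate) mixing density preserves strictness. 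Your version has two modest advantages: it sidesteps the interchange of differentiation and integration, which the paper uses without justification (one would normally invoke dominated convergence, using the fact that the derivative of each component is bounded by a multiple of $w^2 h\left(w;\btheta\right)$, and this bound need not even be integrable when $\btheta$ puts heavy mass at large $w$); and it applies verbatim when $h$ is a probability mass function, where ``differentiating inside the sum'' is the same statement but the monotonicity formulation reads more naturally. What the paper's derivative computation buys in exchange is slightly more local information (an explicit expression for $f_{\text{ALSM}}'$, with $f_{\text{ALSM}}'\left(\mu\right)=0$ in the sense of one-sided limits), which your argument does not produce but also does not need. One point worth noting for both proofs: when $\text{E}\left(W\right)=\infty$ the peak value $f_{\text{ALSM}}\left(\mu;\mu,\beta,\kappa,\btheta\right)=\dfrac{\kappa}{\beta\left(1+\kappa^2\right)}\text{E}\left(W\right)$ is infinite; your monotonicity statements remain valid in that case (allowing the value $+\infty$ at $x=\mu$), so your argument degrades gracefully there, whereas the derivative-based formulation is harder to interpret.
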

\begin{proof}
The first derivative, with respect to $x$, of $f_{\text{ALSM}}\left(x;\mu,\beta,\kappa,\btheta\right)$ is
\begin{equation*}
  f_{\text{ALSM}}'\left(x;\mu,\beta,\kappa,\btheta\right) = \int_{S_h} f_{\text{AL}}'\left(x;\mu,\beta/w,\kappa\right)h\left(w;\btheta\right)dw, \quad x > 0,
\label{eq:derivp}
\end{equation*} 
where 
\begin{displaymath}
	f_{\text{AL}}'\left(x;\mu,\beta/w,\kappa\right) = 
	\frac{w^2}{\beta^2}\frac{\kappa^2}{1+\kappa^2} (\mu-x)
\begin{cases}
    e^{-\frac{w\kappa}{\beta}\left(x-\mu\right)}  & \text{for $x\geq \mu$}, \\
    e^{-\frac{w}{\kappa\beta}\left(\mu-x\right)} & \text{for $x < \mu$}.
  \end{cases}
\end{displaymath}
Hence, $f_{\text{AL}}'\left(x;\mu,\beta/w,\kappa\right)>0$ for $x<\mu$, $f_{\text{AL}}'\left(x;\mu,\beta/w,\kappa\right)<0$ for $x>\mu$, and $f_{\text{AL}}'\left(x;\mu,\beta/w,\kappa\right)=0$ for $x=\mu$.
Using these results and recalling that $h\left(w;\btheta\right)> 0$ for $w>0$, the theorem is straightforwardly proven.
\end{proof}

\section{Examples of asymmetric Laplace scale mixtures}
\label{sec:Examples of AL scale mixtures}

The ALSM family of distributions is extremely flexible.
In this section we introduce some members of the family obtained under a convenient choice of the mixing probability density (or mass) function $h\left(w;\btheta\right)$.
These members satisfy two requirements: 1) the existence of a closed-form pdf for the ALSM and 2) the peculiarity to nest $\mathcal{AL}\left(\mu,\beta,\kappa\right)$ under a suitable choice of $\btheta$.  

\subsection{Two-point asymmetric Laplace distribution}
\label{subsec:Two-point mixing distribution}

A simple but effective special case of model~\eqref{eq:asymmetric Laplace scale mixture} is obtained if we consider the following mixing two-point (TP) distribution 
\begin{equation}
W = \begin{cases}
      1 & \text{with probability $\theta_1$}\\
      1/\theta_2 & \text{with probability $1-\theta_1$},\\
 	\end{cases} 
\label{eq:W Bernoulli}
\end{equation}
where $\theta_1\in\left(0,1\right)$ and $\theta_2>1$.
The probability mass function of $W$ is
\begin{equation}
h\left(w;\btheta\right)=\theta_1^{\frac{w-1/\theta_2}{1-1/\theta_2}}\left(1-\theta_1\right)^{\frac{1-w}{1-1/\theta_2}},\quad w\in\left\{1/\theta_2,1\right\},
\label{eq:pmf Bernoulli}
\end{equation}
where $\btheta=\left(\theta_1,\theta_2\right)'$; in symbols, $W\sim \mathcal{TP}_{\left\{1/\theta_2,1\right\}}\left(\theta_1\right)$.
Using \eqref{eq:pmf Bernoulli}, model~\eqref{eq:asymmetric Laplace scale mixture} becomes
\begin{equation}
f_{\text{ALSM}}\left(x;\mu,\beta,\kappa,\btheta\right)=\theta_1 f_{\text{AL}}\left(x;\mu,\beta,\kappa\right) + \left(1-\theta_1\right) f_{\text{AL}}\left(x;\mu,\theta_2\beta,\kappa\right),\quad -\infty<x<\infty.
\label{eq:contaminated AL pdf}
\end{equation} 
Hereafter, we will refer to the resulting distribution as two-point asymmetric Laplace (TP-AL).

If $W\sim \mathcal{TP}_{\left\{1/\theta_2,1\right\}}\left(\theta_1\right)$, then the $r$th moment of $1/W$ is 
\begin{equation}
\text{E}\left(\frac{1}{W^r}\right) = \theta_1 + \left(1-\theta_1\right)\theta_2^r.
\label{eq:Bern rth moment}
\end{equation}
If we put in \eqref{eq:absolute moment}--\eqref{eq:ALSM kurt} the first four raw moments computed according to \eqref{eq:Bern rth moment}, mean, variance, skewness and kurtosis of the TP-AL distribution become
\begin{equation}
\text{E}\left(X\right) = \mu + \beta\left(\frac{1}{\kappa} - \kappa \right) \left[\theta_1 + \left(1-\theta_1\right)\theta_2\right], 
\label{eq:BernAL mean}
\end{equation}
\begin{equation}
\var\left(X\right) = \frac{\beta^2}{\kappa^2} 
\left\{\frac{2 \left[\theta_1 + \left(1-\theta_1\right)\theta_2\right] \left(1+\kappa^6\right)}{1+\kappa^2}
-\frac{\left[\theta_1 + \left(1-\theta_1\right)\theta_2\right]^2 \left(1-\kappa^4\right)^2}{\left(1+\kappa^2\right)^2}\right\},
\label{eq:BernAL variance}
\end{equation}
\begin{equation}
\Skew\left(X\right) = 
\frac{
2 \left\{\left[\theta_1 + \left(1-\theta_1\right)\theta_2\right]^2 \left(1-\kappa^4\right)^3-3 \left[\theta_1 + \left(1-\theta_1\right)\theta_2\right]  \left(1+\kappa^2\right) \left(1+\kappa^6\right) \left(1-\kappa^4\right)+3 \left(1+\kappa^2\right)^2 \left(1-\kappa^8\right)\right\}
}{
\sqrt{\theta_1 + \left(1-\theta_1\right)\theta_2} 
\left\{2 \left(1+\kappa^2\right) \left(1+\kappa^6\right) - \left[\theta_1 + \left(1-\theta_1\right)\theta_2\right] \left(1-\kappa^4\right)^2\right\}^{3/2}
},
\label{eq:BernAL skewness}
\end{equation}
and
\begin{equation}
\Kurt\left(X\right) = 
\frac{
12 \left(1+\kappa^2\right)^2 \left\{\left[\theta_1 + \left(1-\theta_1\right)\theta_2\right]^2 \left(1+\kappa^6\right)^2
-
2 \left[\theta_1 + \left(1-\theta_1\right)\theta_2\right] \left(1-\kappa^4\right) \left(1-\kappa^8\right)+2 \left(1+\kappa^2\right) \left(1+\kappa^{10}\right)\right\}
}{
\left[\theta_1 + \left(1-\theta_1\right)\theta_2\right] \left\{2 \left(1+\kappa^2\right) \left(1+\kappa^6\right) - \left[\theta_1 + \left(1-\theta_1\right)\theta_2\right] \left(1-\kappa^4\right)^2\right\}^2
}
-3.
\label{eq:BernAL kurtosis}
\end{equation}

With respect to $\mathcal{AL}\left(\mu,\beta,\kappa\right)$, which can be considered as the reference distribution for the model (for a discussion about the concept of reference distribution, see \citealp{Davi:Gath:Thei:1993}), the additional parameters $\theta_1$ and $\theta_2$ have an interpretation of practical interest.
\begin{description}
	\item[--] $\theta_1$ is the proportion of points from the reference distribution.
For applications of model~\eqref{eq:contaminated AL pdf} in robustness studies, it could be natural to assume, and we will do that hereafter, that $\theta_1\in\left(1/2,1\right)$, i.e., at least half of the observations come from the reference distribution (\citealp[][p.~250]{Henn:Fixe:2002}, \citealp{Punz:McNi:Robu:2016}, \citealp{Mazz:Punz:Mode:2019}, \citealp{templ2019evaluation} and \citealp{Punz:Bagn:PhyA:2021}).  
	\item[--] $\theta_2$ denotes the degree of contamination and, because of the assumption $\theta_2>1$, it can be meant as the increase in variability due to the points which do not come from the reference distribution; hence, it is an inflation parameter.
\end{description}
Model~\eqref{eq:contaminated AL pdf} reduces to $\mathcal{AL}\left(\mu,\beta,\kappa\right)$ when $\theta_1\rightarrow 1$ and $\theta_2\rightarrow 1$, and can also be seen as the natural asymmetric counterpart of the contaminated Laplace distribution introduced by \citet{Punz:Bagn:PhyA:2021}.
For this reason, we will also refer to the TP-AL distribution as contaminated asymmetric Laplace distribution.

An interesting characteristic of model~\eqref{eq:contaminated AL pdf} is the possibility to determine whether a generic observation $x$ comes or not from the reference distribution via the \textit{a~posteriori} probability
\begin{equation}
P\left(\text{$x$ comes from $\mathcal{AL}\left(\mu,\beta,\kappa\right)$}\left|\mu,\beta,\kappa,\btheta\right.\right)=\frac{\theta_1f_{\text{AL}}\left(x;\mu,\beta,\kappa\right)}{f_{\text{ALSM}}\left(x;\mu,\beta,\kappa,\btheta\right)}.
\label{eq:probability good}
\end{equation}
Specifically, $x$ will be considered coming from the reference distribution if the probability in \eqref{eq:probability good} is greater than 1/2.

\figurename~\ref{fig:TPAL skew and kurt1} shows examples of behaviors of $\Skew\left(X\right)$ and $\Kurt\left(X\right)$, as functions of $\kappa$, at various levels of $\theta_1$, with $\theta_2=5$.
\begin{figure}[!ht]
\centering
\subfigure[Skewness \label{fig:TPALskew1}]
{\resizebox{0.48\textwidth}{!}{\includegraphics{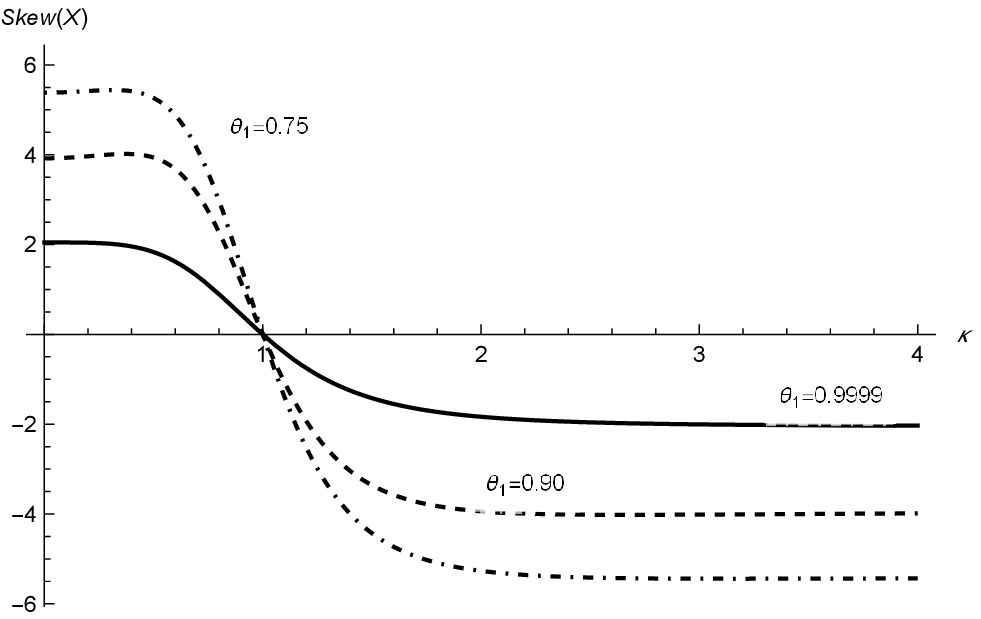}}}
\subfigure[Kurtosis\label{fig:TPALkurt1}]
{\resizebox{0.48\textwidth}{!}{\includegraphics{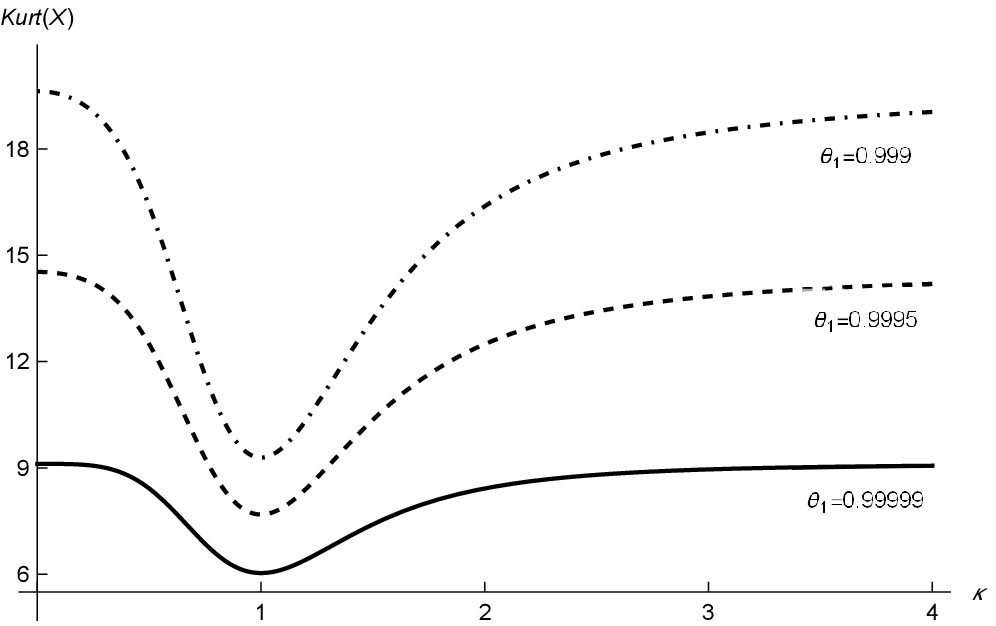}}}
\caption{
\footnotesize 
Examples of behavior of $\Skew\left(X\right)$ (on the left) and $\Kurt\left(X\right)$ (on the right), as functions of $\kappa$, at various levels of $\theta_1$ for the TP-AL distribution, with $\theta_2=5$.
\label{fig:TPAL skew and kurt1}
}
\end{figure}
From \figurename~\ref{fig:TPALskew1} we realize that: 1) values of $\theta_1$ close to 1 tend to produce the plot of $\Skew\left(X\right)$ we obtained for $\mathcal{AL}\left(\mu,\beta,\kappa\right)$ (refer to \figurename~\ref{fig:AL skewness and kurtosis}); and 2) as $\theta_1$ decreases (to a minimum value of 0.5), the range of possible values of $\Skew\left(X\right)$ increases.
Similarly, from \figurename~\ref{fig:TPALkurt1} we realize that, fixed $\theta_2$: 1) large values of $\theta_2$ (with a maximum of 1) tend to produce the plot of $\Kurt\left(X\right)$ we obtained for $\mathcal{AL}\left(\mu,\beta,\kappa\right)$ (refer to \figurename~\ref{fig:AL skewness and kurtosis}); 
and 2) $\kappa$ kept fixed, the lower the value $\theta_1$ (with a minimum of 0.5), the higher the kurtosis.
This means that the curve obtained for $\theta_1\rightarrow 1$ acts as a lower bound.

\figurename~\ref{fig:TPAL skew and kurt2} shows examples of behaviors of $\Skew\left(X\right)$ and $\Kurt\left(X\right)$, as functions of $\kappa$, at various levels of $\theta_2$, with $\theta_1=0.95$.
\begin{figure}[!ht]
\centering
\subfigure[Skewness \label{fig:TPALskew2}]
{\resizebox{0.48\textwidth}{!}{\includegraphics{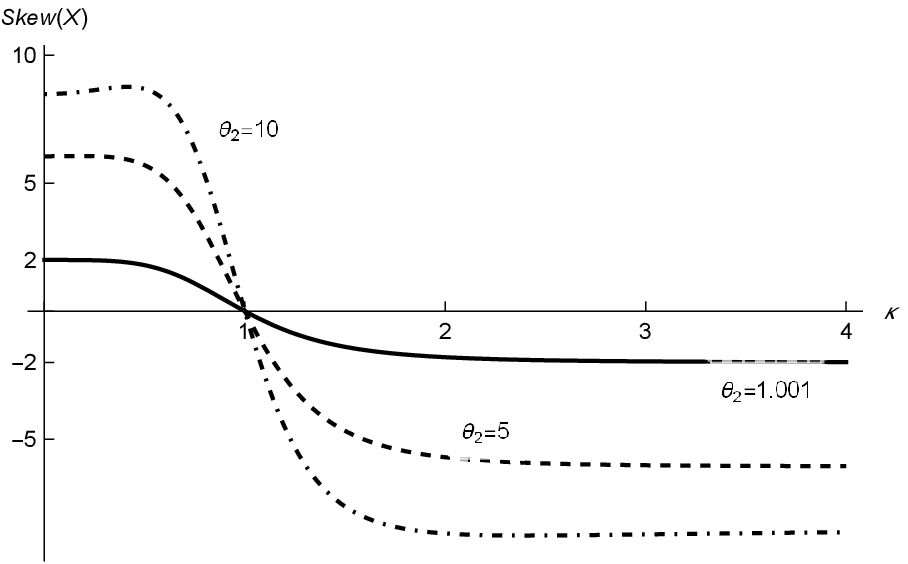}}}
\subfigure[Kurtosis\label{fig:TPALkurt2}]
{\resizebox{0.48\textwidth}{!}{\includegraphics{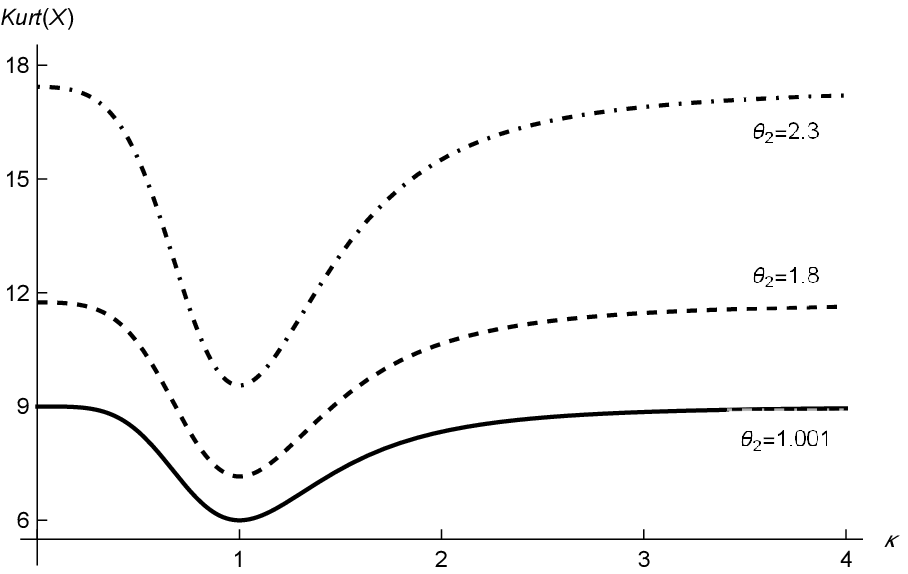}}}
\caption{
\footnotesize 
Examples of behavior of $\Skew\left(X\right)$ (on the left) and $\Kurt\left(X\right)$ (on the right), as functions of $\kappa$, at various levels of $\theta_2$ for the TP-AL distribution, with $\theta_1=0.95$.
\label{fig:TPAL skew and kurt2}
}
\end{figure}
From \figurename~\ref{fig:TPALskew2} we realize that: 1) values of $\theta_2$ close to 1 tend to produce the plot of $\Skew\left(X\right)$ we obtained for $\mathcal{AL}\left(\mu,\beta,\kappa\right)$ (refer to \figurename~\ref{fig:AL skewness and kurtosis}); and 2) as $\theta_2$ grows, the range of possible values of $\Skew\left(X\right)$ increases.
Similarly, from \figurename~\ref{fig:TPALkurt2} we realize that: 1) small values of $\theta$ (with a minimum of 1) tend to produce the plot of $\Kurt\left(X\right)$ we obtained for $\mathcal{AL}\left(\mu,\beta,\kappa\right)$ (refer to \figurename~\ref{fig:AL skewness and kurtosis}); 
and 2) $\kappa$ kept fixed, the higher the value $\theta_2$, the higher the kurtosis.
This means that the curve obtained for $\theta_2\rightarrow 1$ acts as a lower bound.

\subsection{Shifted exponential Laplace distribution}
\label{subsec:Shifted exponential mixing distribution}

Let
\begin{equation}
h\left(w;\theta\right)=\theta e^{-\theta\left(w-1\right)},\quad w>1,
\label{eq:pdf shifted exponential}
\end{equation}
with rate parameter $\theta>0$, be the pdf of the shifted exponential, in symbols $W\sim \mathcal{SE}_{\left(1,\infty\right)}\left(\theta\right)$.
When the pdf in \eqref{eq:pdf shifted exponential} is considered as mixing density in model~\eqref{eq:asymmetric Laplace scale mixture}, the pdf of the ALSM becomes
\begin{equation}
f_{\text{ALSM}}\left(x;\mu,\beta,\kappa,\theta\right) = \frac{\theta}{\beta}\frac{\kappa}{1+\kappa^2} 
\begin{cases}
    \displaystyle\frac{1+\theta+\frac{\kappa}{\beta}\left(x-\mu\right)}{\left[\theta+\frac{\kappa}{\beta}\left(x-\mu\right)\right]^2}e^{-\frac{\kappa}{\beta}\left(x-\mu\right)}  & \text{for $x\geq \mu$}, \\[6mm]
    \displaystyle\frac{1+\theta+\frac{1}{\beta\kappa}\left(\mu-x\right)}{\left[\theta+\frac{1}{\beta\kappa}\left(\mu-x\right)\right]^2}e^{-\frac{1}{\kappa\beta}\left(\mu-x\right)} & \text{for $x < \mu$}.
  \end{cases}
\label{eq:SE-AL distribution}
\end{equation}
We will refer to the resulting distribution as shifted exponential asymmetric Laplace (SE-AL).
Model~\eqref{eq:SE-AL distribution} reduces to $\mathcal{AL}\left(\mu,\beta,\kappa\right)$ when $\theta \rightarrow \infty$.
Moreover, if $\kappa=1$, then we obtain the SE-Laplace distribution, proposed by \citet{Punz:Bagn:PhyA:2021}, as a special case.

If $W\sim \mathcal{SE}_{\left(1,\infty\right)}\left(\theta\right)$, then the $r$th raw moment of $1/W$ is
\begin{equation}
\text{E}\left(\frac{1}{W^r}\right) =
\theta  e^{\theta} E_r(\theta),
\label{eq:SE moments 1/W}
\end{equation}
where $E_{n}\left(z\right)=\int_1^{\infty}e^{-zt}/t^ndt$ is the exponential integral function.
By substituting in \eqref{eq:ALSM mean}--\eqref{eq:ALSM kurt} the first four raw moments computed according to \eqref{eq:SE moments 1/W}, mean, variance, skewness and kurtosis of the SE-AL distribution become
\begin{equation}
\text{E}\left(X\right) = \mu + \beta \left(\frac{1}{\kappa}-\kappa \right) \theta e^{\theta} E_1(\theta),
\label{eq:SEAL mean}
\end{equation}
\begin{equation}
\var\left(X\right) 
=
  \frac{\beta^2\theta e^{\theta}}{\kappa^2\left(1+\kappa^2\right)}  \left[2 
\left(1+\kappa^6\right)                           E_2(\theta )
- \theta e^{\theta}\left(1-\kappa^4\right)\left(1-\kappa^2\right)  E_1(\theta )^2\right],
\label{eq:SEAL variance}
\end{equation}
\begin{equation}
\Skew\left(X\right) =   
\frac{
  3\left(1+\kappa^2\right)^2\left(1-\kappa^8\right)E_3(\theta ) 
  -3 \theta e^{\theta} \left(1+\kappa^2\right) \left(1+\kappa^6\right)\left(1-\kappa^4\right)E_2(\theta )E_1(\theta )
  + \left(1-\kappa^8\right)^3 \theta^2 e^{2\theta} E_1(\theta )^3}
{\sqrt{\theta e^{\theta}}\left[
  2\left(1+\kappa^2\right) \left(1+\kappa^6\right)E_2(\theta )
  - \theta e^{\theta} \left(1-\kappa^4\right)^2 E_1(\theta )^2 \right]^\frac{3}{2}},
\label{eq:SEAL skewness}
\end{equation}
and
\begin{equation}
\Kurt\left(X\right) = \frac{  12}{\theta e^\theta  }  \cdot
\frac{
2\left(1+\kappa^2\right) \left(1+\kappa^{10}\right) E_4(\theta ) 
-2  \theta e^{\theta }   \left(1-\kappa ^8\right)  \left(1-\kappa^4\right) E_1(\theta ) E_3(\theta )
e^{\theta } \theta  \left(1+\kappa^6 \right)^2 E_2(\theta )^2
} 
{ \left[e^{\theta } \theta \left( 1-\kappa^2 \right) \left(1-\kappa ^4\right)  E_1(\theta )^2-2 \left( 1+\kappa^6 \right) E_2(\theta )\right]^2}-3
\label{eq:SEAL kurtosis}
\end{equation}

\figurename~\ref{fig:SEAL skew and kurt} shows examples of behaviors of $\Skew\left(X\right)$ and $\Kurt\left(X\right)$, as functions of $\kappa$, at various levels of $\theta$.
\begin{figure}[!ht]
\centering
\subfigure[Skewness \label{fig:SEALskew}]
{\resizebox{0.48\textwidth}{!}{\includegraphics{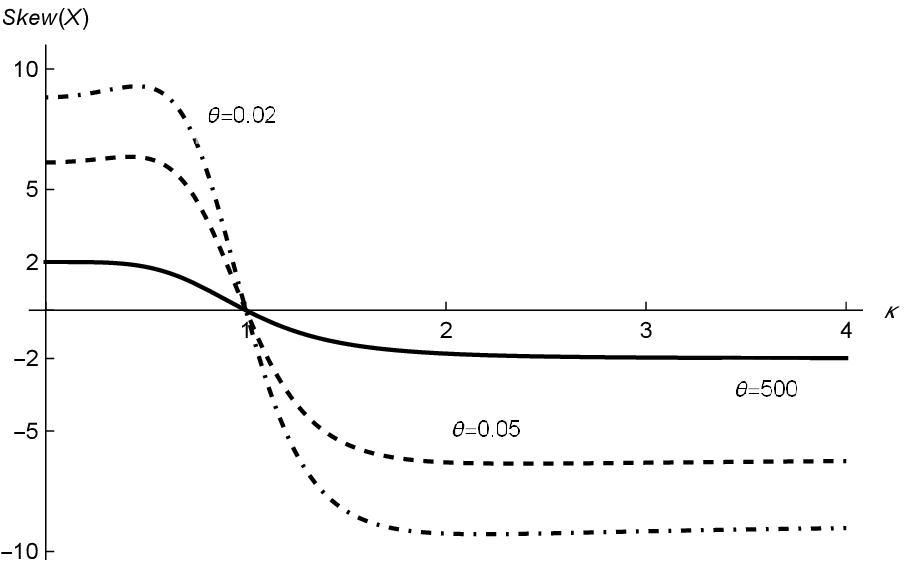}}}
\subfigure[Kurtosis\label{fig:SEALkurt}]
{\resizebox{0.48\textwidth}{!}{\includegraphics{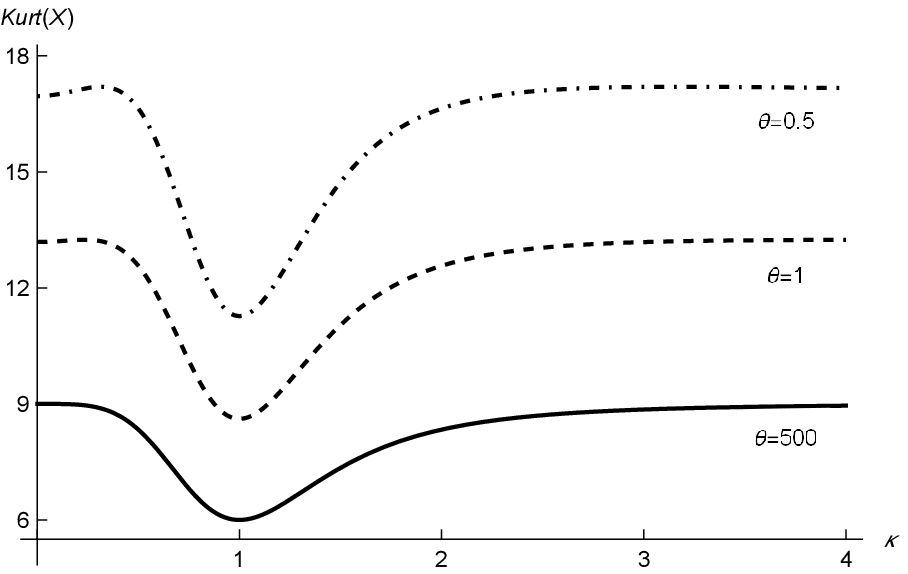}}}
\caption{
\footnotesize 
Examples of behavior of $\Skew\left(X\right)$ (on the left) and $\Kurt\left(X\right)$ (on the right), as functions of $\kappa$, at various levels of $\theta$ for the SE-AL distribution.
\label{fig:SEAL skew and kurt}
}
\end{figure}
From \figurename~\ref{fig:SEALskew} we realize that: 1) large values of $\theta$ tend to produce the plot of $\Skew\left(X\right)$ we obtained for $\mathcal{AL}\left(\mu,\beta,\kappa\right)$ (refer to \figurename~\ref{fig:AL skewness and kurtosis}); and 2) as $\theta$ decreases, the range of possible values of $\Skew\left(X\right)$ increases.
Moreover, to ``significantly'' modify the behavior of $\Skew\left(X\right)$ with respect to the $\mathcal{AL}\left(\mu,\beta,\kappa\right)$ case, we need low values of $\theta$ (in the examples, $\theta=0.05$ and $\theta=0.02$). 
Similarly, from \figurename~\ref{fig:SEALkurt} we realize that: 1) large values of $\theta$ tend to produce the plot of $\Kurt\left(X\right)$ we obtained for $\mathcal{AL}\left(\mu,\beta,\kappa\right)$ (refer to \figurename~\ref{fig:AL skewness and kurtosis}); 
and 2) $\kappa$ kept fixed, the lower the value $\theta$, the higher the kurtosis.
This means that the curve obtained for $\theta\rightarrow \infty$ acts as a lower bound.
Also in this case, to ``significantly'' modify the behavior of $\Kurt\left(X\right)$ with respect to the $\mathcal{AL}\left(\mu,\beta,\kappa\right)$ case, we need low values of $\theta$ (in the examples, $\theta=1$ and $\theta=0.5$). 

\subsection{Unimodal gamma asymmetric Laplace distribution}
\label{subsec:Unimodal gamma asymmetric Laplace distribution}

Let
\begin{equation}
h\left(w;\theta\right)=\frac{w^{\frac{1}{\theta}}e^{-\frac{w}{\theta}}}{\theta^{\frac{1}{\theta}+1}\Gamma\left(\frac{1}{\theta}+1\right)},\quad w>0,
\label{eq:Unimodal Gamma}
\end{equation}
with $\theta>0$, be the pdf of the unimodal gamma distribution \citep{Chen:prob:2000} with mode in $w=1$.
In symbols, $W\sim \mathcal{UG}\left(\theta\right)$; see also \citet{Bagn:Punz:Fine:2013}.
When the pdf in \eqref{eq:Unimodal Gamma} is considered as mixing density in model~\eqref{eq:asymmetric Laplace scale mixture}, the pdf of the ALSM becomes
\begin{equation}
f_{\text{ALSM}}\left(x;\mu,\beta,\kappa,\theta\right) = \frac{1+\theta}{\beta}\frac{\kappa}{1+\kappa^2} 
\begin{cases}
    \left[1+\frac{\kappa\theta}{\beta}\left(x-\mu\right)\right]^{-\frac{1}{\theta}- 2 } & \text{for $x\geq \mu$,} \\[3mm]
    \left[1+\frac{\theta}{\beta\kappa}\left(\mu-x\right)\right]^{-\frac{1}{\theta}- 2 } & \text{for $x < \mu$.}
  \end{cases}
\label{eq:UG-AL distribution}
\end{equation}
We refer to model~\eqref{eq:UG-AL distribution} as unimodal gamma asymmetric Laplace (UG-AL) distribution.
Model~\eqref{eq:UG-AL distribution} reduces to $\mathcal{AL}\left(\mu,\beta,\kappa\right)$ when $\theta \rightarrow 0$.
Moreover, if $\kappa=1$, then we obtain the UG-Laplace distribution, proposed by \citet{Punz:Bagn:PhyA:2021}, as a special case.


If $W\sim \mathcal{UG}\left(\theta\right)$, then the $r$th raw moment of $1/W$ is
\begin{align}
\text{E}\left(\frac{1}{W^r}\right) & = \frac{\Gamma\left(\frac{1}{\theta}+1-r\right)}{\theta^{r-1}\Gamma\left(\frac{1}{\theta}\right)}, \nonumber \\
  & = \frac{1}{\displaystyle \prod_{j=1}^{r}\left(1+\theta-j \theta  \right)}
\label{eq:UG moments 1/W}
\end{align}
which exists if $\theta < 1/(r-1)$.
By substituting in \eqref{eq:ALSM mean}--\eqref{eq:ALSM kurt} the first four raw moments in \eqref{eq:UG moments 1/W}, mean, variance, skewness and kurtosis of the UG-AL distribution become
\begin{equation}
\text{E}\left(X\right) = \mu + \beta \left(\frac{1}{\kappa}-\kappa\right),
\label{eq:UGAL mean}
\end{equation}
\begin{equation}
\var\left(X\right) = \frac{\beta^2 \left[\theta  \left(1-\kappa^2\right)^2+\kappa^4+1\right]}{\kappa^2(1-\theta) },
\label{eq:UGAL variance}
\end{equation}
which exists if $0<\theta<1$,
\begin{equation}
\Skew\left(X\right)=\frac{2 \sqrt{1-\theta } (2 \theta +1)  \left(1-\kappa^2\right) \left[\theta  \left( 1-\kappa^2\right)^2+\kappa^4+\kappa^2+1\right]}{(2 \theta -1) \left[\theta  \left(1-\kappa^2\right)^2+\kappa^4+1\right]^{3/2}}
\label{eq:UGAL skewness}
\end{equation}
which exists if $0<\theta<1/2$, and
\begin{equation}
\Kurt\left(X\right) = 
12 \frac{  (\theta +1) \kappa ^8-4 \theta  \kappa ^6+\left[\theta  (6 \theta -1)+1\right] \kappa ^4-4 \theta  \kappa ^2+\theta +1 }{\left[\theta  (6 \theta -5)+1 \right] \left[\theta  \left(1-\kappa ^2\right)^2+\kappa ^4+1\right]^2}-3
\label{eq:UGAL kurtosis}
\end{equation}
which exists if $0<\theta<1/3$.


\figurename~\ref{fig:UGAL skew and kurt} shows examples of behaviors of $\Skew\left(X\right)$ and $\Kurt\left(X\right)$, as functions of $\kappa$, at various levels of $\theta$.
\begin{figure}[!ht]
\centering
\subfigure[Skewness \label{fig:UGALskew}]
{\resizebox{0.48\textwidth}{!}{\includegraphics{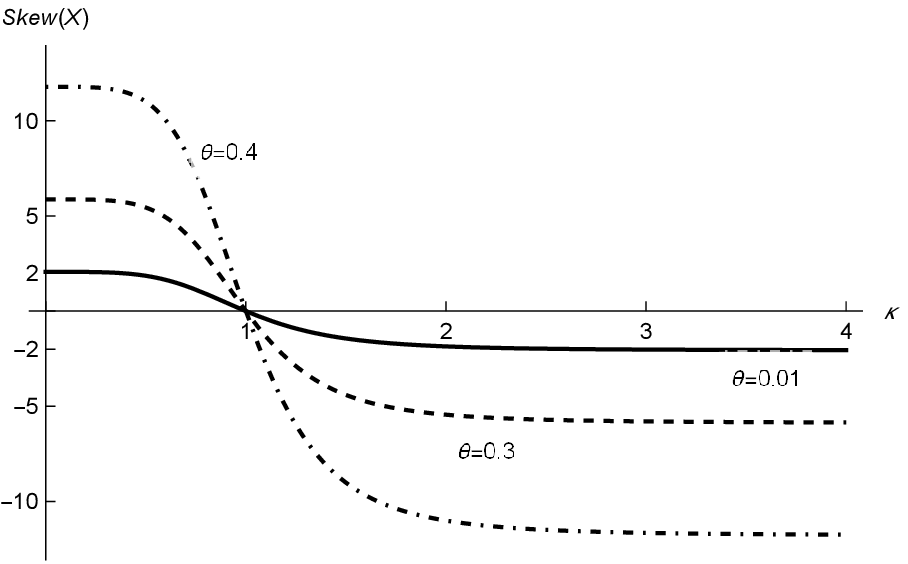}}}
\subfigure[Kurtosis\label{fig:UGALkurt}]
{\resizebox{0.48\textwidth}{!}{\includegraphics{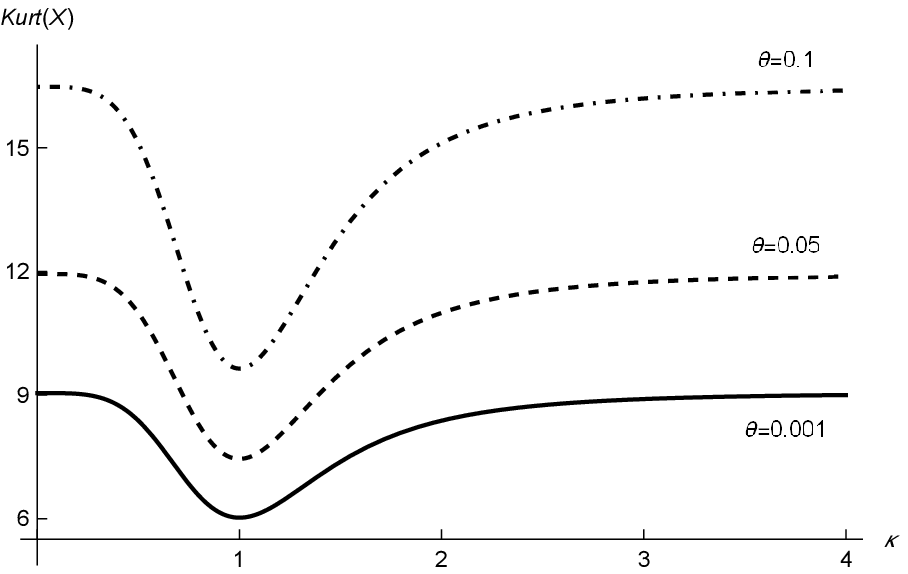}}}
\caption{
\footnotesize 
Examples of behavior of $\Skew\left(X\right)$ (on the left) and $\Kurt\left(X\right)$ (on the right), as functions of $\kappa$, at various levels of $\theta$ for the UG-AL distribution.
\label{fig:UGAL skew and kurt}
}
\end{figure}
From \figurename~\ref{fig:UGALskew} we realize that: 1) small values of $\theta$ tend to produce the plot of $\Skew\left(X\right)$ we obtained for $\mathcal{AL}\left(\mu,\beta,\kappa\right)$ (refer to \figurename~\ref{fig:AL skewness and kurtosis}); and 2) as $\theta$ grows from 0 to 1/2, the range of possible values of $\Skew\left(X\right)$ increases.
Similarly, from \figurename~\ref{fig:UGALkurt} we realize that: 1) small values of $\theta$ tend to produce the plot of $\Kurt\left(X\right)$ we obtained for $\mathcal{AL}\left(\mu,\beta,\kappa\right)$ (refer to \figurename~\ref{fig:AL skewness and kurtosis}); 
and 2) $\kappa$ kept fixed, as $\theta$ grows from 0 to 1/3, the kurtosis increases.
This means that the curve obtained for $\theta\rightarrow 0$ acts as a lower bound.

For completeness, and in analogy with the genesis of the $t$ distribution as a normal scale mixture, in \ref{app:Gamma asymmetric Laplace distribution} we give an alternative parameterization of the UG-AL distribution using a mixing gamma distribution with shape and rate equal to $\theta/2$.

\subsection{Inverse Gaussian asymmetric Laplace distribution}
\label{subsec:Inverse Gaussian asymmetric Laplace distribution}

Let
\begin{equation}
h\left(w;\theta\right)=\sqrt{\frac{3\theta+1}{2\pi\theta w^3}}e^{-\frac{\left(w-\sqrt{3\theta+1}\right)^2}{2\theta w}},\quad w>0,
\label{eq:Reparameterized Inverse Gaussian}
\end{equation}
with $\theta>0$, be the pdf of the reparameterized inverse Gaussian distribution, introduced by \citet{Punz:Anew:2019}, with mode in $w=1$; see also \citet{Punz:Bagn:Maru:Comp:2018}.
If $W$ has the pdf in \eqref{eq:Reparameterized Inverse Gaussian}, then we compactly write $W\sim \mathcal{IG}\left(\theta\right)$.  
When the pdf in \eqref{eq:Reparameterized Inverse Gaussian} is considered as mixing density in model~\eqref{eq:asymmetric Laplace scale mixture}, the pdf of the ALSM becomes
\begin{equation}
f_{\text{ALSM}}\left(x;\mu,\beta,\kappa,\theta\right) = \frac{\sqrt{1+3\theta}}{\beta} \frac{\kappa}{1+\kappa^2} 
\begin{cases}
\displaystyle\frac{e^{\frac{\sqrt{1+3\theta}}{\theta}\left[1-\sqrt{1+2\frac{\kappa\theta}{\beta}\left(x-\mu\right)}\right]}}{\sqrt{1+2\frac{\kappa\theta}{\beta}\left(x-\mu\right)}} & \text{for $x\geq \mu$,} \\[6mm]
    \displaystyle\frac{e^{\frac{\sqrt{1+3\theta}}{\theta}\left[1-\sqrt{1+2\frac{\theta}{\beta\kappa}\left(\mu-x\right)}\right]}}{\sqrt{1+2\frac{\theta}{\beta\kappa}\left(\mu-x\right)}}  & \text{for $x < \mu$.}
  \end{cases}
\label{eq:IG-AL distribution}
\end{equation}
We call inverse Gaussian asymmetric Laplace (IG-AL) the resulting distribution.
Model~\eqref{eq:IG-AL distribution} reduces to $\mathcal{AL}\left(\mu,\beta,\kappa\right)$ when $\theta \rightarrow 0$.
Moreover, if $\kappa=1$, then we obtain the IG-Laplace distribution, proposed by \citet{Punz:Bagn:PhyA:2021}, as a special case.

When $W\sim \mathcal{IG}\left(\theta\right)$, the $r$th raw moment of $1/W$ is
\begin{equation}
\text{E}\left(\frac{1}{W^r}\right) = 
\sqrt{\frac{2}{\theta \pi}} \left(3\theta+1\right)^{\frac{1-2r}{4}} 
e^{\frac{\sqrt{3\theta+1}}{\theta}}
K_{\frac{1}{2}+r}\left(\frac{\sqrt{3\theta+1}}{\theta}\right)
,
\label{eq:IG moments 1/W}
\end{equation}
where $K_{\lambda}\left(x\right)$ is the modified Bessel function of the third kind; for details, see \cite{Abra:Steg:Hand:1965} and \cite{watson1995treatise}.
By substituting in \eqref{eq:ALSM mean}--\eqref{eq:ALSM kurt} the first four raw moments in \eqref{eq:IG moments 1/W}, mean, variance, skewness and kurtosis of the IG-AL distribution become
\begin{equation}
\text{E}\left(X\right) = \mu + \beta \left(\frac{1}{\kappa}-\kappa\right) \frac{\theta +\sqrt{3\theta + 1}}{3\theta + 1},
\label{eq:IGAL mean}
\end{equation}
\begin{equation}
\var\left(X\right) = \frac{\beta^2}{\kappa^2 (3\theta +1) } \left[\frac{2 \left(\frac{3 \theta^2}{3 \theta +1}+\frac{3 \theta }{\sqrt{3 \theta +1}}+1\right) \left(\kappa^6+1\right)}{\left(\kappa^2+1\right)}-\frac{\left(\frac{\theta }{\sqrt{3 \theta +1}}+1\right)^2 \left(1-\kappa^4\right)^2}{\left(\kappa^2+1\right)^2}\right],
\label{eq:IGAL variance}
\end{equation}

\begin{align}
\Skew\left(X\right) = &
\frac{2 \left(1-\kappa ^2\right)}{\left(\theta ^2 \left(5 \kappa ^4-4 \kappa ^2+5\right)+\theta  \left(\left(4 \sqrt{3 \theta +1}+3\right) \kappa ^4-2 \sqrt{3 \theta +1} \kappa ^2+4 \sqrt{3 \theta +1}+3\right)+\kappa ^4+1\right)^{3/2}}  \nonumber\\
& \times  \bigg\{ \theta ^3 \left(37 \kappa ^4+7 \kappa ^2+37\right)+3 \theta ^2 \left[\left(10 \sqrt{3 \theta +1}+9\right) \kappa ^4+\left(4 \sqrt{3 \theta +1}+6\right) \kappa ^2+10 \sqrt{3 \theta +1}+9\right] +  \nonumber \\
&  +3 \theta  \left[\left(\sqrt{3 \theta +1}+3\right) \kappa ^4+\left(\sqrt{3 \theta +1}+2\right) \kappa ^2+\sqrt{3 \theta +1}+3\right]+\sqrt{3 \theta +1} \left(\kappa ^4+\kappa ^2+1\right) \bigg\}
\label{eq:IGAL skewness}
\end{align}
and
\begin{align}
\Kurt\left(X\right) = &
\frac{12}{(3 \theta +1)^4 \left(\kappa ^2+1\right)^2 \left[\frac{\left(\theta +\sqrt{3 \theta +1}\right)^2 \left(\kappa ^2-1\right)^2}{(3 \theta +1)^2}-\frac{2  \left\{\sqrt{3 \theta +1}+3 \theta  \left[\sqrt{3 \theta +1}+\theta  \left(\sqrt{3 \theta +1}+3\right)+1\right]  \right\} \left(\kappa ^4-\kappa ^2+1\right)}{(3 \theta +1)^{5/2}}\right]^2} \nonumber \\
& \times 
  \bigg[2 \left(105 \theta ^4+135 \theta ^3+105 \theta ^3 \sqrt{3 \theta +1}+54 \theta ^2+10 \theta  (3 \theta +1)^{3/2}+6 \theta +1\right) \left(\kappa ^2+1\right) \left(\kappa ^{10}+1\right)  + \nonumber \\
&   +\left[3 \theta  \left(\theta +\sqrt{3 \theta +1}+1\right)+1\right]^2 \left(\kappa ^6+1\right)^2+\nonumber \\
&  -2 \left(\theta +\sqrt{3 \theta +1}\right)  \left\{\sqrt{3 \theta +1}+3 \theta  \left[\sqrt{3 \theta +1}+\theta  \left(5 \theta +5 \sqrt{3 \theta +1}+6\right)+2\right] \right\} \left(\kappa ^4+1\right) \left(\kappa ^4-1\right)^2 \bigg]-3
\label{eq:IGAL kurtosis}
\end{align}
  
\figurename~\ref{fig:IGAL skew and kurt} shows examples of behaviors of $\Skew\left(X\right)$ and $\Kurt\left(X\right)$, as functions of $\kappa$, at various levels of $\theta$.
\begin{figure}[!ht]
\centering
\subfigure[Skewness \label{fig:IGALskew}]
{\resizebox{0.48\textwidth}{!}{\includegraphics{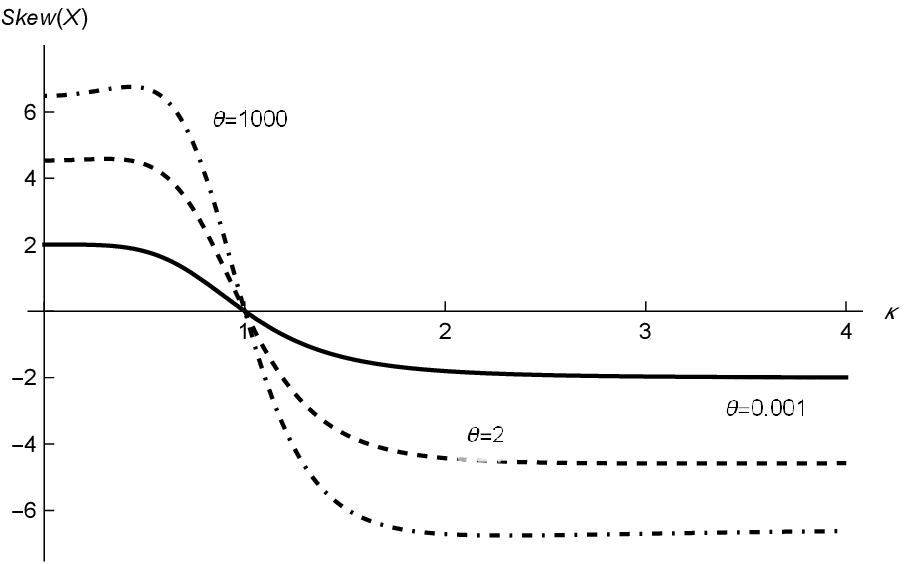}}}
\subfigure[Kurtosis\label{fig:IGALkurt}]
{\resizebox{0.48\textwidth}{!}{\includegraphics{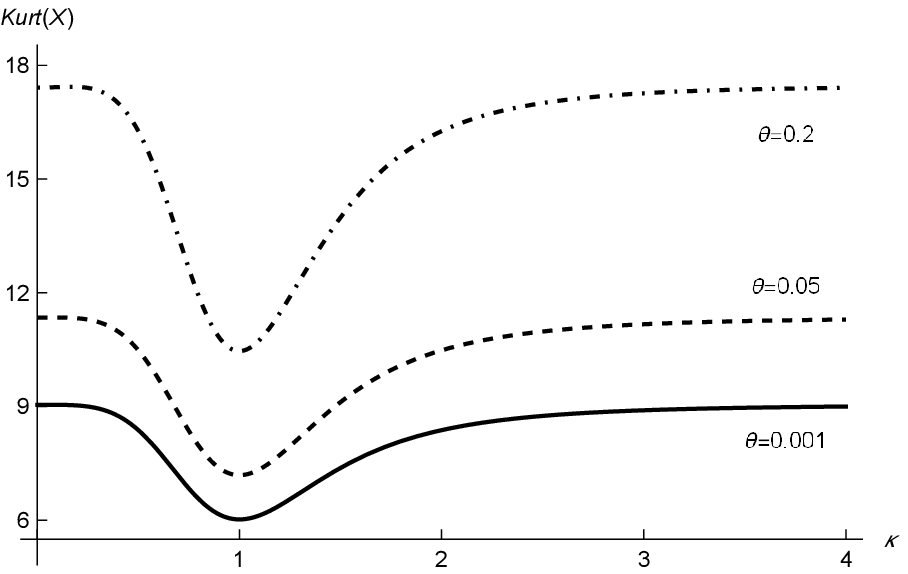}}}
\caption{
\footnotesize 
Examples of behavior of $\Skew\left(X\right)$ (on the left) and $\Kurt\left(X\right)$ (on the right), as functions of $\kappa$, at various levels of $\theta$ for the IG-AL distribution.
\label{fig:IGAL skew and kurt}
}
\end{figure}
From \figurename~\ref{fig:IGALskew} we realize that: 1) small values of $\theta$ tend to produce the plot of $\Skew\left(X\right)$ we obtained for $\mathcal{AL}\left(\mu,\beta,\kappa\right)$ (refer to \figurename~\ref{fig:AL skewness and kurtosis}); and 2) as $\theta$ grows, the range of possible values of $\Skew\left(X\right)$ increases.
Similarly, from \figurename~\ref{fig:IGALkurt} we realize that: 1) small values of $\theta$ tend to produce the plot of $\Kurt\left(X\right)$ we obtained for $\mathcal{AL}\left(\mu,\beta,\kappa\right)$ (refer to \figurename~\ref{fig:AL skewness and kurtosis}); 
and 2) $\kappa$ kept fixed, as $\theta$ grows, the kurtosis increases.
This means that the curve obtained for $\theta\rightarrow 0$ acts as a lower bound.


\subsection{Power-function asymmetric Laplace distribution}
\label{subsec:Power-function asymmetric Laplace distribution}

Let
\begin{equation}
h\left(w;\theta\right)=\theta w^{\theta-1},\quad 0 < w < 1,
\label{eq:Beta}
\end{equation}
with shape parameter $\theta>0$, be the pdf of the (standard) power-function distribution, in symbols $W\sim\mathcal{PF}\left(\theta\right)$, special case of the beta distribution $\mathcal{B}\left(\theta,1\right)$; see \citet[][Chapter~25]{John:Kotz:cont2:1970} and \citet{Ahsa:CJS:1974}. 
When the pdf in \eqref{eq:Beta} is considered as mixing density in model~\eqref{eq:asymmetric Laplace scale mixture}, the pdf of the ALSM simplifies as
\begin{equation}
f_{\text{ALSM}}\left(x;\mu,\beta,\kappa,\theta\right) = \frac{\theta}{\beta} \frac{\kappa}{1+\kappa^2} 
\begin{cases}
\left[\frac{\kappa}{\beta}\left(x-\mu\right)\right]^{-\left(\theta+1\right)}\gamma\left[\theta+1,\frac{\kappa}{\beta}\left(x-\mu\right)\right] 
& \text{for $x\geq \mu$}, \\[5mm]
\left[\frac{1}{\beta\kappa}\left(\mu-x\right)\right]^{-\left(\theta+1\right)}\gamma\left[\theta+1,\frac{1}{\beta\kappa}\left(\mu-x\right)\right]  & \text{for $x < \mu$},
  \end{cases}
\label{eq:PFAL distribution}
\end{equation}
where $\gamma\left(s,x\right)=\int_0^xt^{s-1}e^{-t}dt$ is the lower incomplete gamma function.
We refer to the resulting distribution as power-function asymmetric Laplace (PF-AL).
Model~\eqref{eq:PFAL distribution} reduces to $\mathcal{AL}\left(\mu,\beta,\kappa\right)$ when $\theta \rightarrow \infty$.
Moreover, if $\kappa=1$, then we obtain the PF-Laplace distribution, proposed by \citet{Punz:Bagn:PhyA:2021}, as a special case.

When $W\sim\mathcal{PF}\left(\theta\right)$, the $r$th raw moment of $1/W$ -- to be substituted in \eqref{eq:ALSM mean}--\eqref{eq:ALSM kurt} to obtain mean, variance, skewness and kurtosis of the PF-AL distribution -- is
\begin{equation}
\text{E}\left(\frac{1}{W^r}\right) = 
 \frac{\theta}{\theta - r}
, 
\label{eq:PF moments 1/W}
\end{equation}
which exists for $\theta>r$.
Thanks to \eqref{eq:PF moments 1/W}, we have
\begin{equation}
\text{E}\left(X\right) = \mu + \beta \left(\frac{1}{\kappa}-\kappa\right) \frac{\theta}{\theta - 1},
\label{eq:PFAL mean}
\end{equation}
which exists if $\theta>1$,
\begin{equation}
\var\left(X\right) = \frac{\beta^2 \theta  \left[(\theta -2) \theta  \left(\kappa^4+1\right)+2 \left(\kappa^4-\kappa^2+1\right)\right]}{(\theta -2) (\theta -1)^2 \kappa^2},
\label{eq:PFAL variance}
\end{equation}
which exists if $\theta>2$, 
\begin{equation}
\Skew\left(X\right) = 
-\frac{
2 \sqrt{\frac{\theta -2}{\theta }} \left(\kappa^2-1\right) \left\{\left[(\theta -2) \theta  ((\theta -3) \theta +6)+6\right] \kappa^4+(\theta -3) \theta  ((\theta -2) \theta +3) \kappa^2+(\theta -2) \theta  ((\theta -3) \theta +6)+6\right\}
}{
(\theta -3) \left[(\theta -2) \theta  \left(\kappa^4+1\right)+2 \left(\kappa^4-\kappa^2+1\right)\right]^{3/2}
}
\label{eq:PFAL skewness}
\end{equation}
which exists if $\theta>3$, and 
\begin{equation}
\Kurt\left(X\right) = 
\frac{12 (\theta -2)^2 (\theta -1)^4 \left[\frac{\theta  \left(\kappa ^6+1\right)^2}{(\theta -2)^2}-\frac{2 \theta  \left(\kappa ^4+1\right) \left(\kappa ^4-1\right)^2}{(\theta -3) (\theta -1)}+\frac{2 \left(\kappa ^2+1\right) \left(\kappa ^{10}+1\right)}{\theta -4}\right]}{\theta  \left(\kappa ^2+1\right)^2 \left[(\theta -2) \theta  \left(\kappa ^4+1\right)+2 \left(\kappa ^4-\kappa ^2+1\right)\right]^2}-3
,
\label{eq:PFAL kurtosis}
\end{equation}
which exists if $\theta>4$.
 
\figurename~\ref{fig:PFAL skew and kurt} shows examples of behaviors of $\Skew\left(X\right)$ and $\Kurt\left(X\right)$, as functions of $\kappa$, at various levels of $\theta$.
\begin{figure}[!ht]
\centering
\subfigure[Skewness \label{fig:PFALskew}]
{\resizebox{0.48\textwidth}{!}{\includegraphics{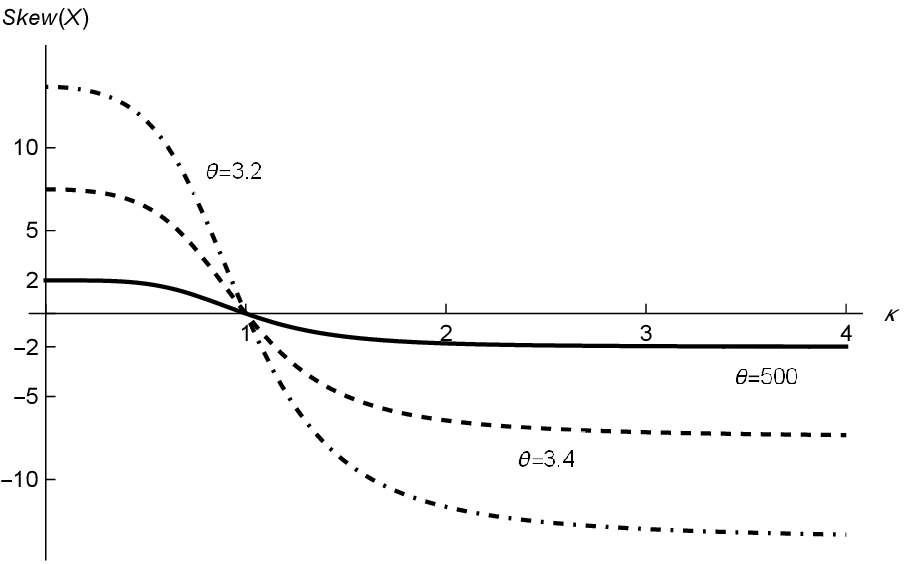}}}
\subfigure[Kurtosis\label{fig:PFALkurt}]
{\resizebox{0.48\textwidth}{!}{\includegraphics{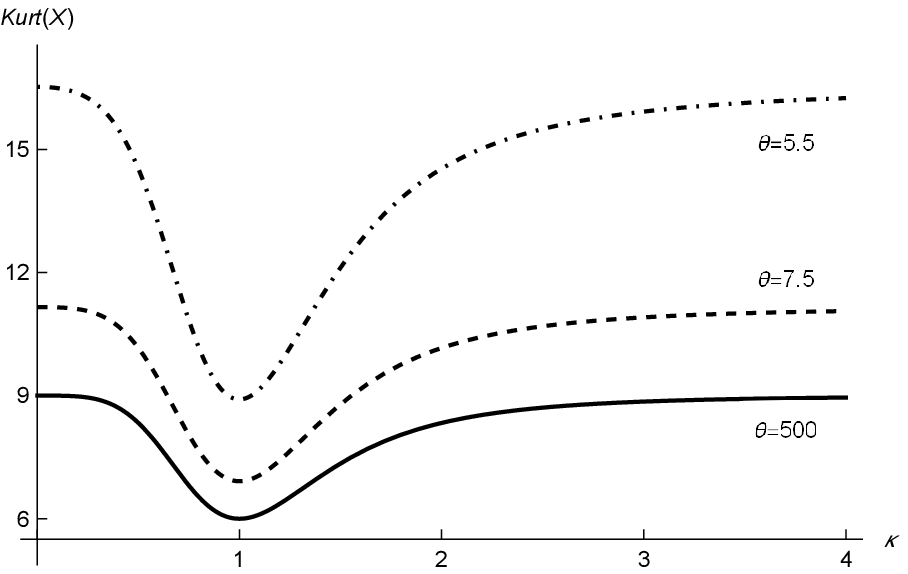}}}
\caption{
\footnotesize 
Examples of behavior of $\Skew\left(X\right)$ (on the left) and $\Kurt\left(X\right)$ (on the right), as functions of $\kappa$, at various levels of $\theta$ for the PF-AL distribution.
\label{fig:PFAL skew and kurt}
}
\end{figure}
From \figurename~\ref{fig:PFALskew} we realize that: 1) large values of $\theta$ tend to produce the plot of $\Skew\left(X\right)$ we obtained for $\mathcal{AL}\left(\mu,\beta,\kappa\right)$ (refer to \figurename~\ref{fig:AL skewness and kurtosis}); and 2) as $\theta$ decreases (to a minimum value of 3), the range of possible values of $\Skew\left(X\right)$ increases.
Moreover, to ``significantly'' modify the behavior of $\Skew\left(X\right)$ with respect to the $\mathcal{AL}\left(\mu,\beta,\kappa\right)$ case, we need low values of $\theta$ (in the examples, $\theta=3.4$ and $\theta=3.2$). 
Similarly, from \figurename~\ref{fig:PFALkurt} we realize that: 1) large values of $\theta$ tend to produce the plot of $\Kurt\left(X\right)$ we obtained for $\mathcal{AL}\left(\mu,\beta,\kappa\right)$ (refer to \figurename~\ref{fig:AL skewness and kurtosis}); 
and 2) $\kappa$ kept fixed, the lower the value $\theta$ (with a minimum of 4), the higher the kurtosis.
This means that the curve obtained for $\theta\rightarrow \infty$ acts as a lower bound.
Also in this case, to ``significantly'' modify the behavior of $\Kurt\left(X\right)$ with respect to the $\mathcal{AL}\left(\mu,\beta,\kappa\right)$ case, we need low values of $\theta$ (in the examples, $\theta=7.5$ and $\theta=5.5$). 

\subsection{Pareto asymmetric Laplace distribution}
\label{subsec:Pareto asymmetric Laplace distribution}

Let
\begin{equation}
h\left(w;\theta\right)=\frac{\theta}{x^{\theta+1}},\quad w > 1,
\label{eq:Pareto}
\end{equation}
with $\theta>0$, be the pdf of a Pareto distribution with unitary scale and shape $\theta$; in symbols, $W\sim\mathcal{P}\left(\theta\right)$. 
When the pdf in \eqref{eq:Pareto} is considered as mixing density in model~\eqref{eq:asymmetric Laplace scale mixture}, the pdf of the ALSM simplifies as
\begin{equation}
f_{\text{ALSM}}\left(x;\mu,\beta,\kappa,\theta\right) = \frac{\theta}{\beta} \frac{\kappa}{1+\kappa^2} 
\begin{cases}
 E_{\theta}\left[\frac{\kappa}{\beta}\left(x-\mu\right)\right]
& \text{for $x\geq \mu$,} \\[5mm]
E_{\theta}\left[\frac{1}{\beta\kappa}\left(\mu-x\right)\right]  & \text{for $x < \mu$.}
  \end{cases}
\label{eq:PAL distribution}
\end{equation}
We refer to the resulting distribution as Pareto asymmetric Laplace (P-AL).
Model~\eqref{eq:PAL distribution} reduces to $\mathcal{AL}\left(\mu,\beta,\kappa\right)$ when $\theta \rightarrow \infty$.

When $W\sim\mathcal{P}\left(\theta\right)$, the $r$th raw moment of $1/W$ is
\begin{equation}
\text{E}\left(\frac{1}{W^r}\right) = 
 \frac{\theta}{\theta+r}. 
\label{eq:PAL moments 1/W}
\end{equation}
By substituting in \eqref{eq:ALSM mean}--\eqref{eq:ALSM kurt} the first four raw moments in \eqref{eq:PAL moments 1/W}, mean, variance, skewness and kurtosis of the P-AL distribution become
\begin{equation}
\text{E}\left(X\right) = \mu + \beta \left(\frac{1}{\kappa}-\kappa\right) \frac{\theta}{1+\theta},
\label{eq:PAL mean}
\end{equation}
\begin{equation}
\var\left(X\right) = \frac{\beta^2 \theta  \left[\left(\theta + 2\right) \left(1+\kappa^4\right)  -2 \kappa^2  \right]}{(\theta+1)^2 \kappa^2},
\label{eq:PAL variance}
\end{equation}
\begin{align}
\Skew\left(X\right)  =& 
\frac{1}{(\theta +3) \left[\theta ^2 \left(\kappa ^4+1\right)+2 \theta  \left(\kappa ^4+1\right)+2 \left(\kappa ^4-\kappa ^2+1\right)\right]^{3/2}}
 \Bigg\{ 2 \sqrt{\frac{\theta +2}{\theta }} \left(1-\kappa^2\right) \bigg[\theta ^4 \left(\kappa ^4+\kappa ^2+1\right) +   \nonumber \\
&   +5 \theta ^3 \left(\kappa ^4+\kappa ^2+1\right) +3 \theta ^2 \left(4 \kappa ^4+3 \kappa ^2+4\right)+3 \theta  \left(4 \kappa ^4+3 \kappa ^2+4\right)+6 \left(\kappa ^4+1\right) \bigg] \Bigg\}
\label{eq:PAL skewness}
\end{align}
and
\begin{equation}
\Kurt\left(X\right) = 
\frac{12 (\theta +1)^4 (\theta +2)^2 \left[\frac{\theta  \left(\kappa ^6+1\right)^2}{(\theta +2)^2}-\frac{2 \theta  \left(\kappa ^4+1\right) \left(\kappa ^4-1\right)^2}{(\theta +1) (\theta +3)}+\frac{2 \left(\kappa ^2+1\right) \left(\kappa ^{10}+1\right)}{\theta +4}\right]}{\theta  \left[\theta ^2 \left(\kappa ^2+1\right) \left(\kappa ^4+1\right)+2 \theta  \left(\kappa ^2+1\right) \left(\kappa ^4+1\right)+2 \kappa ^6+2\right]^2}-3
\label{eq:PAL kurtosis}
\end{equation}

\figurename~\ref{fig:PAL skew and kurt} shows examples of behaviors of $\Skew\left(X\right)$ and $\Kurt\left(X\right)$, as functions of $\kappa$, at various levels of $\theta$.
\begin{figure}[!ht]
\centering
\subfigure[Skewness \label{fig:PALskew}]
{\resizebox{0.48\textwidth}{!}{\includegraphics{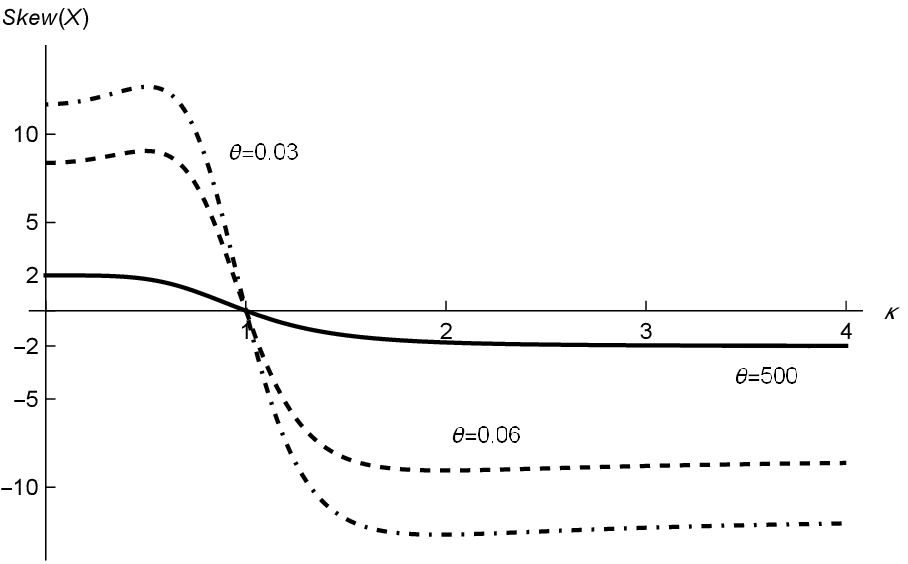}}}
\subfigure[Kurtosis\label{fig:PALkurt}]
{\resizebox{0.48\textwidth}{!}{\includegraphics{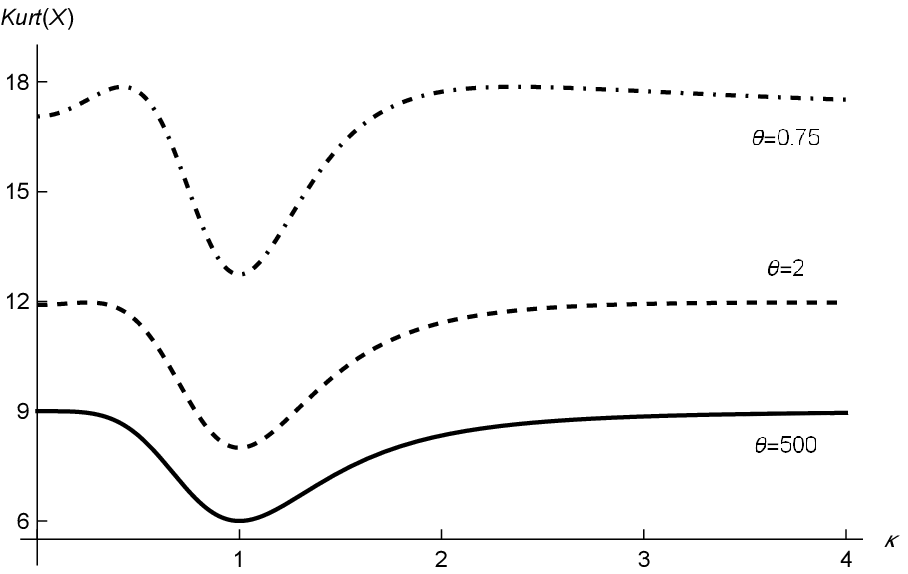}}}
\caption{
\footnotesize 
Examples of behavior of $\Skew\left(X\right)$ (on the left) and $\Kurt\left(X\right)$ (on the right), as functions of $\kappa$, at various levels of $\theta$ for the P-AL distribution.
\label{fig:PAL skew and kurt}
}
\end{figure}
From \figurename~\ref{fig:PALskew} we realize that: 1) a large value of $\theta$ produces, as a limiting case, the plot of $\Skew\left(X\right)$ we obtained for $\mathcal{AL}\left(\mu,\beta,\kappa\right)$ (refer to \figurename~\ref{fig:AL skewness and kurtosis}); and 2) as $\theta$ decreases, the range of possible values of $\Skew\left(X\right)$ increases.
Moreover, to ``significantly'' modify the behavior of $\Skew\left(X\right)$ with respect to the $\mathcal{AL}\left(\mu,\beta,\kappa\right)$ case, we need low values of $\theta$ (in the examples, $\theta=0.06$ and $\theta=0.03$). 
Similarly, from \figurename~\ref{fig:PALkurt} we realize that: 1) a large value of $\theta$ produces, as a limiting case, the plot of $\Kurt\left(X\right)$ we obtained for $\mathcal{AL}\left(\mu,\beta,\kappa\right)$ (refer to \figurename~\ref{fig:AL skewness and kurtosis}); 
and 2) $\kappa$ kept fixed, the lower the value $\theta$, the higher the kurtosis.
This means that the curve obtained for $\theta\rightarrow \infty$ acts as a lower bound.
Also in this case, to ``significantly'' modify the behavior of $\Kurt\left(X\right)$ with respect to the $\mathcal{AL}\left(\mu,\beta,\kappa\right)$ case, we need low values of $\theta$ (in the examples, $\theta=2$ and $\theta=0.75$).

\subsection{Uniform asymmetric Laplace distribution}
\label{subsec:Uniform asymmetric Laplace distribution}

When a uniform distribution on $\left(1-\theta,1\right)$, $0<\theta<1$, is chosen as mixing distribution, in symbols $W\sim \mathcal{U}_{\left(1-\theta,1\right)}\left(\theta\right)$, the pdf of the ALSM becomes
\begin{equation}
f_{\text{ALSM}}\left(x;\mu,\beta,\kappa,\theta\right) = 
\frac{\kappa}{\theta\left(1+\kappa^2\right)\left(x-\mu\right)^2}
\begin{cases}
 \frac{1}{\kappa} e^{\frac{\kappa (\theta -1) ( x - \mu)}{\beta  }}  \left\{\left(\frac{\beta}{\kappa}+x-\mu\right)\left[1- e^{-\frac{\kappa \theta (x-\mu )}{\beta  }}  \right] -\theta\left(x-\mu\right)\right\} & \text{for $x\geq \mu$,}
 \\[5mm]
  \kappa   e^{\frac{ (\theta -1) (\mu -x)}{\beta \kappa }}  \left\{\left( \beta \kappa +\mu -x\right)\left[1- e^{-\frac{ \theta (\mu - x )}{\beta \kappa }}  \right] -\theta\left(\mu - x \right)\right\} 
  & \text{for $x < \mu$.}
  \end{cases}
\label{eq:U-AL distribution}
\end{equation}
We will refer to the resulting distribution as uniform asymmetric Laplace (U-AL) or, in analogy with \citet{Punz:Bagn:JSCS:2021}, as tail-inflated asymmetric Laplace distribution.
Model~\eqref{eq:U-AL distribution} reduces to $\mathcal{AL}\left(\mu,\beta,\kappa\right)$ when $\theta \rightarrow 0$.

When $W\sim \mathcal{U}_{\left(1-\theta,1\right)}\left(\theta\right)$, the $r$th raw moment of $1/W$ is
\begin{equation}
\text{E}\left(\frac{1}{W^r}\right) =
\begin{cases}
 -\displaystyle\frac{\log\left(1-\theta\right)}{\theta}
	& \text{for $r = 1$,} \\[2mm]
	\displaystyle\frac{\left(1-\theta\right)^{1-r}-1}{\theta\left(r-1\right)}
  & \text{for $r \in \{2,3,\ldots\}$.}
  \end{cases}
\label{eq:U moments 1/W}
\end{equation}
By substituting in \eqref{eq:ALSM mean}--\eqref{eq:ALSM kurt} the first four raw moments in \eqref{eq:U moments 1/W}, mean, variance, skewness and kurtosis of the U-AL distribution become
\begin{equation}
\text{E}\left(X\right) = \mu - \beta\left(\frac{1}{\kappa}-\kappa\right)\frac{\log(1-\theta )}{\theta},
\label{eq:UAL mean}
\end{equation}
\begin{equation}
\var\left(X\right) = -\frac{\beta^2}{\theta^2 \kappa^2} \log (1-\theta ) \left[\left(\kappa^2-1\right)^2 \log (1-\theta )+2 \theta  \left(\kappa^4-\kappa^2+1\right)\right] ,
\label{eq:UAL variance}
\end{equation}
\begin{equation}
\Skew\left(X\right) = 
\frac{3 \left[\frac{1}{(\theta -1)^2}-1\right] \theta ^2 \left(\kappa ^2+1\right)^2 \left(1-\kappa ^8\right)+\frac{6 \theta ^2 \left(\kappa ^2+1\right) \left(1-\kappa ^4\right) \left(\kappa ^6+1\right) \log (1-\theta )}{1-\theta }+2 \left(\kappa ^4-1\right)^3 \log ^3(1-\theta )}{\theta ^3 \left(\kappa ^2+1\right)^3 \left[-\frac{\left(\kappa ^2-1\right)^2 \log ^2(1-\theta )}{\theta ^2}-\frac{2 \left(\kappa ^4-\kappa ^2+1\right)}{\theta -1}\right]^{3/2}}
\label{eq:UAL skewness}
\end{equation}
and
\begin{equation}
\Kurt\left(X\right) = 
\frac{4  \left\{\frac{3 \theta  \left(\kappa ^6+1\right)^2}{(\theta -1)^2}-\frac{3 (\theta -2) \left(\kappa ^4+1\right) \left(\kappa ^4-1\right)^2 \log (1-\theta )}{(\theta -1)^2}+2 \left[\frac{1}{(1-\theta )^3}-1\right] \left(\kappa ^2+1\right) \left(\kappa ^{10}+1\right) \right\}}{\theta  \left(\kappa ^2+1\right)^2 \left[\frac{\left(\kappa ^2-1\right)^2 \log ^2(1-\theta )}{\theta ^2}+\frac{2 \left(\kappa ^4-\kappa ^2+1\right)}{\theta -1}\right]^2}-3
\label{eq:UAL kurtosis}
\end{equation}
 
\figurename~\ref{fig:UAL skew and kurt} shows examples of behaviors of $\Skew\left(X\right)$ and $\Kurt\left(X\right)$, as functions of $\kappa$, at various levels of $\theta$.
\begin{figure}[!ht]
\centering
\subfigure[Skewness \label{fig:UALskew}]
{\resizebox{0.48\textwidth}{!}{\includegraphics{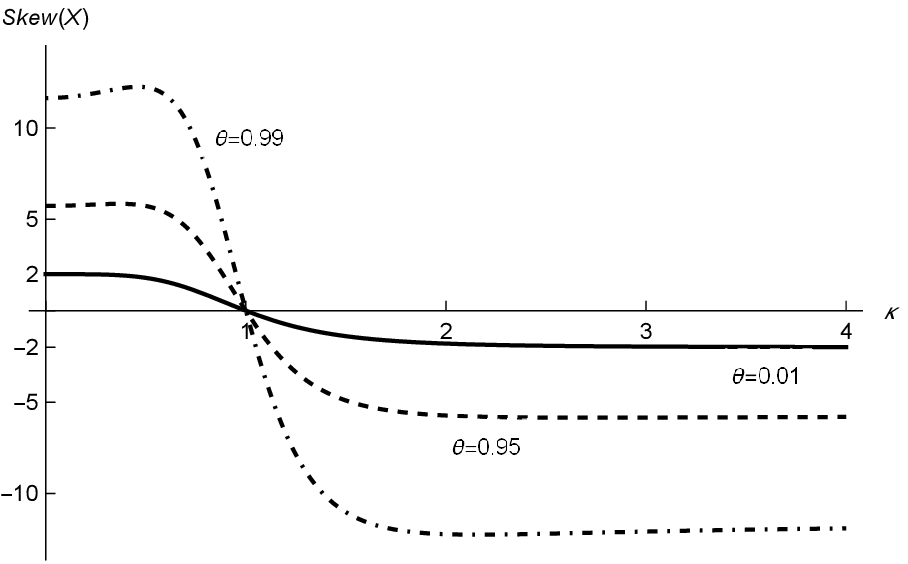}}}
\subfigure[Kurtosis\label{fig:UALkurt}]
{\resizebox{0.48\textwidth}{!}{\includegraphics{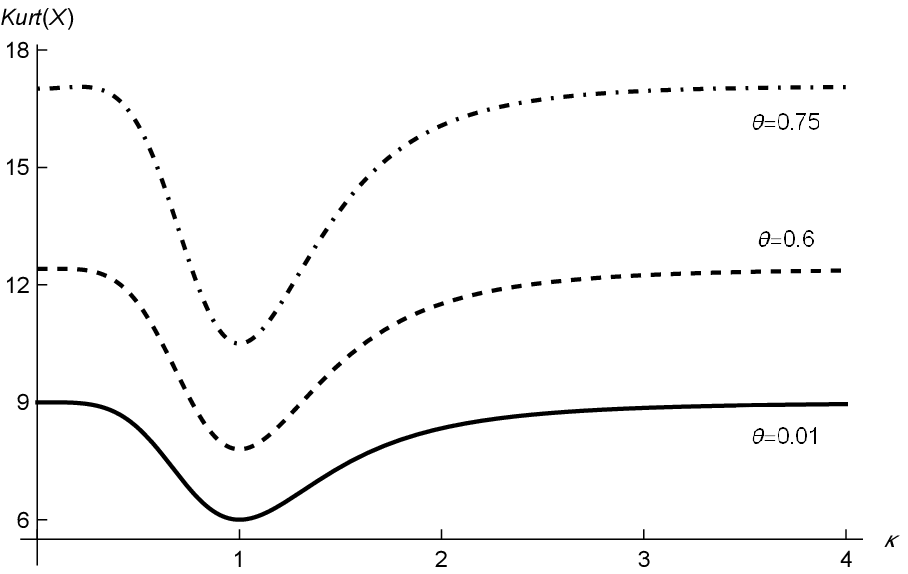}}}
\caption{
\footnotesize 
Examples of behavior of $\Skew\left(X\right)$ (on the left) and $\Kurt\left(X\right)$ (on the right), as functions of $\kappa$, at various levels of $\theta$ for the U-AL distribution.
\label{fig:UAL skew and kurt}
}
\end{figure}
From \figurename~\ref{fig:UALskew} we realize that: 1) small values of $\theta$ tend to produce the plot of $\Skew\left(X\right)$ we obtained for $\mathcal{AL}\left(\mu,\beta,\kappa\right)$ (refer to \figurename~\ref{fig:AL skewness and kurtosis}); and 2) as $\theta$ grows, the range of possible values of $\Skew\left(X\right)$ increases.
Moreover, to ``significantly'' modify the behavior of $\Skew\left(X\right)$ with respect to the $\mathcal{AL}\left(\mu,\beta,\kappa\right)$ case, we need values of $\theta$ close to 1 (in the examples, $\theta=0.95$ and $\theta=0.99$). 
Similarly, from \figurename~\ref{fig:UALkurt} we realize that: 1) small values of $\theta$ tend to produce the plot of $\Kurt\left(X\right)$ we obtained for $\mathcal{AL}\left(\mu,\beta,\kappa\right)$ (refer to \figurename~\ref{fig:AL skewness and kurtosis}); 
and 2) $\kappa$ kept fixed, as $\theta$ grows, the kurtosis increases.
This means that the curve obtained for $\theta\rightarrow 0$ acts as a lower bound.

\section{Maximum likelihood estimation}
\label{sec:Maximum likelihood estimation}

Several estimators of the parameters of the ALSMs may be considered.
Among them, maximum likelihood (ML) estimators are most attractive because of their large sample properties.

Given a sample $x_1,\ldots,x_n$ from the pdf in \eqref{eq:asymmetric Laplace scale mixture}, the log-likelihood function 
of the ALSM is
\begin{equation}
l\left(\mu,\beta,\kappa,\btheta\right) = \sum_{i=1}^n \ln\left[f_{\text{ALSM}}\left(x_i;\mu,\beta,\kappa,\btheta\right)\right].
\label{eq:compound loglik}
\end{equation}
The first order partial derivatives of \eqref{eq:compound loglik}, with respect to $\left(\mu,\beta,\kappa,\btheta'\right)'$, are
\begin{equation*}
l'\left(\mu,\beta,\kappa,\btheta\right)=\sum_{i=1}^n\frac{\partial}{\partial \left(\mu,\beta,\kappa,\btheta'\right)'} \ln\left[f_{\text{ALSM}}\left(x_i;\mu,\beta,\kappa,\btheta\right)\right]. 
\label{eq:loglik first derivatives}
\end{equation*}
The values of $\mu$, $\beta$, $\kappa$, and $\btheta$ that maximize $l\left(\mu,\beta,\kappa,\btheta\right)$ are the ML estimates $\widehat{\mu}$, $\widehat{\beta}$, $\widehat{\kappa}$, and $\widehat{\btheta}$ and satisfy the condition $l'\left(\mu,\beta,\kappa,\btheta\right) = \bzero$.
 
The ML fitting of most of the models in our family is simplified considerably by the application of the expectation-maximization (EM) algorithm \citep{Demp:Lair:Rubi:Maxi:1977} which is, indeed, the classical approach to find ML estimates for parameters of distributions which are defined as a mixture.
The TP-AL and the U-AL distributions are the exceptions; indeed, they require convenient variants of the EM algorithm.
For the TP-AL distribution, in analogy with other contaminated models available in the literature (\citealp{Punz:McNi:Robu:2016}, \citealp{Punz:Anew:2019}, \citealp{Morr:Punz:McNi:Brow:Asym:2019}, \citealp{Toma:Punz:Heav:2020} and \citealp{Punz:Tort:SM:2021}), we illustrate in \ref{app:ECM algorithm for the two-point asymmetric Laplace distribution} the application of the expectation-conditional maximization (ECM) algorithm \citep{Meng:Rubin:Maxi:1993}, a well-known extension of the EM algorithm \cite[see][Chapter~5, for details]{McLa:Kris:TheE:2007}.
The ECM algorithm replaces the M-step of the EM algorithm by a number of computationally simpler conditional maximization (CM) steps.
The expectation-conditional maximization either (ECME) algorithm \citep{Liu:Rubi:TheE:1994} generalizes the ECM algorithm by conditionally maximizing on some or all of the CM-steps the incomplete-data log-likelihood.
We illustrate this algorithm, to fit the U-AL distribution, in \ref{app:ECME algorithm for the uniform asymmetric Laplace distribution}.

For the application of the EM algorithm and, more in general, of any of its variant, it is convenient to view the observed data as incomplete.
The complete data are $\left(x_1,w_1\right),\ldots,\left(x_n,w_n\right)$, where the missing variables $w_1,\ldots,w_n$ are defined so that 
\begin{equation}
X_i | W_i=w_i \sim \mathcal{AL}\left(\mu, \beta/w_i, \kappa\right),
\nonumber \label{eq:complete data X}
\end{equation}
independently for $i = 1, \ldots , n$, and
\begin{equation}
W_i \stackrel{\text{i.i.d.}}{\sim} \mathcal{H}\left(\btheta\right).
\nonumber \label{eq:complete data W}
\end{equation} 
Because of this conditional structure, the complete-data likelihood function $L_c\left(\mu,\beta,\kappa,\btheta\right)$ can be factored into the product of the conditional densities of $X_i$ given the $w_i$ and the joint marginal densities of $W_i$, i.e.
\begin{equation}
L_c\left(\mu,\beta,\kappa,\btheta\right) = \prod_{i=1}^n f_{\text{AL}}\left(x_i;\mu, \beta/w_i, \kappa \right) h\left(w_i;\btheta\right). 
\nonumber \label{eq:ALSM complete-data likelihood}
\end{equation}
Accordingly, the complete-data log-likelihood function can be written as
\begin{equation}
l_c\left(\mu,\beta,\kappa,\btheta\right) = 
l_{1c}\left(\mu,\beta,\kappa\right) + l_{2c}\left(\btheta\right) ,
\label{eq:ALSM complete-data log-likelihood}
\end{equation}
where 
\begin{equation}
l_{1c}\left(\mu,\beta,\kappa\right) = 
\sum_{i=1}^n \log\left(w_i\right)
-n \log\left(\beta\right)
+ n \log\left(\frac{\kappa}{1+\kappa^2}\right)
-\frac{1}{\beta}\left[\kappa \sum_{i=1}^n \left(x_i-\mu\right)^+w_i+\frac{1}{\kappa}\sum_{i=1}^n \left(x_i-\mu\right)^-w_i\right]
\label{eq:ALSM complete-data log-likelihood mu, Gamma and Sigma}
\end{equation}
and
\begin{equation} 
l_{2c}\left(\btheta\right) =   \sum_{i=1}^n \log\left[h\left(w_{i};\btheta\right)\right], 
\label{eq:ALSM complete-data log-likelihood vartheta}
\end{equation}
with
\begin{equation*}
  \left(x_i-\mu\right)^+ =  
	\begin{cases}
      x_i-\mu & x_i \geq \mu\\
      0 & x_i < \mu
    \end{cases} \, \,
\end{equation*}
and
\begin{equation*}
  \left(x_i-\mu\right)^- =  
	\begin{cases}
      \mu - x_i & x_i \leq \mu\\
      0 & x_i > \mu
    \end{cases} \, .
\end{equation*}
So, while $l_{1c}\left(\mu,\beta,\kappa\right)$ is shared by all the ALSMs, $l_{2c}\left(\btheta\right)$ is distribution-dependent and -- by only focusing on the distributions for which the EM algorithm is considered -- is given by  
\begin{equation} 
	l_{2c}\left(\theta\right) = n \log(\theta) - \theta \sum_{i=1}^n (w_i-1)
	\label{eq:complete-data log-likelihood shifted}
\end{equation}
for the SE-AL distribution,
\begin{equation} 
l_{2c}\left(\theta\right) = \frac{1}{\theta} \sum_{i=1}^n \log w_i - \frac{1}{\theta}  \sum_{i=1}^n w_i 
-n\left(\frac{1}{\theta}+1\right)\log \theta - n\log \left[\Gamma\left(\frac{1}{\theta}+1\right)\right]
	\label{eq:complete-data log-likelihood unimodal gamma}
\end{equation}
for the UG-AL distribution,
\begin{equation} 
	l_{2c}\left(\theta\right) = -\frac{n}{2} \log\left(\frac{3\theta + 1}{2 \pi \theta}\right) - 
	\frac{3}{2} \sum_{i=1}^n \log w_i 
	- \frac{1}{2\theta} \left[ \sum_{i=1}^n w_i
	- 2n\sqrt{3\theta+1}
	+ (3\theta+1) \sum_{i=1}^n \frac{1}{w_i} \right]
	\label{eq:complete-data log-likelihood inverse Gaussian}
\end{equation}
for the IG-AL distribution, 
\begin{equation} 
l_{2c}\left(\theta\right) = n\log \theta + (\theta-1)  \sum_{i=1}^n \log w_i
	\label{eq:complete-data log-likelihood beta}
\end{equation}
for the PF-AL distribution, and 
\begin{equation} 
l_{2c}\left(\theta\right) = n\log \theta 
-  \left(\theta + 1 \right)\sum_{i=1}^n \log w_i
	\label{eq:complete-data log-likelihood pareto}
\end{equation}
for the PAR-AL distribution.

Working on $l_c\left(\mu,\beta,\kappa,\btheta\right)$, the EM algorithm iterates between two steps, one E-step and one M-step, until convergence.
We detail these steps below for a generic iteration of the algorithm, as well as for all the considered ALSMs apart from the TP-AL and U-AL distributions (see \ref{app:ECM algorithm for the two-point asymmetric Laplace distribution} and \ref{app:ECME algorithm for the uniform asymmetric Laplace distribution}).
As in \citet{melnykov2018model,melnykov2019studying}, quantities/parameters marked with one dot will correspond to the previous iteration and those marked with
two dots will represent the estimates at the current iteration.


\subsection{E-step}
\label{subsec:E-step}

The E-step requires the calculation of 
\begin{equation}
Q\left(\mu,\beta,\kappa,\btheta|\dot{\mu},\dot{\beta},\dot{\kappa},\dot{\btheta}\right) = Q_1\left(\mu,\beta,\kappa|\dot{\mu},\dot{\beta},\dot{\kappa},\dot{\btheta}\right) + Q_2\left(\btheta|\dot{\mu},\dot{\beta},\dot{\kappa},\dot{\btheta}\right),
\label{eq:Q}
\end{equation}
the conditional expectation of $l_c\left(\mu,\beta,\kappa,\btheta\right)$ given the observed data $x_1,\ldots,x_n$, using $\left\{\dot{\mu},\dot{\beta},\dot{\kappa},\dot{\btheta}\right\}$ for $\left\{\mu,\beta,\kappa,\btheta\right\}$.
In \eqref{eq:Q} the two terms on the right-hand side are ordered as the two terms on the right-hand side of \eqref{eq:ALSM complete-data log-likelihood}.
As well-explained in \citet[][p.~82]{McNe:Frey:Embr:Quan:2005}, in order to compute $Q\left(\mu,\beta,\kappa,\btheta|\dot{\mu},\dot{\beta},\dot{\kappa},\dot{\btheta}\right)$ we need to replace any function $g\left(W_{i}\right)$ of the latent mixing variables which arise in \eqref{eq:ALSM complete-data log-likelihood mu, Gamma and Sigma} and \eqref{eq:ALSM complete-data log-likelihood vartheta} by the quantities $\dot{\text{E}}
\left[g\left(W_{i}\right)|X_i=x_i\right]$, where the expectation, as highlighted by the superscript, is taken using $\left\{\dot{\mu},\dot{\beta},\dot{\kappa},\dot{\btheta}\right\}$ for $\left\{\mu,\beta,\kappa,\btheta\right\}$, $i=1,\ldots,n$.
To calculate these expectations we can observe that the conditional pdf of $W_{i}|X_i=x_i$ satisfies $f\left(w_{i}|x_i;\mu,\beta,\kappa,\btheta\right) \propto f\left(w_{i},x_i;\mu,\beta,\kappa,\btheta\right)$, up to some constant of proportionality.

Below we detail the E-step for the considered ALSMs.

\subsubsection{Shifted exponential asymmetric Laplace distribution}
\label{subsub:shifted}

For the SE-AL distribution we have 
\begin{align}
f\left(w_{i}|x_i;\mu,\beta,\kappa,\theta\right) 
& \propto  f\left(w_{i},x_i;\mu,\beta,\kappa,\theta\right) \nonumber \\
& \propto \frac{1}{\Gamma\left(2,\delta\left(x_i;\mu,\beta,\kappa\right)+\theta \right)} f_{\text{G}}\left(w_i;2,\delta\left(x_i;\mu,\beta,\kappa\right)+\theta\right) ,
\label{eq:posterior_shifted}
\end{align} 
where  
\begin{equation*}
  \delta\left(x_i;\mu,\beta,\kappa\right) =  
	\begin{cases}
      \frac{\kappa}{\beta} \left(x_i - \mu\right)  & x_i \geq \mu\\
      \frac{1}{\kappa\beta} \left(\mu -x_i  \right) & x_i < \mu
    \end{cases} \, \,
		\label{eq:delta}
\end{equation*}
and
$$
f_{\text{G}}\left(w;\alpha,\beta\right)=\frac{\beta^{\alpha}}{\Gamma\left(\alpha\right)}w^{\alpha-1}\exp\left(-\beta w\right)
$$ 
denotes the pdf of a gamma distribution with shape $\alpha>0$ and rate $\beta>0$.
This means that $W_{i}|X_i=x_i$ has a left-truncated gamma distribution (see, e.g., \citealp{coffey2000properties}), on the interval $\left(1,\infty\right)$, with shape $2$ and rate $\delta\left(x_i;\mu,\beta,\kappa\right)+\theta$, whose pdf is given in \eqref{eq:posterior_shifted}; in symbols   
\begin{equation}
W_{i}|X_i=x_i \sim \mathcal{LTG}_{\left(1,\infty\right)}\left(2,\delta\left(x_i;\mu,\beta,\kappa\right)+\theta \right).
\label{eq:Truncated Gamma}
\end{equation}
The function $g(W_i)$ arising from \eqref{eq:ALSM complete-data log-likelihood mu, Gamma and Sigma} and \eqref{eq:complete-data log-likelihood shifted} is $g_1(w)=w$.
Thanks to \eqref{eq:Truncated Gamma} we obtain
\begin{align}
\dot{\mbox{E}} \left(W_{i}|X_i=x_i\right) & =   
\frac{\varphi_{2}\left[\delta\left(x_i;\dot{\mu},\dot{\beta},\dot{\kappa}\right)+\dot{\theta}\right]}
{\varphi_{1}\left[\delta\left(x_i;\dot{\mu},\dot{\beta},\dot{\kappa}\right)+\dot{\theta}\right]}, \nonumber \\
&\eqqcolon \dot{w}_{i}, \nonumber
\end{align}
where
\begin{eqnarray}
\varphi_{m}(z) &=& E_{-m}\left(z\right) \nonumber \\
 &=& \int_{1}^{\infty} t^{m} e^{-z t} dt  \nonumber \\
&=& z^{-(m+1)}\Gamma\left(m+1,z \right) \nonumber
\label{eq:expint}
\end{eqnarray}
is the Misra function \citep{Misr:1940}, further generalization of the generalized exponential integral function $E_n\left(z\right) = \int_{1}^\infty t^{-n} e^{-zt} dt$ 
\citep{Abra:Steg:Hand:1965}.

\subsubsection{Unimodal gamma asymmetric Laplace distribution}
\label{subsub:gamma}

For the UG-AL distribution we have 
\begin{align}
f\left(w_{i}|x_i;\mu,\beta,\kappa,\theta\right) 
& \propto  f\left(w_{i},x_i;\mu,\beta,\kappa,\theta\right) \nonumber \\
& \propto  f_{\text{G}}\left(w_i;\frac{1}{\theta}+2,\delta\left(x_i;\mu,\beta,\kappa\right)+\frac{1}{\theta}\right) 
\label{eq:posterior_gamma}.
\end{align} 
This means that $W_{i}|X_i=x_i$ has a gamma distribution with shape $1/\theta+2$ and rate $\delta\left(x_i;\mu,\beta,\kappa\right)+1/\theta$, whose pdf is given in \eqref{eq:posterior_gamma}; in symbols   
\begin{equation}
W_{i}|X_i=x_i \sim \mathcal{G}\left(\frac{1}{\theta}+2,\delta\left(x_i;\mu,\beta,\kappa\right)+\frac{1}{\theta} \right).
\label{eq:W_Gamma}
\end{equation}
The functions $g(W_i)$ arising from \eqref{eq:ALSM complete-data log-likelihood mu, Gamma and Sigma} and \eqref{eq:complete-data log-likelihood unimodal gamma} are $g_1\left(w\right)=w$ and $g_2\left(w\right)=\log\left(w\right)$. 
Thanks to \eqref{eq:W_Gamma} we obtain
\begin{align}
\dot{\mbox{E}} \left(W_{i}|X_i=x_i\right) & =   
\frac{\frac{1}{\dot{\theta}}+2}{\delta\left(x_i;\dot{\mu},\dot{\beta},\dot{\kappa}\right) +\frac{1}{\dot{\theta}}} \nonumber \\
&\eqqcolon \dot{w}_{i}, \nonumber 
\end{align}
and 
\begin{align}
\dot{\mbox{E}}\left(\log W_{i}|X_i=x_i\right) & =\psi\left(\frac{1}{\dot{\theta}}+2\right) - \log\left[   \delta\left(x_i;\dot{\mu},\dot{\beta},\dot{\kappa}\right) +\frac{1}{\dot{\theta}} \right] 
 \nonumber \\
&\eqqcolon \dot{\log}\:w_{i} \nonumber
\end{align}
where $\psi(\cdot)$ denotes the digamma function.

In \ref{app:E-step G-AL} we report the analogous material for the reparameterized UG-AL distribution obtained considering the classical gamma as mixing distribution.

\subsubsection{Inverse Gaussian asymmetric Laplace distribution}
\label{subsub:inverse}

For the IG-AL distribution we have 
\begin{align}
f\left(w_{i}|x_i;\mu,\beta,\kappa,\theta\right) 
& \propto  f\left(w_{i},x_i;\mu,\beta,\kappa,\theta\right) \nonumber \\
& \propto \frac{1}{\eta \left(x_i,\mu,\beta,\kappa,\theta\right)}   w_i^{-\frac{1}{2}} e^{-w_i \left[\delta\left(x_i;\mu,\beta,\kappa\right) + \frac{\left(w_i-\sqrt{3\theta+1}\right)^2}{2\theta w_i^2}\right]},
\label{eq:posterior_inverse} 
\end{align} 
where 
$$
\eta \left(x_i;\mu,\beta,\kappa,\theta\right)= \sqrt{\frac{2\pi\theta}{1+2\theta \delta\left(x_i;\mu,\beta,\kappa\right) }} e^{\frac{\sqrt{1+3\theta} \left(1- \sqrt{ 1+2\theta \delta\left(x_i;\mu,\beta,\kappa\right)}\right)}{\theta}}
$$
allows \eqref{eq:posterior_inverse} to be a well-defined pdf.
The functions $g(W_i)$ arising from \eqref{eq:ALSM complete-data log-likelihood mu, Gamma and Sigma} and \eqref{eq:complete-data log-likelihood inverse Gaussian} are $g_1\left(w\right)=w$ and $g_2\left(w\right)=1/w$. 
Thanks to \eqref{eq:posterior_inverse} we obtain
\begin{align}
\dot{\mbox{E}}\left(W_{i}|X_i=x_i\right) & =   
\frac{\sqrt{\dot{\theta}\left(1+3\dot{\theta}\right)} + \sqrt{\frac{\dot{\theta}^3}{1+2\dot{\theta} \delta\left(x_i;\dot{\mu},\dot{\beta},\dot{\kappa}\right)}}}
{\sqrt{ \dot{\theta} \left[1+2\dot{\theta} \delta\left(x_i;\dot{\mu},\dot{\beta},\dot{\kappa}\right)\right]}} \nonumber \\
&\eqqcolon \dot{w}_{i}, \nonumber
\end{align}
and 
\begin{align}
\dot{\mbox{E}}\left(\displaystyle \left.\frac{1}{W_{i}} \right|X_i=x_i\right) & =  
\sqrt{\frac{1+2\dot{\theta} \delta\left(x_i;\dot{\mu},\dot{\beta},\dot{\kappa}\right)}{1+3\dot{\theta}}}
 \nonumber \\
&\eqqcolon \dot{\bar{w}}_{i}. \nonumber
\end{align}

\subsubsection{Power-function asymmetric Laplace distribution}
\label{subsub:beta}

For the PF-AL distribution we have 
\begin{align}
f\left(w_{i}|x_i;\mu,\beta,\kappa,\theta\right) 
& \propto  f\left(w_{i},x_i;\mu,\beta,\kappa,\theta\right) \nonumber \\
& \propto \frac{1}{\eta \left(x_i,\mu,\beta,\kappa,\theta\right)}  
f_{\text{G}}\left[w_i;\theta+1,\delta\left(x_i;\mu,\beta,\kappa\right) \right] ,   
\label{eq:posterior_beta} 
\end{align} 
where 
$$
\eta \left(x_i;\mu,\beta,\kappa,\theta\right)= 1- \frac{ \Gamma\left[\theta+1, \delta\left(x_i;\mu,\beta,\kappa\right) \right]}{\Gamma(\theta+1)}.
$$
This means that $W_{i}|X_i=x_i$ has a right-truncated gamma distribution (see, e.g., \citealp{coffey2000properties}), on the interval $\left(0,1\right)$, with shape $\theta+1$ and rate $\delta\left(x_i;\mu,\beta,\kappa\right)$, whose pdf is given in \eqref{eq:posterior_beta}; in symbols 
\begin{equation}
W_{i}|X_i=x_i \sim \mathcal{RTG}_{\left(0,1\right)}\left(\theta+1,\delta\left(x_i;\mu,\beta,\kappa\right)  \right).
\label{eq:Doubly Truncated Gamma}
\end{equation}
The functions $g(W_i)$ arising from \eqref{eq:ALSM complete-data log-likelihood mu, Gamma and Sigma} and \eqref{eq:complete-data log-likelihood beta} are $g_1\left(w\right)=w$ and $g_2\left(w\right)=\log\left(w\right)$.
Thanks to \eqref{eq:Doubly Truncated Gamma} we obtain
\begin{align}
\dot{\mbox{E}}\left(W_{i}|X_i=x_i\right) & =
\frac{1}{\delta\left(x_i;\dot{\mu},\dot{\beta},\dot{\kappa}\right)}
\frac{\Gamma\left(\dot{\theta}+2\right)-\Gamma\left[\dot{\theta}+2, \delta\left(x_i;\dot{\mu},\dot{\beta},\dot{\kappa}\right) \right] }{\Gamma\left(\dot{\theta}+1\right)-\Gamma\left[\dot{\theta}+1, \delta\left(x_i;\dot{\mu},\dot{\beta},\dot{\kappa}\right) \right]}   ,  \nonumber \\
&\eqqcolon \dot{w}_{i}  \nonumber
\end{align}
and
\begin{align}
\dot{\mbox{E}}\left(\log W_{i}|X_i=x_i\right) & =
- \frac{_2F_2\left[\dot{\theta}+1,\dot{\theta}+1;\dot{\theta}+2,\dot{\theta}+2;\delta\left(x_i;\dot{\mu},\dot{\beta},\dot{\kappa}\right)\right]}{(\dot{\theta}+1)^2}
  \nonumber \\
&\eqqcolon \dot{\log}\: w_{i}, \nonumber 
\end{align}
where $_pF_q(a_1,\ldots,a_p;b_1,\ldots,b_q;x)$ is the generalized hypergeometric function \citep{dwork1990generalized}. 

\subsubsection{Pareto asymmetric Laplace distribution}
\label{subsub:pareto}

For the PAR-AL distribution we have 
\begin{align}
f\left(w_{i}|x_i;\mu,\beta,\kappa,\theta\right) 
& \propto  f\left(w_{i},x_i;\mu,\beta,\kappa,\theta\right) \nonumber \\
& \propto \frac{1}{\varphi_{-\theta}(\delta\left(x_i;\mu,\beta,\kappa\right)) }w_i^{-\theta} e^{-w_i \delta\left(x_i;\mu,\beta,\kappa\right) } 
\label{eq:posterior_pareto} 
\end{align} 
which, for $\theta<1$, reduces to a left-truncated gamma distribution on the interval $(1,\infty)$. 
The function $g(W_i)$ arising from \eqref{eq:ALSM complete-data log-likelihood mu, Gamma and Sigma} and \eqref{eq:complete-data log-likelihood pareto} is $g\left(w\right)=\log\left(w\right)$.
Thanks to the pdf in \eqref{eq:posterior_pareto} we obtain
\begin{align}
\dot{\mbox{E}}\left(\log W_{i}|X_i=x_i\right) & =
G_{2,3}^{3,0}\left(\delta\left(x_i;\mu,\beta,\kappa\right);
\begin{matrix} 
 \theta,\theta    \\
 0,\theta-1, \theta-1 
\end{matrix}
\right)
\nonumber \\
&\eqqcolon \dot{\log}\: w_{i}, \nonumber
\end{align}
where $G$ is the Meijer G-function \citep[see Chapter 16 in][]{askey2010generalized}.


\subsection{M-step}
\label{subsec:M-step}

The M-step requires the calculation of $\left\{\ddot{\mu},\ddot{\beta},\ddot{\kappa},\ddot{\btheta}\right\}$ as the value of $\left\{\mu,\beta,\kappa,\btheta\right\}$ that maximizes $Q\left(\mu,\beta,\kappa,\btheta|\dot{\mu},\dot{\beta},\dot{\kappa},\dot{\btheta}\right)$.
According to the right-hand side of \eqref{eq:Q}, $Q_1\left(\mu,\beta,\kappa|\dot{\mu},\dot{\beta},\dot{\kappa},\dot{\btheta}\right)$ and $Q_2\left(\btheta|\dot{\mu},\dot{\beta},\dot{\kappa},\dot{\btheta}\right)$ can be maximized separately with respect to the parameters they involve, with the maximization of $Q_1\left(\mu,\beta,\kappa|\dot{\mu},\dot{\beta},\dot{\kappa},\dot{\btheta}\right)$ being independent of the ALSM considered.
$Q_1$ is also the function to be maximized in the first CM-step of the ECME algorithm related to the U-AL distribution to obtain the updates of $\mu$, $\beta$ and $\kappa$.

$Q_1\left(\mu,\beta,\kappa|\dot{\mu},\dot{\beta},\dot{\kappa},\dot{\btheta}\right)$ can be maximized following an approach similar to the one used to find ML estimators for the parameters of the AL distribution \citep[see, e.g.,][]{kotz2001laplace}.
To find $\mu$, $\beta$, and $\kappa$ we can proceed as follows.
\begin{description}
\item[Step 1:] Consider the functions 
 \begin{equation*}
a(\mu) = \sum_{i=1}^n \dot{w}_i \left(x_i-\mu \right)^+, \quad b(\mu) = \sum_{i=1}^n \dot{w}_i \left(x_i-\mu \right)^-, 
\label{eq:a e b}
\end{equation*}
and
 \begin{equation*}
g(\mu) =    2 \log \left(\sqrt{a(\mu)} +  \sqrt{b(\mu)} \right) + \sqrt{a(\mu)}   \sqrt{b(\mu)},
\label{eq:h}
\end{equation*}
as defined in \eqref{eq:aeb} and \eqref{eq:functiong}, respectively.
Evaluate the $n$ values $g(x_{(j)})$, $j=1,\ldots,n$, and choose a positive integer $r\leq n$ such that
$$
g(x_{(r)}) \leq g(x_{(j)}), \quad  j=1,\ldots,n.
$$
\item[Step 2:] Set $\ddot{\mu}=x_{(r)}$ as defined in \ref{app:Case 2} and use the results in \ref{app:Case 1} to find solutions for $\beta$ and $\kappa$.
In particular, there are three scenarios in Step 2.
\begin{description}
	\item[--]  If $r=1$, then as in the situation i) in \ref{app:Case 1} the solutions of $\beta$ and $\kappa$ do not exist. 
	In this case the ML estimates for the ALSM can not be found by using the EM algorithm;
	\item[--]  If $r=n$, then as in the situation ii) in \ref{app:Case 1} the solutions of $\beta$ and $\kappa$ do not exist. 
	As before the ML estimates for the ALSM can not be found by using the EM algorithm.
	\item[--]  If $1<r<n$, then the parameters maximizing $Q_1\left(\mu,\beta,\kappa|\dot{\mu},\dot{\beta},\dot{\kappa},\dot{\btheta}\right)$ are
	\begin{equation} 
\begin{array}{l} 
\ddot{\mu}     = x_{(r)}             \\
\ddot{\beta}    = \sqrt[4]{a\left(\ddot{\mu}\right) b\left(\ddot{\mu}\right)}\left[\sqrt[4]{a\left(\ddot{\mu}\right)} + \sqrt[4]{b\left(\ddot{\mu}\right)}\right]            \\
\ddot{\kappa}   = \sqrt[4]{\displaystyle\frac{b\left(\ddot{\mu}\right)}{a\left(\ddot{\mu}\right)}} ;
  \end{array}
\label{eq:estimates}
\end{equation}
see also the results in \ref{app:Case 1}.
\end{description}
\end{description}
We provide below details about the maximization of $Q_2\left(\btheta|\dot{\mu},\dot{\beta},\dot{\btheta}\right)$ for each ALSM.

\subsubsection{Shifted exponential asymmetric Laplace distribution}
\label{subsub:Mstepshifted}

For the SE-AL,
$$
Q_2\left(\theta|\dot{\mu},\dot{\beta},\dot{\kappa},\dot{\theta}\right) = n\log \theta  - \theta \sum_{i=1}^n\left(\dot{w}_i-1\right)
$$
is the log-likelihood function of $n$ independent observations $\dot{w}_1,\ldots,\dot{w}_n$ from a shifted exponential distribution with parameter $\theta$.
Therefore, from the standard theory about the exponential distribution \citep[see, e.g.,][Chapter~19]{John:Kotz:cont1:1970}, the update for $\theta$ is
\begin{equation*}
\ddot{\theta} = \frac{n}{\displaystyle\sum_{i=1}^n\left(\dot{w}_{i} - 1\right)}.
\label{eq:M-step shifted}
\end{equation*}

\subsubsection{Unimodal gamma asymmetric Laplace distribution}
\label{subsub:Mstepgamma}

For the UG-AL distribution, a closed-form update for $\theta$ does not exist.
The function to maximize is
\begin{equation}
Q_2\left(\theta|\dot{\mu},\dot{\beta},\dot{\kappa},\dot{\theta}\right) = 
\frac{1}{\theta} \left(\sum_{i=1}^n \dot{\log}\: w_i 
-  \sum_{i=1}^n \dot{w}_i \right)
- n \left(\frac{1}{\theta}+1\right) \log \theta
- n \log\left[\Gamma\left(\frac{1}{\theta}+1\right)\right]
\label{eq:theta UG-L loglik}
\end{equation}
whose derivative, with respect to $\theta$, is
\begin{equation}
\frac{\partial Q_2\left(\theta|\dot{\mu},\dot{\beta},\dot{\kappa},\dot{\theta}\right)}
{\partial \theta} = -\frac{1}{\theta^2} \left( \sum_{i=1}^n \dot{\log}\: w_i + \sum_{i=1}^n \dot{w}_i \right)
+ \frac{n}{\theta^2}\left[\psi\left(\frac{1}{\theta}+1\right) +\log \theta - 1\right] -\frac{n}{\theta}.
\label{eq:theta UG-L loglik derivative}
\end{equation}
The update for $\theta$ can be obtained numerically either by maximizing \eqref{eq:theta UG-L loglik} over $\theta$ or by finding the root (over $\theta$) of the equation obtained equating \eqref{eq:theta UG-L loglik derivative} to zero.

In \ref{app:M-step G-AL} we provide the analogous quantities for the reparameterized UG-AL distribution which uses the classical gamma as mixing distribution.

\subsubsection{Inverse Gaussian asymmetric Laplace distribution}
\label{subsub:Mstepinverse}

For the IG-AL distribution, a closed-form update for $\theta$ does not exist.
The function to maximize is
\begin{equation}
Q_2\left(\theta|\dot{\mu},\dot{\beta},\dot{\kappa},\dot{\theta}\right) = 
\frac{n}{2} \log \left(\frac{1+ 3\theta}{2\pi\theta}\right)
- \frac{3}{2}  \sum_{i=1}^n  \dot{\log}\:w_i 
-\frac{1}{2\theta} 
\left[
\sum_{i=1}^n \dot{w}_i - 2\sqrt{1+3\theta} +(1+3\theta)\sum_{i=1}^n \dot{\bar{w}}_i
\right]
\label{eq:theta IG-L loglik}
\end{equation}
where $\dot{\log}\:w_i$ does not need to be calculated since no parameters are related to it.
The derivative, with respect to $\theta$, is
\begin{equation}
\frac{\partial Q_2\left(\theta|\dot{\mu},\dot{\beta},\dot{\kappa},\dot{\theta}\right)}
{\partial \theta} = -\frac{n}{2\theta(1+3\theta)}
+ \frac{1}{2\theta^2} \sum_{i=1}^n \left(\dot{\bar{w}}_i - \dot{w}_i\right) + \frac{n\left(1-\frac{5}{2}\theta\right)}{\theta^2\sqrt{1+3\theta}}.
\label{eq:theta IG-L loglik derivative}
\end{equation}
As for the UG-AL distribution, the update for $\theta$ can be obtained numerically either by maximizing \eqref{eq:theta IG-L loglik} over $\theta$ or by finding the root (over $\theta$) of the equation obtained equating \eqref{eq:theta IG-L loglik derivative} to zero.

\subsubsection{Power-function asymmetric Laplace distribution}
\label{subsubsec:Mstep beta}

For the PF-AL distribution, the function to be maximized to obtain the update for $\theta$ is
$$
Q_2\left(\theta|\dot{\mu},\dot{\beta},\dot{\kappa},\dot{\theta}\right) = n\log \theta  - \left(\theta -1\right) \sum_{i=1}^n\dot{\log}\:w_i. 
$$
The closed-form update for $\theta$ is
\begin{equation*}
\ddot{\theta} = - \frac{n}{\displaystyle\sum_{i=1}^n \dot{\log}\:w_i}.
\label{eq:M-step beta}
\end{equation*}

\subsubsection{Pareto asymmetric Laplace distribution}
\label{subsubsec:Mstep Pareto}

For the PAR-AL,
$$
Q_2\left(\theta|\dot{\mu},\dot{\beta},\dot{\kappa},\dot{\theta}\right) = 
n\log \theta  
-\left(\theta+1\right) \sum_{i=1}^n  \dot{\log}\: w_{i}  
$$
is the log-likelihood function of $n$ independent observations $\dot{w}_1,\ldots,\dot{w}_n$ from a Pareto distribution.
Therefore, from the standard theory about the Pareto distribution \citep[see, e.g.,][]{rytgaard1990estimation}, the update for $\theta$ is
\begin{equation*}
\ddot{\theta} = \frac{n}{\sum_{i=1}^n  \dot{\log}\: w_{i}}.
\label{eq:M-step pareto}
\end{equation*}

\subsubsection{Uniform asymmetric Laplace distribution}
\label{subsubsec:Mstep Uniform}

For the U-AL distribution the update $\ddot{\theta}$ of $\theta$ is computed at the second CM-step of the ECME algorithm by maximizing the observed-data log-likelihood function $l\left(\ddot{\mu},\ddot{\beta},\ddot{\kappa},\theta\right)$ over $\theta$.

\section{Data and results}
\label{sec:Real data analysis}

\subsection{Data}
\label{subsec:Data}

We downloaded the daily adjusted close prices of the following two cryptocurrencies: Bitcoin EUR (BTC-EUR) and TRON EUR (TRX-EUR).
All prices are in Euro. 
The data were downloaded from \url{https://finance.yahoo.com/cryptocurrencies}.
The period under investigation goes from 1 April 2018 to 31 May 2022. 
Returns are estimated by taking logarithmic differences.

We conduct the whole analysis in \textsf{R} \citep{R:2018}, and the codes needed to replicate the analysis are available upon request.
\tablename~\ref{tab:desriptives} shows some descriptive statistics of the considered cryptocurrency returns.
\tablename~\ref{tab:desriptives} also reports the $p$-values of the D'Agostino test of skewness and of three commonly employed tests of normality: Anderson-Darling, Jarque-Bera and Shapiro-Wilk \citep[see][for a comparison]{Yap:Sim:JSCS:2011}.
In the following we will compare the $p$-values with the classical 0.05 significance level. 
Regardless of the considered cryptocurrency, mean and median are very close to zero.
TRX-EUR has a slightly larger variability, as measured by the standard deviation and range.
All the return distributions are skewed (according to the D'Agostino test, regardless of the considered significance level) and, in agreement with the literature, they are clearly non-normal regardless of both the considered normality test and significance level considered ($p$-values very close to 0).
Finally, the return distributions are highly leptokurtic with an excess kurtosis of 15.966 for BTC-EUR and of 8.249 for TRX-EUR.
\begin{table}[!ht]  
\centering      
 \begin{tabular}{lrr}        
\toprule        
	&	BTC-EUR	&	 TRX-EUR		\\
	\midrule
$n$	&	1521	&	1521		\\
Mean	    &	0.001	    &	 0.001		\\
Median	  &	0.001	    &	 0.001		\\
St.~Dev.	&	0.038	    &	 0.057		\\
Skewness	&	-1.079	  &	 -0.434 	\\
Kurtosis	&	18.966    &	 11.249	\\
Minimum	  &	-0.458  	&	 -0.516	\\
Maximum	  &	0.195	    &	 0.334	\\
D'Agostino test ($p$-value)	      &	0.000	&	0.000 	\\
Anderson-Darling test ($p$-value) & 0.000 & 0.000   \\
Jarque-Bera test ($p$-value)      & 0.000 & 0.000   \\
Shapiro-Wilk test ($p$-value)     & 0.000 & 0.000   \\
 \bottomrule        
\end{tabular}
\caption{Summary statistics of the considered cryptocurrency daily returns.}        
\label{tab:desriptives}        
\end{table}
  
The histograms of the cryptocurrency returns appear in \figurename~\ref{fig:histograms}; to make the comparison easier, the histograms share the same axes range.
Regardless of the considered cryptocurrency, we observe the shape typical of the asymmetric Laplace distribution.
Indeed, the distributions have high peaks near zero and appear to have tails heavier than those of the normal distribution.
However, the empirical kurtoses in \tablename~\ref{tab:desriptives} are not included in $\left[6,9\right)$, the interval of kurtoses allowed by the asymmetric Laplace distribution (refer to \figurename~\ref{fig:AL skewness and kurtosis}).
In particular, they are larger. 
\begin{figure}[!ht]
\centering
\subfigure[BTC-EUR\label{fig:BTC}]
{\resizebox{0.482\textwidth}{!}{\includegraphics{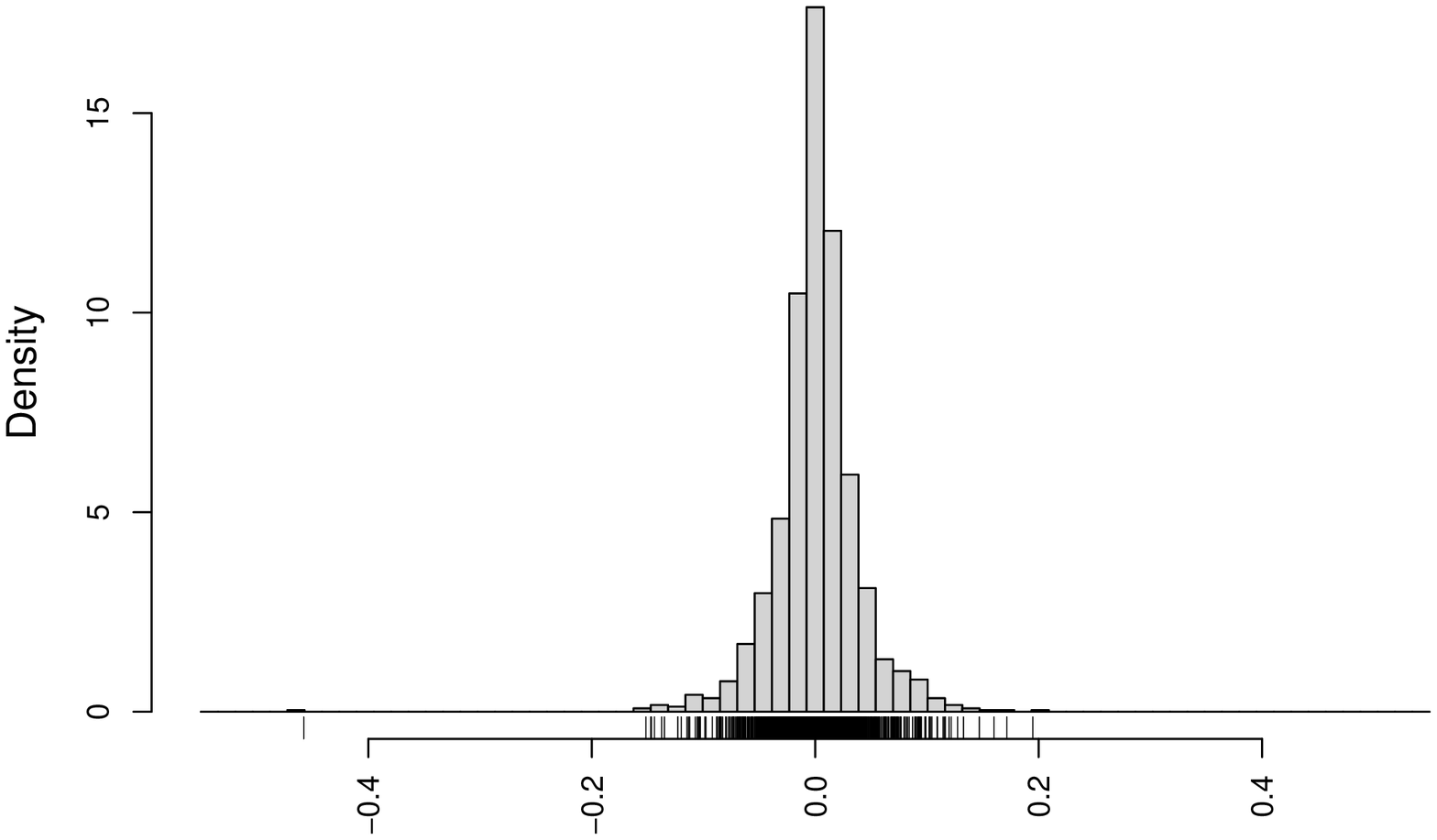}}}
\subfigure[TRX-EUR\label{fig:TRX}]
{\resizebox{0.482\textwidth}{!}{\includegraphics{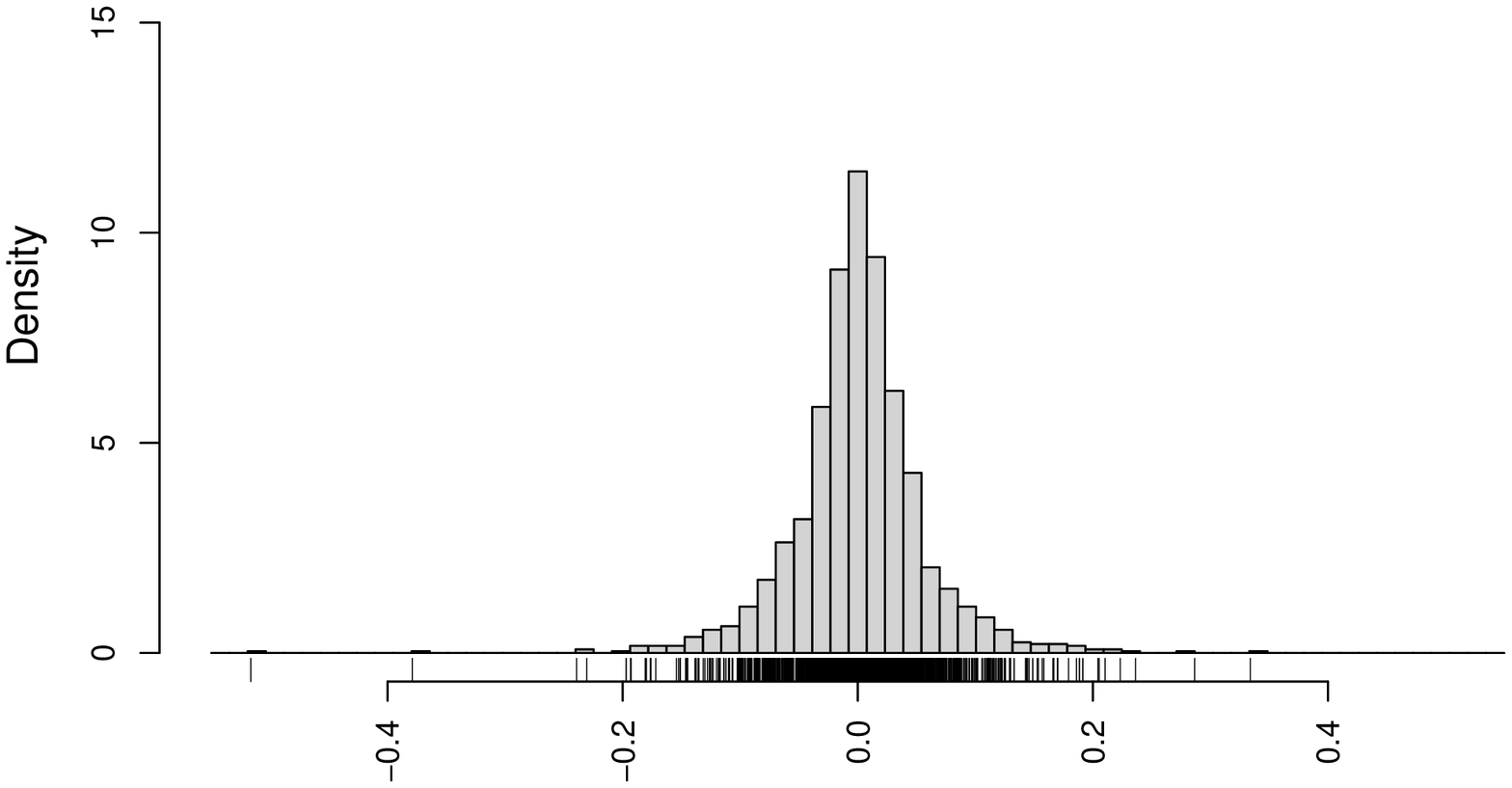}}}
\caption{
\footnotesize 
Histograms of the considered cryptocurrency daily returns.
\label{fig:histograms}
}
\end{figure}	
	
\subsection{Results}
\label{subsec:Results}

Motivated by the findings above, especially those related to shape, skewness and kurtosis of the empirical distributions, we fit our ALSMs to the considered cryptocurrency returns.
For the sake of comparison, we also fit several parametric distributions defined on the whole real line.
We take the competing models among those implemented in the \textbf{gamlss} package \citep{rigby2005generalized} for \textsf{R}.
In detail, we consider the following distributions, given in alphabetic order using the nomenclature by \citet{rigby2019distributions}: exponential Gaussian (exGAUS), generalized $t$ (GT), Gumbel (GU), Johnson's SU (JSU), logistic (LO), NET (NET), normal (NO), normal family (NOF), power exponential (PE), reverse Gumbel (RG), sinh-arcsinh (SHASH), sinh-arcsinh original (SHASHo), skew normal type 1 (SN1), skew normal type 2 (SN2), skew power exponential type 1 (SEP1), skew power exponential type 2 (SEP2), skew power exponential type 3 (SEP3), skew power exponential type 4 (SEP4), skew Student $t$ (SST), skew $t$ type 1 (ST1), skew $t$ type 2 (ST2), skew $t$ type 4 (ST4), skew $t$ type 5 (ST5), and $t$ family (TF).
Moreover, for the sake of completeness, and motivated by the results in \citet{chu2015statistical} and \citet{Chan:2017}, we also fit the generalyzed hyperbolic (GH) and some of its special cases which are not implemented in the \textbf{gamlss} package, namely hyperbolic (H), variance gamma (VG), and normal inverse Gaussian (NIG) distributions.
These models are fitted via the \texttt{fit.ghypuv()} function of the \textbf{ghyp} package.
This yields a total of 36 competing models.  
We estimate the parameters of all the models under consideration via the ML approach.

To compare models with the same number of parameters, in terms of goodness-of-fit, we use the log-likelihood (in addition to the criteria described below).
We accomplish the comparison of models with differing number of parameters, as usual, via the Akaike information criterion (AIC; \citealp{AIC}) and the Bayesian information criterion (BIC; \citealp{Schw:Esti:1978}) that, in our formulation, need to be maximized because they are multiplied by $-1$.
Moreover, we use the likelihood-ratio (LR) test to compare each Laplace-based model (alternative model) with the nested asymmetric Laplace distribution (null model).
In particular, the LR test can be used to determine whether the alternative model is a significant improvement over the asymmetric Laplace distribution. 
The test statistic is
$$
\text{LR} = -2\left[l(\widehat{\mu},\widehat{\beta},\widehat{\kappa})-l(\widehat{\mu},\widehat{\beta},\widehat{\kappa},\widehat{\btheta})\right],
$$
where $\widehat{\mu}$, $\widehat{\beta}$, $\widehat{\kappa}$ and $\widehat{\btheta}$ are the ML estimates of $\mu$, $\beta$, $\kappa$ and $\btheta$, respectively, and where $l(\widehat{\mu},\widehat{\beta},\widehat{\kappa})$ and $l(\widehat{\mu},\widehat{\beta},\widehat{\kappa},\widehat{\btheta})$ are the maximized log-likelihood values under the null and alternative models, respectively. 
Under the null hypothesis of no improvement, using Wilks' theorem, LR can be approximated by a $\chi^2$ random variable with number of degrees of freedom given by the difference in the number of estimated parameters between the alternative and the null model, and this allows us to compute a $p$-value.

\tablename s~\ref{tab:BTC}--\ref{tab:TRX} present the model comparison separately for each cryptocurrency.
To easy the reader in comparing the performance of the considered parametric models, the tables also give rankings induced by AIC, BIC, and LR tests (limited to the AL-based models).
\begin{table}[!ht]																					
 \resizebox{\textwidth}{!}{
 \begin{tabular}{lc@{\hskip 0.3in}r@{\hskip 0.3in}rr@{\hskip 0.3in}rr@{\hskip 0.3in}rr}	
\toprule													
Model	&	$\#$ par	&	Log-lik	&	AIC	&	Rank	&	BIC	&	Rank	&	LR $p$-value	&	Rank		\\
\midrule																					
AL	&	3	&	3020.921	&	6035.842	&	18	&	6019.861	&	15	&		&		\\
TP-AL	&	5	&	3027.713	&	6045.426	&	7	&	6018.790	&	17	&	0.001	&	6	\\
SE-AL	&	4	&	3026.311	&	6044.621	&	8	&	6023.313	&	7	&	0.001	&	5	\\
UG-AL	&	4	&	3027.701	&	6047.402	&	4	&	6026.094	&	5	&	0.000	&	3	\\
IG-AL	&	4	&	3027.388	&	6046.776	&	5	&	6025.467	&	6	&	0.000	&	4	\\
PF-AL	&	4	&	3028.167	&	6048.334	&	2	&	6027.026	&	3	&	0.000	&	2	\\
P-AL	&	4	&	3025.807	&	6043.614	&	13	&	6022.306	&	12	&	0.002	&	7	\\
U-AL	&	4	&	3047.685	&	6087.370	&	1	&	6066.062	&	1	&	0.000	&	1	\\
exGAUS	&	3	&	2818.553	&	5631.105	&	31	&	5615.124	&	32	&		&		\\
GH	&	5	&	3025.958	&	6041.917	&	16	&	6015.281	&	18	&		&		\\
GT	&	4	&	3027.728	&	6047.455	&	3	&	6026.147	&	4	&		&		\\
GU	&	2	&	2593.517	&	5183.034	&	35	&	5172.380	&	35	&		&		\\
H	&	4	&	3020.920	&	6033.841	&	20	&	6012.532	&	20	&		&		\\
JSU	&	4	&	3018.886	&	6029.773	&	21	&	6008.464	&	21	&		&		\\
LO	&	2	&	2962.209	&	5920.419	&	29	&	5909.765	&	29	&		&		\\
NET	&	2	&	2988.639	&	5969.277	&	28	&	5947.969	&	28	&		&		\\
NIG	&	4	&	3021.583	&	6035.167	&	19	&	6013.858	&	19	&		&		\\
NO	&	2	&	2816.573	&	5629.146	&	32	&	5618.491	&	31	&		&		\\
NOF	&	3	&	2816.573	&	5627.146	&	33	&	5611.164	&	33	&		&		\\
PE	&	3	&	3026.119	&	6046.239	&	6	&	6030.257	&	2	&		&		\\
RG	&	2	&	1909.950	&	3815.899	&	36	&	3805.245	&	36	&		&		\\
SHASH	&	4	&	3025.711	&	6043.422	&	14	&	6022.113	&	13	&		&		\\
SHASHo	&	4	&	3025.522	&	6043.044	&	15	&	6021.735	&	14	&		&		\\
SN1	&	3	&	2816.573	&	5627.146	&	34	&	5611.164	&	34	&		&		\\
SN2	&	3	&	2820.347	&	5634.693	&	30	&	5618.712	&	30	&		&		\\
SEP1	&	4	&	3026.126	&	6044.253	&	11	&	6022.944	&	10	&		&		\\
SEP2	&	4	&	3026.146	&	6044.292	&	10	&	6022.984	&	9	&		&		\\
SEP3	&	4	&	3026.120	&	6044.241	&	12	&	6022.932	&	11	&		&		\\
SEP4	&	4	&	3026.208	&	6044.416	&	9	&	6023.108	&	8	&		&		\\
SST	&	4	&	3012.351	&	6016.702	&	27	&	5995.394	&	27	&		&		\\
ST1	&	4	&	3012.427	&	6016.853	&	23	&	5995.545	&	23	&		&		\\
ST2	&	4	&	3012.360	&	6016.719	&	26	&	5995.411	&	26	&		&		\\
ST4	&	4	&	3012.363	&	6016.725	&	25	&	5995.417	&	25	&		&		\\
ST5	&	4	&	3012.363	&	6016.726	&	24	&	5995.418	&	24	&		&		\\
TF	&	3	&	3012.320	&	6018.641	&	22	&	6002.659	&	22	&		&		\\
VG	&	4	&	3024.470	&	6040.940	&	17	&	6019.631	&	16	&		&		\\
\bottomrule
\end{tabular}			
}																		
\caption{BTC-EUR: number of parameters ($\#$ par), log-likelihood (Log-lik), AIC, and BIC for the competing models, along with rankings from these criteria.
The $p$-values from the LR tests for the asymmetric Laplace-based models are given along with rankings.
}		
\label{tab:BTC}																			
\end{table}	
\begin{table}[!ht]																					
 \resizebox{\textwidth}{!}{
 \begin{tabular}{lc@{\hskip 0.3in}r@{\hskip 0.3in}rr@{\hskip 0.3in}rr@{\hskip 0.3in}rr}	
\toprule													
Model	&	$\#$ par	&	Log-lik	&	AIC	&	Rank	&	BIC	&	Rank	&	LR $p$-value	&	Rank		\\
\midrule	
AL	&	3	&	2383.212	&	4760.423	&	20	&	4744.442	&	8	&		&		\\
TP-AL	&	5	&	2388.481	&	4766.963	&	5	&	4740.327	&	18	&	0.005	&	5	\\
SE-AL	&	4	&	2386.362	&	4764.723	&	11	&	4743.415	&	11	&	0.012	&	6	\\
UG-AL	&	4	&	2387.646	&	4767.293	&	4	&	4745.984	&	5	&	0.003	&	3	\\
IG-AL	&	4	&	2387.405	&	4766.810	&	6	&	4745.501	&	6	&	0.004	&	4	\\
PF-AL	&	4	&	2388.365	&	4768.731	&	3	&	4747.422	&	4	&	0.001	&	2	\\
P-AL	&	4	&	2385.815	&	4763.630	&	14	&	4742.322	&	13	&	0.022	&	7	\\
U-AL	&	4	&	2409.962	&	4811.925	&	1	&	4790.617	&	1	&	0.000	&	1	\\
exGAUS	&	3	&	2213.975	&	4421.950	&	30	&	4405.969	&	30	&		&		\\
GH	&	5	&	2388.037	&	4766.073	&	7	&	4739.438	&	20	&		&		\\
GT	&	4	&	2388.430	&	4768.861	&	2	&	4747.552	&	3	&		&		\\
GU	&	2	&	1902.159	&	3800.318	&	35	&	3789.663	&	35	&		&		\\
H	&	4	&	2383.211	&	4758.423	&	21	&	4737.114	&	22	&		&		\\
JSU	&	4	&	2384.976	&	4761.952	&	18	&	4740.644	&	17	&		&		\\
LO	&	2	&	2333.869	&	4663.738	&	29	&	4653.084	&	29	&		&		\\
NET	&	2	&	2360.386	&	4712.772	&	28	&	4691.463	&	28	&		&		\\
NIG	&	4	&	2386.959	&	4765.918	&	8	&	4744.610	&	7	&		&		\\
NO	&	2	&	2202.754	&	4401.507	&	31	&	4390.853	&	31	&		&		\\
NOF	&	3	&	2202.626	&	4399.253	&	34	&	4383.271	&	34	&		&		\\
PE	&	3	&	2385.124	&	4764.247	&	12	&	4748.266	&	2	&		&		\\
RG	&	2	&	1587.234	&	3170.468	&	36	&	3159.814	&	36	&		&		\\
SHASH	&	4	&	2386.467	&	4764.933	&	10	&	4743.625	&	10	&		&		\\
SHASHo	&	4	&	2386.468	&	4764.935	&	9	&	4743.627	&	9	&		&		\\
SN1	&	3	&	2202.754	&	4399.507	&	33	&	4383.526	&	33	&		&		\\
SN2	&	3	&	2203.538	&	4401.075	&	32	&	4385.094	&	32	&		&		\\
SEP1	&	4	&	2385.688	&	4763.376	&	16	&	4742.068	&	15	&		&		\\
SEP2	&	4	&	2385.830	&	4763.660	&	13	&	4742.351	&	12	&		&		\\
SEP3	&	4	&	2385.763	&	4763.526	&	15	&	4742.218	&	14	&		&		\\
SEP4	&	4	&	2385.322	&	4762.644	&	17	&	4741.335	&	16	&		&		\\
SST	&	4	&	2380.241	&	4752.482	&	23	&	4731.174	&	23	&		&		\\
ST1	&	4	&	2380.229	&	4752.458	&	24	&	4731.149	&	24	&		&		\\
ST2	&	4	&	2380.222	&	4752.444	&	25	&	4731.136	&	25	&		&		\\
ST4	&	4	&	2380.207	&	4752.414	&	27	&	4731.106	&	27	&		&		\\
ST5	&	4	&	2380.218	&	4752.435	&	26	&	4731.127	&	26	&		&		\\
TF	&	3	&	2380.206	&	4754.412	&	22	&	4738.430	&	21	&		&		\\
VG	&	4	&	2384.530	&	4761.061	&	19	&	4739.752	&	19	&		&		\\
\bottomrule
\end{tabular}			
}																		
\caption{TRX-EUR: number of parameters ($\#$ par), log-likelihood (Log-lik), AIC, and BIC for the competing models, along with rankings from these criteria.
The $p$-values from the LR tests for the asymmetric Laplace-based models are given along with rankings.
}		
\label{tab:TRX}																			
\end{table}	
For the BTC-EUR, \tablename~\ref{tab:BTC} shows that both AIC and BIC indicate one of our models, the U-AL, as the best one.
Furthermore, six out of the seven models we propose (all except P-AL) are in the first eight positions according to the AIC (GT and PE occupy the third and sixth positions, respectively) while five out of our seven models (all except P-AL and TP-AL) are in the first six positions according to the BIC (PE and GT occupy the second and fourth positions, respectively).
These findings are further corroborated by the ranking induced by the LR-test, with the U-AL in the first position and the TP-AL and P-AL in the penultimate and last position, respectively.
Moreover, at the 5\% significance level, all our models represent a significant improvement over the asymmetric Laplace.

For the TRX-EUR (refer to \tablename~\ref{tab:TRX}) we observe similar results.
The U-AL is still the best model for AIC, BIC and LR test.
Moreover, based on the $p$-values from the LR test, all our models still represent a significant improvement over the asymmetric Laplace at the 5\% significance level.


A final interesting aspect to be noted is that, regardless of the considered cryptocurrency, the Gumbel-based models (RG and GU) are the worst models according to AIC and BIC.

\section{Conclusions and future works}
\label{sec:Conclusions and future works}

In this paper we further corroborate some recent findings about the cryptocurrency return distribution; in particular, we confirm it can be highly-peaked, skewed, and heavy-tailed, with a large excess kurtosis.
To account for all these peculiarities, in a modeling perspective, we propose the asymmetric Laplace scale mixture (ALSM) family of distributions and, for illustrative purposes, we consider seven different members of this family.
Advantageously with respect to the nested asymmetric Laplace distribution, the members of our family allow for a wider range of skewness values and for a larger excess kurtosis.
These improvements allows the ALSM to be a convenient model for the distribution of cryptocurrency returns. 

However, the novelty of the paper is not limited to the five illustrated models and the usefulness of our family is not restricted to cryptocurrencies. 
Indeed, other members of the family may be easily defined and other financial contexts, like those outlined in \citet[][Chapter~8]{Kotz:Kozu:Podg:TheL:2012}, may benefit from the use of ALSMs.


\setlist[description]{font=\normalfont\itshape\textbullet\space}

\appendix

\section{Gamma asymmetric Laplace distribution}
\label{app:Gamma asymmetric Laplace distribution}

Let
\begin{equation}
h\left(w;\theta\right)=\displaystyle\frac{\left(\frac{\theta}{2}\right)^{\frac{\theta}{2}} e^{-\frac{\theta}{2}w} w^{\frac{\theta}{2}-1}}{\Gamma \left(\frac{\theta}{2}\right)},\quad w>0,
\label{eq:Gamma}
\end{equation}
with $\theta>0$, be the pdf of the gamma distribution with shape and rate equal to $\theta/2$.
In symbols, $W\sim \mathcal{G}\left(\theta/2,\theta/2\right)$. 
When the pdf in \eqref{eq:Gamma} is considered as mixing density in model~\eqref{eq:asymmetric Laplace scale mixture}, the pdf of the ALSM becomes
\begin{equation}
f_{\text{ALSM}}\left(x;\mu,\beta,\kappa,\theta\right) = 
\frac{1}{\beta}\frac{\kappa}{1+\kappa^2} 
\begin{cases}
    \left[1+\frac{2\kappa}{\beta\theta}\left(x-\mu\right)\right]^{-\frac{\theta}{2}-1} & \text{for $x\geq \mu$,} \\[3mm]
    \left[1+\frac{2}{\beta\kappa\theta}\left(\mu-x\right)\right]^{-\frac{\theta}{2}-1} & \text{for $x < \mu$.}
  \end{cases}
\label{eq:G-AL distribution}
\end{equation}
Although the pdf in \eqref{eq:G-AL distribution} is a simple reparameterization of the pdf of the UG-AL model given in \eqref{eq:UG-AL distribution}, for the sake of clarity/confusion we prefer referring to the resulting model as gamma asymmetric Laplace (G-AL).
Model~\eqref{eq:G-AL distribution} reduces to $\mathcal{AL}\left(\mu,\beta,\kappa\right)$ when $\theta \rightarrow \infty$.

When $W\sim \mathcal{G}\left(\theta/2,\theta/2\right)$, the $r$th raw moment of $1/W$ is
\begin{align}
\text{E}\left(\frac{1}{W^r}\right) & =
\frac{\theta^{r}\Gamma\left(\frac{\theta}{2}-r\right)}{2^r\Gamma\left(\frac{\theta}{2}\right)}, \nonumber \\
& =  \frac{\theta^r}{\displaystyle \prod_{j=1}^{r}\left( \theta- 2 j \right)}
\label{eq:G moments 1/W}
\end{align}
which exists when $\theta > 2r$.
By substituting in \eqref{eq:ALSM mean}--\eqref{eq:ALSM kurt} the first four raw moments in \eqref{eq:G moments 1/W}, mean, variance, skewness and kurtosis of the G-AL distribution become
\begin{equation}
\text{E}\left(X\right) = \mu + \beta \left(\frac{1}{\kappa}-\kappa \right)\frac{\theta}{\theta-2},
\label{eq:GAL mean}
\end{equation}
which exists if $\theta>2$, 
\begin{equation}
\var\left(X\right) = 
\frac{\beta ^2 \theta ^2 \left(\theta  \kappa ^4+\theta -4 \kappa ^2\right)}{(\theta -4) (\theta -2)^2 \kappa ^2}
\label{eq:GAL variance}
\end{equation}
which exists if $\theta>4$, 
\begin{equation}
\Skew\left(X\right) = 
-\frac{2 \sqrt{\theta -4} (\theta +2) \left(\kappa ^2-1\right) \left[\theta  \kappa ^4+(\theta -6) \kappa ^2+\theta \right]}{(\theta -6) \left(\theta  \kappa ^4+\theta -4 \kappa ^2\right)^{3/2}},
\label{eq:GAL skewness}
\end{equation}
which exists if $\theta>6$, and 
\begin{align}
\Kurt\left(X\right) = &
\frac{1}{(\theta -8) (\theta -6) \left(\theta  \kappa ^4+\theta -4 \kappa ^2\right)^2}
\bigg[\theta ^4 \left(9 \kappa ^8+6 \kappa ^4+9\right)-6 \theta ^3 \left(5 \kappa ^8+12 \kappa ^6+6 \kappa ^4+12 \kappa ^2+5\right) +  \nonumber \\
&  +48 \theta ^2 \kappa ^2 \left(5 \kappa ^4+8 \kappa ^2+5\right) -96 \theta  \left(\kappa ^8+12 \kappa ^4+1\right)+768 \left(\kappa ^6-\kappa ^4+\kappa ^2\right)\bigg],
 \label{eq:GAL kurtosis}
\end{align}
which exists if $\theta>8$.
 
\figurename~\ref{fig:GAL skew and kurt} shows examples of behaviors of $\Skew\left(X\right)$ and $\Kurt\left(X\right)$, as functions of $\kappa$, at various levels of $\theta$.
\begin{figure}[!ht]
\centering
\subfigure[Skewness \label{fig:GALskew}]
{\resizebox{0.48\textwidth}{!}{\includegraphics{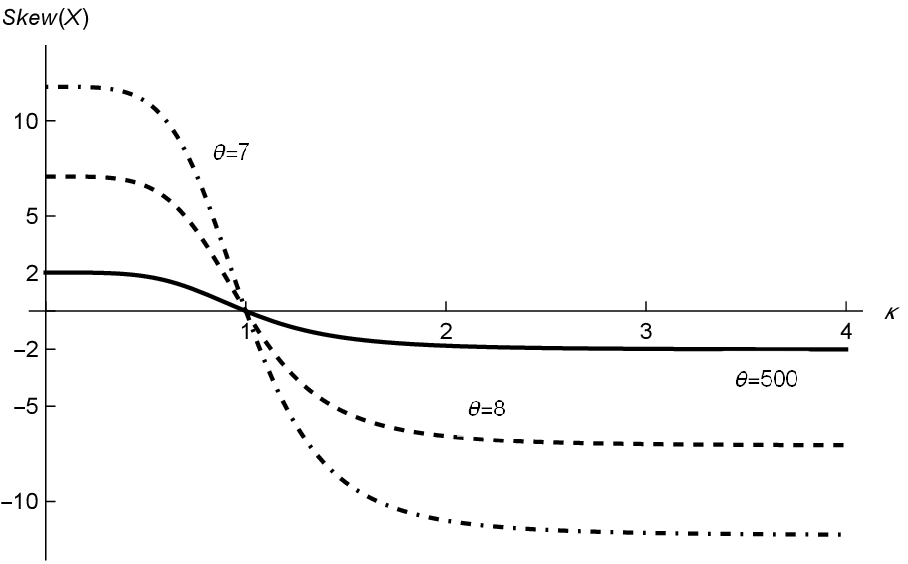}}}
\subfigure[Kurtosis\label{fig:GALkurt}]
{\resizebox{0.48\textwidth}{!}{\includegraphics{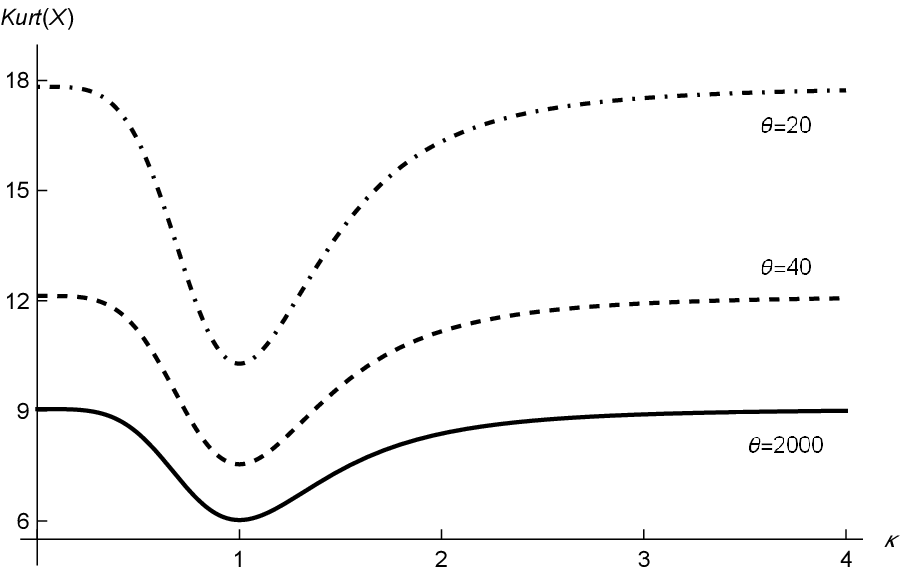}}}
\caption{
\footnotesize 
Examples of behavior of $\Skew\left(X\right)$ (on the left) and $\Kurt\left(X\right)$ (on the right), as functions of $\kappa$, at various levels of $\theta$ for the G-AL distribution.
\label{fig:GAL skew and kurt}
}
\end{figure}
From \figurename~\ref{fig:GALskew} we realize that: 1) large values of $\theta$ tend to produce the plot of $\Skew\left(X\right)$ we obtained for $\mathcal{AL}\left(\mu,\beta,\kappa\right)$ (refer to \figurename~\ref{fig:AL skewness and kurtosis}); and 2) as $\theta$ decreases (to a minimum value of 6), the range of possible values of $\Skew\left(X\right)$ increases.
Moreover, to ``significantly'' modify the behavior of $\Skew\left(X\right)$ with respect to the $\mathcal{AL}\left(\mu,\beta,\kappa\right)$ case, we need low values of $\theta$ (in the examples, $\theta=8$ and $\theta=7$). 
Similarly, from \figurename~\ref{fig:GALkurt} we realize that: 1) large values of $\theta$ tend to produce the plot of $\Kurt\left(X\right)$ we obtained for $\mathcal{AL}\left(\mu,\beta,\kappa\right)$ (refer to \figurename~\ref{fig:AL skewness and kurtosis}); 
and 2) $\kappa$ kept fixed, the lower the value $\theta$ (with a minimum of 8), the higher the kurtosis.
This means that the curve obtained for $\theta\rightarrow \infty$ acts as a lower bound.
Also in this case, to ``significantly'' modify the behavior of $\Kurt\left(X\right)$ with respect to the $\mathcal{AL}\left(\mu,\beta,\kappa\right)$ case, we need low values of $\theta$ (in the examples, $\theta=40$ and $\theta=20$).

\section{E- and M-steps for the G-AL distribution}

For the G-AL distribution we have
\begin{equation} 
l_{2c}\left(\theta\right) = \frac{n\theta}{2} \log \left(\frac{\theta}{2}\right)  
-n \log \left[\Gamma\left(\frac{\theta}{2}\right)\right] 
+\left(\frac{\theta}{2}-1\right) \sum_{i=1}^n \log w_i
- \frac{\theta}{2} \sum_{i=1}^n  w_i.
	\label{eq:complete-data log-likelihood t}
\end{equation}

\subsection{E-step}
\label{app:E-step G-AL}

As for the E-step,
\begin{align}
f\left(w_{i}|x_i;\mu,\beta,\kappa,\theta\right) 
& \propto  f\left(w_{i},x_i;\mu,\beta,\kappa,\theta\right) \nonumber \\
& \propto  f_{\text{G}}\left(w_i;\frac{\theta}{2}+1,\delta\left(x_i;\mu,\beta,\kappa\right)+\frac{\theta}{2}\right) 
\label{eq:posterior_t}.
\end{align} 
This means that $W_{i}|X_i=x_i$ has a gamma distribution with shape $\theta/2+1$ and rate $\delta\left(x_i;\mu,\beta,\kappa\right)+\theta/2$, whose pdf is given in \eqref{eq:posterior_t}; in symbols   
\begin{equation*}
W_{i}|X_i=x_i \sim \mathcal{G}\left(\frac{\theta}{2}+1,\delta\left(x_i;\mu,\beta,\kappa\right)+\frac{\theta}{2}\right).
\label{eq:W_t}
\end{equation*}
The functions $g(W_i)$ arising from \eqref{eq:ALSM complete-data log-likelihood mu, Gamma and Sigma} and \eqref{eq:complete-data log-likelihood t} are $g_1\left(w\right)=w$ and $g_2\left(w\right)=\log\left(w\right)$. 
Thanks to \eqref{eq:posterior_t} we obtain
\begin{align}
\dot{\mbox{E}} \left(W_{i}|X_i=x_i\right) & =   
\frac{\frac{\dot{\theta}}{2}+1}{\delta\left(x_i;\dot{\mu},\dot{\beta},\dot{\kappa}\right) +\frac{\dot{\theta}}{2}} \nonumber \\
&\eqqcolon \dot{w}_{i} \nonumber 
\end{align}
and 
\begin{align}
\dot{\mbox{E}}\left(\log W_{i}|X_i=x_i\right) & =\psi\left(\frac{\dot{\theta}}{2}+1\right) - \log\left[   \delta\left(x_i;\dot{\mu},\dot{\beta},\dot{\kappa}\right) +\frac{\dot{\theta}}{2} \right] 
 \nonumber \\
&\eqqcolon \dot{\log}\:w_{i}. \nonumber
\end{align}

\subsection{M-step}
\label{app:M-step G-AL}

For the G-AL distribution, a closed-form update for $\theta$ does not exist.
The function to maximize is
\begin{equation}
Q_2\left(\theta|\dot{\mu},\dot{\beta},\dot{\kappa},\dot{\theta}\right) = 
\frac{\theta}{2} \left(\sum_{i=1}^n \dot{\log}\: w_i 
-  \sum_{i=1}^n \dot{w}_i \right)
-   \frac{n\theta}{2}  \log \left(\frac{ \theta}{2}\right)
- n \log\left[\Gamma\left(\frac{\theta}{2} \right)\right],
\label{eq:theta t-L loglik}
\end{equation}
whose derivative, with respect to $\theta$, is
\begin{equation}
\frac{\partial Q_2\left(\theta|\dot{\mu},\dot{\beta},\dot{\kappa},\dot{\theta}\right)}
{\partial \theta} =
\frac{1}{2} \left(\sum_{i=1}^n \dot{\log}\: w_i 
-  \sum_{i=1}^n \dot{w}_i \right)
-\frac{n}{2} \left[\psi\left(\frac{\theta}{2} \right) +  \log \left(\frac{\theta}{2} \right) + 1
\right] .
\label{eq:theta t-L loglik derivative}
\end{equation}
The update for $\theta$ can be obtained numerically either by maximizing \eqref{eq:theta t-L loglik} over $\theta$ or by finding the root (over $\theta$) of the equation obtained equating \eqref{eq:theta t-L loglik derivative} to zero.

\section{Method of moments}
\label{app:Method of moments}

In the method of moments (MM) applied to the estimation of the parameters of the ALSM distribution, we relate the (unknown) population moments in \eqref{eq:ALSM mean}--\eqref{eq:ALSM kurt} to their sample counterparts 
\begin{equation*}
\overline{x} = \frac{1}{n} \sum_{i=1}^n x_i,
\quad 
s^2 =  \frac{1}{n-1}  \sum_{i=1}^n \left(x_i - \overline{x}\right)^2,
\end{equation*}
\begin{equation*} 
\widehat{\text{Skew}}(X) = \frac{1}{n s^3} \sum_{i=1}^n 
\left(x_i-\overline{x} \right)^3
,
\quad 
\text{and} \quad 
\widehat{\text{Kurt}}(X) = \frac{1}{n s^4} \sum_{i=1}^n 
\left(x_i-\overline{x} \right)^4.
\end{equation*}

Apart from the TP-AL distribution, where $\btheta$ is bidimensional, for all the other models it is unidimensional.
For these models, with the aim to find the MM estimates of $\mu$, $\beta$, $\kappa$, and $\theta$, we have to solve a system of four equations.
In doing that, it is important to remember that the four equations involve a different number of parameters (refer to Section~\ref{subsec:Moments}).
So, it is convenient to first find the estimates $\widehat{\kappa}$ and $\widehat{\theta}$ by solving the system of the two equations $\text{Skew}(X) = \widehat{\text{Skew}}(X)$ and $\text{Kurt}(X) = \widehat{\text{Kurt}}(X)$.
To search for the roots of the system of the two nonlinear equations, we use the \texttt{nleqslv()} function included in the \textbf{nleqslv} package.  
Moreover, to ensure the existence of the moments we have to impose (if necessary) suitable constraints for $\theta$ (refer to Section~\ref{sec:Examples of AL scale mixtures}).
Then, based on \eqref{eq:ALSM var}, we solve the equation $\text{Var}(X) = s^2$ as a function of $\beta$ with $\kappa$ and $\theta$ replaced by $\widehat{\kappa}$ and $\widehat{\theta}$, respectively; this leads to
\begin{equation}
\widehat{\beta} = 
\sqrt{\frac{s^{2}\widehat{\kappa}^2\left(1+\widehat{\kappa}^2\right)}{  2  \left(1+\widehat{\kappa}^6\right) \widehat{\text{E}}\left(\frac{1}{W^2}\right)- \left(1-\widehat{\kappa}^4\right)\left(1-\widehat{\kappa}^2\right)  \widehat{\text{E}}\left(\frac{1}{W}\right)^2 }},
\label{eq:mm variance}
\end{equation}
where the quantities $\widehat{\text{E}}\left(\frac{1}{W^r}\right)$, with $r=1,2$, are estimates of $ \text{E}\left(\frac{1}{W^r}\right)$ obtained using $\widehat{\theta}$.
Finally, based on \eqref{eq:ALSM mean}, we solve the equation $\text{E}(X) = \overline{x}$ as a function of $\mu$ with $\beta$, $\kappa$ and $\theta$ replaced by $\widehat{\beta}$, $\widehat{\kappa}$ and $\widehat{\theta}$, respectively; this leads to 
\begin{equation}
\widehat{\mu} =  \overline{x} - \widehat{\beta} \left(\frac{1}{\widehat{\kappa}}-\widehat{\kappa}\right) \widehat{\text{E}}\left( \frac{1}{W}  \right) .
\label{eq:mm mean}
\end{equation}

\section{M-Step details}
\label{app:M-Step details}

With the aim to maximize $Q_1\left(\mu,\beta,\kappa|\dot{\mu},\dot{\beta},\dot{\kappa},\dot{\btheta}\right)$ we consider the following two cases: 
\begin{description}
	\item[Case 1,] only $\mu$ is known;
	\item[Case 2,] all parameters are unknown.
\end{description}

\subsection{Case 1}
\label{app:Case 1}

Maximizing $Q_1\left(\mu,\beta,\kappa|\dot{\mu},\dot{\beta},\dot{\kappa},\dot{\btheta}\right)$ is the same as maximizing $n^{-1}\: Q_1\left(\mu,\beta,\kappa|\dot{\mu},\dot{\beta},\dot{\kappa},\dot{\btheta}\right)$ which, when the value of $\mu$ is known, reduces to 
\begin{equation}
Q\left(\beta,\kappa\right) =
\log \kappa -
\log \left(1+\kappa^2\right) -
\log \beta -
\frac{1}{\beta} \left(\kappa, 1/k \right) \bar{\bm{Z}}^{(n)},
\label{eq:Q1_1}
\end{equation} 
where $\bar{\bm{Z}}^{(n)}$ is defined as
\begin{equation}
 \bar{\bm{Z}}^{(n)} = \left(\bar{Z}_1^{(n)}  ,\bar{Z}_2^{(n)} \right)',   
\label{eq:Zn}
\end{equation}
with 
 \begin{equation}
\bar{Z}_1^{(n)}  =   \frac{1}{n}\sum_{i=1}^{n} \dot{w}_i \left(x_i-\mu \right)^+  \quad\text{and} \quad \bar{Z}_2^{(n)} =  \frac{1}{n}\sum_{i=1}^{n} \dot{w}_i \left(x_i-\mu \right)^-. 
\label{eq:Zi}
\end{equation}
The parameter $\mu$ could be: i) $\mu \leq x_{(1)}$, ii) $\mu \geq x_{(n)}$, or iii) $ x_{(1)} < \mu < x_{(n)}$.
If i) holds, then all sample values are greater or equal to $\mu$; this means that 
 \begin{equation*}
\left(x_i-\mu \right)^+ = \left(x_i-\mu \right)  \quad \text{and} \quad   \left(x_i-\mu \right)^- =0 \quad \text{for all $i=1,\ldots,n$} .
\end{equation*}
Thus, 
\begin{equation}
 \bar{\bm{Z}}^{(n)} = \left(\bar{x}^{(n)}_w  , 0 \right)',   
\label{eq:Zni)}
\end{equation}
where 
\begin{equation}
\bar{x}^{(n)}_w = \frac{1}{n}\sum_{i=1}^{n} \dot{w}_i \left(x_i-\mu \right).
\label{eq:xnw)}
\end{equation}
Therefore, \eqref{eq:Q1_1} takes the form
\begin{equation}
Q\left(\beta,\kappa\right) =
\log \kappa -
\log \left(1+\kappa^2\right) -
\log \beta -
\frac{\kappa}{\beta}  \bar{x}^{(n)}_w.
\label{eq:Q1_1i)}
\end{equation} 
Fix $\kappa>0$ and differentiate \eqref{eq:Q1_1i)} with respect to $\beta$ to obtain
\begin{equation}
\frac{\partial Q\left(\beta,\kappa\right)}{\partial \beta} =
-\frac{1}{\beta}
+\frac{\kappa}{\beta^2}  \bar{x}^{(n)}_w .
\label{eq:derivQ1_1i)}
\end{equation} 
The derivative \eqref{eq:derivQ1_1i)} is positive for $\beta<\beta(\kappa)$ and negative for $\beta>\beta(\kappa)$, where
$$
\beta(\kappa) = \kappa \: \bar{x}^{(n)}_w.
$$
As a consequence, for any fixed $\kappa>0$, \eqref{eq:Q1_1i)} is maximized by $\beta(\kappa)$. 
Then, for all $\sigma$ and $\kappa$, we have
\begin{equation}
Q\left(\beta,\kappa\right) \leq Q\left(\beta(\kappa),\kappa\right) = 
- \log \left(1+\kappa^2\right) 
- \log \bar{x}^{(n)}_w - 1,
\label{eq:maxQ1_1i)}
\end{equation} 
where $Q\left(\beta(\kappa),\kappa\right)$ is a function of $\kappa$ only and it is decreasing on $(0,\infty)$, with the least upper bound being equal to 
$$
\lim_{k \rightarrow 0 } Q\left(\beta(\kappa),\kappa\right) = - \log \bar{x}^{(n)}_w - 1 .
$$
Since these values are not admissible, formally the M-step solutions of $\kappa$ and $\beta$ do not exist in this case. 
However, as $\kappa \rightarrow 0^+$ and $\beta(\kappa) \rightarrow 0^+ $, the conditional $\mathcal{AL}\left(\mu,\beta/w,\kappa\right)$ in \eqref{eq:asymmetric Laplace scale mixture} tends to
$\mathcal{SE}_{\left(\mu,\infty\right)}\left(w\right)$.


If we are in the case ii), then all sample values are lower or equal to $\mu$.
This means that
 \begin{equation*}
\left(x_i-\mu \right)^+ = 0  \quad \text{and} \quad   \left(x_i-\mu \right)^- =\left(\mu - x_i \right) \quad \text{for all $i=1,\ldots,n$} .
\end{equation*}
Thus, 
\begin{equation}
 \bar{\bm{Z}}^{(n)} = \left(   0 ,-\bar{x}^{(n)}_w\right)' .  
\label{eq:Znii)}
\end{equation}
Then, \eqref{eq:Q1_1} takes the form
\begin{equation}
Q\left(\beta,\kappa\right) =
\log \kappa -
\log \left(1+\kappa^2\right) -
\log \beta +
\frac{1}{\beta\kappa}  \bar{x}^{(n)}_w .
\label{eq:Q1_1ii)}
\end{equation} 
Fix $\kappa>0$ and differentiate \eqref{eq:Q1_1ii)} with respect to $\beta$ to obtain
\begin{equation}
\frac{\partial Q\left(\beta,\kappa\right)}{\partial \beta} =
-\frac{1}{\beta}
-\frac{1}{\beta^2 \kappa }  \bar{x}^{(n)}_w .
\label{eq:derivQ1_1ii)}
\end{equation} 
The derivative \eqref{eq:derivQ1_1ii)} is positive for $\beta<\beta(\kappa)$ and negative for $\beta>\beta(\kappa)$, where
$$
\beta(\kappa) = -\frac{\bar{x}^{(n)}_w}{ \kappa }   .
$$
Therefore, for any fixed $\kappa>0$, \eqref{eq:Q1_1ii)} is maximized by $\beta(\kappa)$. 
Hence, for all $\sigma$ and $\kappa$, we have
\begin{equation}
Q\left(\beta,\kappa\right) \leq Q\left(\beta(\kappa),\kappa\right) = 
- \log \left(\frac{\kappa^2}{1+\kappa^2}\right) 
- \log \left(-\bar{x}^{(n)}_w\right) - 1,
\label{eq:maxQ1_1ii)}
\end{equation} 
where $Q\left(\beta(\kappa),\kappa\right)$ is a function of $\kappa$ only and it is increasing on $(0,\infty)$, with the limit being equal to 
$$
\lim_{k \rightarrow \infty } Q\left(\beta(\kappa),\kappa\right) =- \log \left(-\bar{x}^{(n)}_w\right) - 1.
$$
Since these values are not admissible, formally the M-step solutions of $\kappa$ and $\beta$ do not exist in this case. 
However, as $\kappa \rightarrow \infty$ and $\beta(\kappa) \rightarrow 0^+ $ we have that the conditional $\mathcal{AL}\left(\mu,\beta/w,\kappa\right)$ in \eqref{eq:asymmetric Laplace scale mixture} tends to a 
``reversed'' shifted exponential distribution with pdf
\begin{equation}
f_{\text{SE}}\left(x;w\right)=  
\begin{cases}
    0  & \text{for $x\geq \mu$}, \\
    w e^{-w\left(\mu-x\right)} & \text{for $x < \mu$}.
  \end{cases}
\label{eq:reverted exponential}
\end{equation}


Under case iii), that is when $ x_{(1)} < \mu < x_{(n)}$, we can find the solutions for $\beta$ and $\kappa$ through the following equations of derivatives of $Q\left(\beta,\kappa\right)$:
\begin{equation} 
\begin{array}{ll}
    \frac{\partial Q\left(\beta,\kappa\right)}{\partial \beta}  = -\frac{1}{\beta}
+\frac{1}{\beta^2} \left(\kappa, 1/k \right) \bar{\bm{Z}}^{(n)}=0 \\
    \frac{\partial Q\left(\beta,\kappa\right)}{\partial \kappa} = 
		 \frac{1}{\kappa}
-\frac{2\kappa}{1+\kappa^2} 

-\frac{1}{\beta} \left(1, -1/k^2 \right) \bar{\bm{Z}}^{(n)}=0
		.
  \end{array}
\label{eq:derivatives}
\end{equation}
These equations are equivalent to
\begin{equation} 
\begin{array}{ll}
\left(\kappa, 1/k \right) \bar{\bm{Z}}^{(n)} =\beta \\
\left(-\kappa^2, 1/k^2 \right) \bar{\bm{Z}}^{(n)} =0
		,
  \end{array}
\label{eq:derivatives2}
\end{equation}
which lead to the following unique and explicit solutions for $\beta$ and $\kappa$:
\begin{equation} 
\widehat{\kappa} =   \sqrt[4]{\frac{\bar{Z}_2^{(n)}}{\bar{Z}_1^{(n)}} }  \quad \text{and} \quad 
\widehat{\beta}  =   \left(\sqrt[4]{\frac{\bar{Z}_2^{(n)}}{\bar{Z}_1^{(n)}} }, \sqrt[4]{\frac{\bar{Z}_1^{(n)}}{\bar{Z}_2^{(n)}} } \right) \bar{\bm{Z}}^{(n)}.
\label{eq:solutions}
\end{equation}

\subsection{Case 2}
\label{app:Case 2}

The function to maximize, when all the parameters are unknown, is
\begin{equation}
Q\left(\mu,\beta,\kappa\right) =
-\log \beta +
\log \left(\frac{\kappa}{1+\kappa^2}\right) -
\frac{1}{\beta} \left[ \kappa a(\mu) + \frac{1}{\kappa}b(\mu)\right],
\label{eq:Q1_12}
\end{equation} 
where
\begin{equation}
a(\mu) = \frac{1}{n}\sum_{i=1}^n \dot{w}_i \left(x_i-\mu \right)^+ \quad \text{and} \quad b(\mu) = \frac{1}{n} \sum_{i=1}^n \dot{w}_i \left(x_i-\mu \right)^-. 
\label{eq:aeb}
\end{equation}
We proceed by first fixing the value of $\mu$ and then applying the same results obtained under the Case 1 in \ref{app:Case 1}.

If $\mu\leq x_{(1)}$, thanks to \eqref{eq:maxQ1_1i)}, we conclude that for any $\beta>0$ and $\kappa>0$ 
$$
Q\left(\mu,\beta,\kappa\right)  \leq  - \log \left(1+\kappa^2\right) 
- \log \bar{x}^{(n)}_w - 1.
$$
Similarly, when $\mu\geq x_{(n)}$, we can use \eqref{eq:maxQ1_1ii)} and conclude that
$$
Q\left(\mu,\beta,\kappa\right)  \leq  - \log \left(\frac{\kappa^2}{1+\kappa^2}\right) 
- \log \left(-\bar{x}^{(n)}_w\right) - 1.
$$
When $ x_{(1)} < \mu < x_{(n)} $, then we use the result in iii) of \ref{app:Case 1}.
In particular we have
\begin{equation}
Q\left(\mu,\beta,\kappa\right)  \leq  Q\left(\mu,\ddot{\beta},\ddot{\kappa}\right),
\label{eq:bound}
\end{equation}
where the quantities $\ddot{\beta}$ and $\ddot{\kappa}$ are obtained as in \eqref{eq:solutions}.
Substituting these values into the right-hand side of \eqref{eq:bound} we obtain 
$$
Q\left(\mu,\beta,\kappa\right)  \leq m\left(\mu\right),
$$
where
 \begin{equation}
m(\mu) =  - 2 \log \left(\sqrt{a(\mu)} + \sqrt{b(\mu)} \right) - \sqrt{a(\mu)}   \sqrt{b(\mu)}.
\label{eq:functionm}
\end{equation}
Maximizing \eqref{eq:functionm} is equivalent to minimize the function 
 \begin{equation}
g(\mu) =   2 \log \left(\sqrt{a(\mu)} + \sqrt{b(\mu)} \right) + \sqrt{a(\mu)}   \sqrt{b(\mu)}.
\label{eq:functiong}
\end{equation}
It turns out that the minimum of $g\left(\cdot\right)$, on the set 
$$
A=\left\{\mu : x_{(1)} \leq \mu \leq  x_{(n)} \right\},
$$
is given by one on the values 
$$
\text{$g\left(x_{(j)}\right)$, $j=1,\ldots,n$.}
$$
This follows from the fact that $g\left(\cdot\right)$ is continuous on the closed interval $\left[x_{(1)}, x_{(n)}\right]$ and concave down on each of the subintervals $\left(x_{(j-1)},  x_{(j)}\right)$, $j=2,\ldots,n$ \citep[see Lemma 3.5.2 in][]{kotz2001laplace}.


\section{ECM algorithm for the two-point asymmetric Laplace distribution}
\label{app:ECM algorithm for the two-point asymmetric Laplace distribution}

To have closed-form updates for the parameters $\mu$, $\beta$, $\kappa$, $\theta_1$ and $\theta_2$ of the TP-AL distribution, at each iteration of the fitting algorithm, we make two changes at the EM algorithm described in Section~\ref{sec:Maximum likelihood estimation}.
The first one is related to the use of the expectation-conditional maximization (ECM) algorithm \citep{Meng:Rubin:Maxi:1993}.
The second change is related to the use of the missing variable 
$$
V=\frac{W-1/\theta_2}{1-1/\theta_2} = \begin{cases}
      1 & \text{with probability $\theta_1$}\\
      0 & \text{with probability $1-\theta_1$,}\\
 	\end{cases}
$$
which is a linear transformation of the missing variable $W$ in \eqref{eq:W Bernoulli}.
With the latter change, the complete-data are given by $\left(x_1,v_1\right),\ldots,\left(x_i,v_i\right),\ldots,\left(x_n,v_n\right)$ and the complete-data likelihood can be written as
\begin{equation*}
L_c\left(\mu,\beta,\theta_1,\theta_2\right)
=
\prod_{i=1}^n\left[\theta_1 f_{\text{AL}}\left(x_i;\mu,\beta,\kappa\right)\right]^{v_i}\left[\left(1-\theta_1\right) f_{\text{AL}}\left(x_i;\mu,\theta_2\beta,\kappa\right)\right]^{1-v_i}.
\end{equation*}
Simple algebra yields the following complete-data log-likelihood
\begin{equation}
l_c\left(\mu,\beta,\theta_1,\theta_2\right)=l_{1c}\left(\theta_1\right)+l_{2c}\left(\mu,\beta,\kappa,\theta_2\right),
\label{eq:complete-data log-likelihood}
\end{equation}
where 
\begin{equation}
l_{1c}\left(\theta_1\right)=\sum_{i=1}^{n}\left[v_i\ln \theta_1+\left(1-v_i\right)\ln \left(1-\theta_1\right)\right]
\label{eq:l1c}
\end{equation}
and
\begin{equation}
l_{2c}\left(\mu,\beta,\theta_2\right)=\sum_{i=1}^{n}\left[v_i\ln f_{\text{AL}}\left(x_i;\mu,\beta,\kappa\right)+\left(1-v_i\right)\ln f_{\text{AL}}\left(x_i;\mu,\theta_2\beta,\kappa\right)\right].
\label{eq:l2c}
\end{equation}

Working on the complete-data log-likelihood in \eqref{eq:complete-data log-likelihood} as for the EM, the ECM algorithm iterates between three steps, one E-step and two CM-steps, until convergence.
The two CM-steps are obtained by partitioning $\left\{\mu,\beta,\kappa,\theta_1,\theta_2\right\}$ in the two subsets $\left\{\mu,\beta,\kappa,\theta_1\right\}$ and $\left\{\theta_2\right\}$.
These steps, for the generic iteration of the algorithm, are detailed below. 



\paragraph{E-step}

In the E-step we update $v_i$, $i=1,\ldots,n$, as
\begin{align}
\dot{\mbox{E}}\left(V_{i}|X_i=x_i\right) & =   
\frac{
\dot{\theta}_1 f_{\text{AL}}\left(x_i;\dot{\mu},\dot{\beta},\dot{\kappa}\right) }
{
\dot{\theta}_1 f_{\text{AL}}\left(x_i;\dot{\mu},\dot{\beta},\dot{\kappa}\right) + \left(1-\dot{\theta}_1\right) f_{\text{AL}}\left(x_i;\dot{\mu},\dot{\theta_2}\dot{\beta},\dot{\kappa}\right)
} \nonumber \\
&\eqqcolon \dot{v}_{i}, 
\label{eq:wi expectation_contaminated}
\end{align}
which is the expected \textit{a posteriori} probability for $x_i$ to come from the reference asymmetric Laplace distribution.
By substituting $v_i$ with $\dot{v}_{i}$ in \eqref{eq:complete-data log-likelihood}, and based on \eqref{eq:l1c} and \eqref{eq:l2c}, we obtain the expected complete-data log-likelihood $Q\left(\mu,\beta,\kappa,\theta_1,\theta_2\left|\right.\dot{\mu},\dot{\beta},\dot{\kappa},\dot{\theta}_1,\dot{\theta}_2\right)=Q_1\left(\theta_1\left|\right.\dot{\mu},\dot{\beta},\dot{\kappa},\dot{\theta}_1,\dot{\theta}_2\right)+Q_2\left(\mu,\beta,\kappa,\theta_2\left|\right.\dot{\mu},\dot{\beta},\dot{\kappa},\dot{\theta}_1,\dot{\theta}_2\right)$.


\paragraph{CM-step 1}

The first CM-step requires the calculation of $\left\{\ddot{\mu},\ddot{\beta},\ddot{\kappa},\ddot{\theta}_1\right\}$ as the value of $\left\{\mu,\beta,\kappa,\theta_1\right\}$ that maximizes $Q\left(\mu,\beta,\kappa,\theta_1,\dot{\theta}_2\left|\right.\dot{\mu},\dot{\beta},\dot{\kappa},\dot{\theta}_1,\dot{\theta}_2\right)$.
The update for $\theta_1$ is calculated independently by maximizing $Q_1\left(\theta_1\left|\right.\dot{\mu},\dot{\beta},\dot{\kappa},\dot{\theta}_1,\dot{\theta}_2\right)$
with respect to $\theta_1$, subject to the constraint on this parameter.
Simple algebra yields
\begin{displaymath}
\ddot{\theta}_1=
\frac{1}{n}\sum_{i=1}^n\dot{v}_i.
\end{displaymath}
The updates of $\mu$, $\beta$, and $\kappa$ are obtained by the maximization of $Q_2\left(\mu,\beta,\kappa,\dot{\theta}_2\left|\right.\dot{\mu},\dot{\beta},\dot{\kappa},\dot{\theta}_1,\dot{\theta}_2\right)$.
It is straightforward to realize that these estimates are analogous to those given in Section~\ref{subsec:M-step} with the only difference that 
\begin{equation*}
a(\mu) = \sum_{i=1}^n \left(\dot{v}_i + \frac{1-\dot{v}_i}{\dot{\theta}_2}\right) \left(x_i-\mu \right)^+ \quad \text{and} \quad b(\mu) = \sum_{i=1}^n \left(\dot{v}_i + \frac{1-\dot{v}_i}{\dot{\theta}_2}\right) \left(x_i-\mu \right)^-, 
\label{eq:a e b TP-AL}
\end{equation*}  

%


\paragraph{CM-step 2}

In the second CM-step, at the same iteration, $\theta_2$ is chosen to maximize $Q_2\left(\ddot{\mu},\ddot{\beta},\ddot{\kappa},\theta_2\left|\right.\dot{\mu},\dot{\beta},\dot{\kappa},\dot{\theta}_1,\dot{\theta}_2\right)$.
Simple algebra yields the closed-form update
\begin{equation*}
\ddot{\theta}_2 = \frac{\displaystyle \sum_{i=1}^{n}\left(1-\dot{v}_i\right)\left[\kappa\left(x_i-\ddot{\mu} \right)^+ + \frac{1}{\kappa}\left(x_i-\ddot{\mu} \right)^-\right] }{\displaystyle \ddot{\beta} \sum_{i=1}^{n}\left(1-\dot{v}_i\right)} .
\label{eq:vartheta update bern}
\end{equation*}

\section{ECME algorithm for the uniform asymmetric Laplace distribution}
\label{app:ECME algorithm for the uniform asymmetric Laplace distribution}

To fit the U-AL distribution we use the ECME algorithm.
In our case, it iterates between three steps, one E-step and two CM-steps, until convergence. 
The two CM-steps arise from the partition of $\left\{\mu,\beta,\kappa,\theta\right\}$ as $\left\{\mu,\beta,\kappa\right\}$ and $\left\{\theta\right\}$.
From an operational point of view, the only difference with respect to the EM algorithm is the second CM-step, where we directly maximize the (observed-data) log-likelihood function instead of the complete-data one. 
We detail the three steps, for the generic iteration of the algorithm, below.

\paragraph{E-step}

To calculate the expectations involved in the E-step (refer to Section~\ref{subsec:E-step}), we observe that
\begin{align}
f\left(w_{i}|x_i;\mu,\beta,\kappa,\theta\right) 
& \propto  f\left(w_{i},x_i;\mu,\beta,\kappa,\theta\right) \nonumber \\
& \propto \frac{1}{\eta \left(x_i,\mu,\beta,\kappa,\theta\right)}  
f_{\text{G}}\left[w_i;2,\delta\left(x_i;\mu,\beta,\kappa\right) \right] ,  
\label{eq:posterior_uniform} 
\end{align} 
where 
$$
\eta \left(x_i;\mu,\beta,\kappa,\theta\right)= \Gamma \left[2, (1-\theta)\delta\left(x_i;\mu,\beta,\kappa\right) \right]  -\Gamma\left[2,\delta\left(x_i;\mu,\beta,\kappa\right) \right].
$$
This means that $W_{i}|X_i=x_i$ has a doubly-truncated gamma distribution (see, e.g., \citealp{coffey2000properties}), on the interval $\left(1-\theta,1\right)$, with shape $2$ and rate $\delta\left(x_i;\mu,\beta,\kappa\right)$, whose pdf is given in \eqref{eq:posterior_uniform}; in symbols 
\begin{equation}
W_{i}|X_i=x_i \sim \mathcal{DTG}_{\left(1-\theta,1\right)}\left(2,\delta\left(x_i;\mu,\beta,\kappa\right)\right).
\label{eq:Doubly Truncated Gamma TIN}
\end{equation}
The function $g(W_i)$ arising from \eqref{eq:ALSM complete-data log-likelihood mu, Gamma and Sigma} is $g_1\left(w\right)=w$.
Thanks to \eqref{eq:Doubly Truncated Gamma TIN} we obtain
\begin{align}
\dot{\mbox{E}}\left(W_{i}|X_i=x_i\right) & =
\frac{1}{\delta\left(x_i;\dot{\mu},\dot{\beta},\dot{\kappa}\right)}
\frac{\Gamma\left[3, (1-\dot{\theta}) \delta\left(x_i;\dot{\mu},\dot{\beta},\dot{\kappa}\right) \right] -\Gamma\left[3, \delta\left(x_i;\dot{\mu},\dot{\beta},\dot{\kappa}\right) \right] }{\Gamma\left[2, (1-\dot{\theta}) \delta\left(x_i;\dot{\mu},\dot{\beta},\dot{\kappa}\right) \right] -\Gamma\left[2, \delta\left(x_i;\dot{\mu},\dot{\beta},\dot{\kappa}\right) \right]}   ,  \nonumber \\
&\eqqcolon \dot{w}_{i} . \nonumber
\end{align}

\paragraph{CM-step 1} 

The first CM-step proceeds as the M-step of the EM algorithm, but limited to the update of $\mu$, $\beta$, and $\kappa$ by maximizing $Q_1\left(\mu,\beta,\kappa|\dot{\mu},\dot{\beta},\dot{\kappa},\dot{\theta}\right)$; refer to Section~\ref{subsec:M-step}. 

\paragraph{CM-step 2} 

The update of $\theta$ is obtained at the second CM-step by maximizing $l\left(\mu,\beta,\kappa,\theta|\ddot\mu,\ddot\beta,\ddot\kappa\right)$ over $\theta$, namely the observed-data log-likelihood function in \eqref{eq:compound loglik} with $\mu$, $\beta$, and $\kappa$ fixed at $\ddot\mu$, $\ddot\beta$, and $\ddot\kappa$, respectively.

\end{document}